\newcommand{\orig}{\rho}
\newcommand{\weight}{v}
\newcommand{\cond}{\mathrm{cond}}
\newcommand{\conflictID}{\lab}
\newcommand{\unlock}{\mathrm{unlock}}
\newcommand{\constraint}{\mathrm{constr}}
\newcommand{\state}{\mathrm{state}}
\newcommand{\branch}{\mathrm{branch}}
\newcommand{\true}{1}
\newcommand{\false}{0}
\newcommand{\inputs}{\mathrm{in}}
\newcommand{\outputs}{\mathrm{out}}
\newcommand{\Invar}{\mathbf{Invar}}
\newcommand{\AW}{\mathbf{w}}
\newcommand{\lab}{\mathrm{label}}
\newcommand{\C}{\mathcal{C}}
\newcommand{\labspace}{\mathcal{Y}}
\newcommand{\branchset}{\mathcal{B}}
\newcommand{\conflictset}{\mathcal{C}}
\newcommand{\ledger}{\mathcal{L}}
\newcommand{\LedgerDAG}{D_{\ledger}}
\newcommand{\ConflictDAG}{D_{\conflictset}}
\newcommand{\ConflictGraph}{G_{\conflictset}}
\newcommand{\BranchDAG}{D_{\branchset}}
\let\phi=\varphi
\newcommand{\cone}[3]{\mathrm{cone}_{#2}^{(#1)}\left(#3\right)}
\newcommand{\parent}[2]{\mathrm{par}_{#1}\left(#2\right)}
\newcommand{\child}[2]{\mathrm{child}_{#1}\left(#2\right)}
\newcommand{\maximal}[2]{\max_{#1}\left(#2\right)}
\newcommand{\minimal}[2]{\min_{#1}\left(#2\right)}
\newcommand{\conflictsInCone}[3]{\mathrm{conf}^{(#1)}_{#2}\left(#3\right)}
\newcommand{\hash}{\mathop{\mathrm{hash}}}
\DeclareMathOperator*{\argmax}{arg\,max}
\theoremstyle{plain}
\newtheorem{theorem}{Theorem}
\newtheorem{lemma}{Lemma}
\newtheorem{proposition}{Proposition}
\newtheorem{remark}{Remark}
\theoremstyle{definition}
\newtheorem{definition}{Definition}
\newtheorem{example}{Example}
\newtheorem{ass}{Assumption}
\numberwithin{definition}{section}
\numberwithin{lemma}{section}
\numberwithin{ass}{section}
\numberwithin{remark}{section}
\numberwithin{example}{section}
\numberwithin{proposition}{section}
\numberwithin{theorem}{section}
\newif\ifcomment
\begin{document}

\title{Reality-based UTXO Ledger}

\author[1]{Sebastian M\"uller}
\affil[1]{\footnotesize{Aix Marseille Universit\'e, CNRS, Centrale Marseille, I2M - UMR 7373, 13453 Marseille, France, sebastian.muller@univ-amu.fr}}

\author[2]{Andreas Penzkofer}
\author[2]{Nikita Polyanskii}
\author[2]{Jonas Theis}
\author[2]{William Sanders}
\author[2]{Hans Moog}
\affil[2]{IOTA Foundation,
10405 Berlin, Germany,research@iota.org}

\maketitle

\begin{abstract}
The Unspent Transaction Output (UTXO) model is commonly used in the  field of Distributed Ledger Technology (DLT) to transfer value between participants. One of its advantages is that it allows parallel processing of transactions, as independent transactions can be added in any order. This property of order invariance and parallelisability has potential benefits in terms of scalability. However, since the UTXO Ledger is an append-only data structure, this advantage is compromised through the presence of conflicting transactions. We propose an extended UTXO Ledger model that optimistically updates the ledger and keeps track of the dependencies of the possible conflicts. In the presence of a conflict resolution mechanism, we propose a method to reduce the extended ledger back to a consistent UTXO Ledger. 
\end{abstract}

\section{Introduction}

The Unspent Transaction Output (UTXO) model is a design common to many cryptocurrencies, including Bitcoin \cite{nakamoto2008bitcoin} and many of its derivatives, Cardano \cite{cardanoEUTXO}, 
and IOTA \cite{2020coordicide}.  In the UTXO model, transactions specify the outputs of previous transactions as inputs and create new outputs spending the inputs. Thus, a transaction consists of a list of inputs and a list of outputs. 
The outputs are associated to users` addresses by certain unlock conditions; in general, an account ``possesses'' a  private key and addresses that allow to spend and receive UTXOs. Accounts then track their balance by maintaining a list of the received (unspent) outputs.
This model differs from the account-based model used in most smart-contract-based cryptocurrencies, for example, Ethereum \cite{buterin2013ethereum}. The latter represents assets as balances within accounts, and transactions describe how these balances change, see Figure \ref{fig:AccountModel}. A comparison of blockchain data-models can be for example found in \cite{DataTypeInBlockchain} and we refer to \cite{vademecum} for a detailed discussion on these two models.

\begin{figure}
    \centering
    \includegraphics[width=0.48\textwidth]{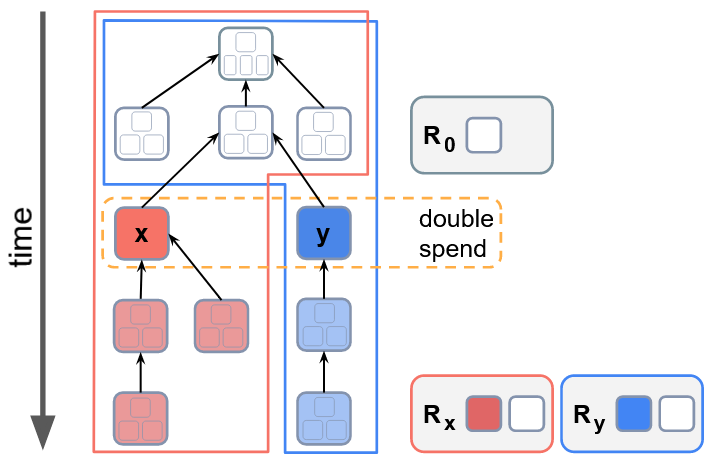}
    \caption{Reality-based Ledger: the double-spending transactions $x$ and $y$ create two realities, $R_x$ and $R_y$. Each reality yields a valid ledger state.}
    \label{fig:realityBasedLedger}
\end{figure}

A conceptual difference is that the account-based model updates user balances globally, while the UTXO model only records transaction receipts. This construction allows transactions to be processed in parallel, leading to performance benefits and possible scalability advantages. A notable difference is that account-based models need a total ordering to resolve conflicts, while in the UTXO model, a total ordering is not necessary. There is, however, a limiting effect on scalability in the presence of conflicting transactions. Here, conflicts are transactions that spent the same output. As the UTXO Ledger is an append-only structure, the conflicts must be sorted out before changes are made. This is currently done with the help of consensus protocols that are based on a (unique) leader or other mechanisms to create a total ordering of the transactions.

We propose a solution to this ``bottleneck'' problem that allows leaderless conflict resolution that does not require total ordering. To this end we define  an augmented data structure, the \emph{Reality-based Ledger}, which processes transactions optimistically and manages possible conflicts until they are resolved. 
As a consequence, we can update the ledger on the arrival of the transactions. This contrasts with a blockchain system where transactions can only be processed once they are included in a batch or block. In further work,~\cite{OTV}, this feature is used to design a stream process-oriented DLT.

\subsection{Results}
A UTXO Ledger induces a partial order on its transactions and thus can be seen as a partially ordered set (poset). Posets are in a one-to-one correspondence\footnote{The correspondence is in fact not one-to-one in the strict sense, but is in a weaker sense involving equivalence relations, see Section~\ref{sec: graph-theoretical notions}.} to directed acyclic graphs (DAGs). Typically, conflicts are excluded from an accepted state in such a ledger, however, this requires that participants reach a consensus on which transactions to add to the ledger. This selection is usually performed by choosing a ``leader'' among the participants, and only this leader can add transactions to the ledger. These transactions are added in batches that are called \emph{blocks}. This leader is a ``centralized'' bottleneck that hinders scalability and annihilates most of the scalability advantages of the UTXO model. Some of these limitations remain even for leaderless consensus protocols that are based on total ordering, as the total ordering requires a ``complete'' or ``non-sharded'' view of the set of transactions.

We define the \textit{Reality-based Ledger} as an augmented data structure of the conflict-free  UTXO Ledger, which may contain conflicting transactions. This Reality-based Ledger can contain many different \emph{realities}, where each of the realities corresponds to a conflict-free  UTXO Ledger, see Figure~\ref{fig:realityBasedLedger} for an illustration. We show that this data structure does not depend on the order of the incoming transactions and forms therefore an eventual consistent distributed data structure. We propose an additional data structure, called the Branch DAG, to manage the dependencies of the different realities. We provide several algorithms for the efficient management and update of the above graph data structures. 
Notably, these algorithms update the data structures on the arrival of new data, thus minimizing the delay between the issuance of transactions and addition to the distributed ledger and enabling a higher degree of stream processing in the underlying DLT.

To obtain the most recent valid ledger state, e.g., to determine the current balance for an account owner, we provide algorithms that extract a reality, based on a weight function that is imposed on the branches. 
Finally, we prove that the augmented data structure can be pruned back into a conflict-free UTXO Ledger in the presence of a conflict resolution mechanism.

\subsection{Related Work}

The benefits and drawbacks of the UTXO model have been discussed extensively, and several extensions of the UTXO model have been proposed. We refer to \cite{vademecum} for an excellent overview. However, we want to note that most of the variants concern either the extension of the UTXO model itself, e.g.,~\cite{cardanoEUTXO}, are proposed to increase their applicability towards smart contracts, or are designed to combine the UTXO and the account-based  model. 
In this paper, we address an extension of the UTXO model to track dependencies of the outputs to allow parallel processing of the transactions.

To allow such a parallel processing of transactions the first step is to move from the traditional blockchain structure of the ledger to a more general DAG-based structure. There are several approaches to improve performance by circumventing the linear chain structure.  Most of them have in common that blocks can reference not only one previous block but also more than one, changing the underlying data structure from a chain to a directed acyclic graph (DAG). This natural idea of using DAGs has become quite popular in the last decade and led to higher throughput and in some cases to similar confirmation latency,
e.g. \cite{Sompolinsky2013AcceleratingBT, NXT,DagCoin, lewenberg2015inclusive,popov2015, SPECTRE, PHANTOM, Prism, Gagol2019, Li2020GHASTBC, OHIE, Narwhal22, Bullshark, CordialMiners, DAGKNIGHT} and the survey paper \cite{Wang2020SoKDI}.

However, parallel writing is only a necessary requirement for efficient parallel processing of transactions but does not yet address the question of  conflict resolution and eventual execution of the transactions. The various DAG protocols, indeed, differ significantly in how conflicts are resolved and transactions are executed. Let us consider these two aspects separately.

\subsubsection{Conflict resolution}
The various DAG protocols differ significantly in the form of how consensus is achieved. In particular, the utilisation of a  DAG data structure does enable but does not require circumventing a total ordering of transactions. For example, in \cite{popov2015} nodes follow and attach to the heaviest DAG, while in most other proposed protocols, e.g. \cite{PHANTOM,lewenberg2015inclusive,Gagol2019, Li2020GHASTBC, OHIE, Prism, DAGRider, Bullshark} consensus is still achieved by constructing a total ordering over the set of transactions. A popular approach that uses the ordering of transactions to achieve consensus is via atomic broadcast protocols. Such protocols allow the network participants to reach a consensus on a (total) ordering of the received transactions, and this linearised output forms then the ledger, e.g., see \cite{HoneyBadger, BEAT}. Improvements of these broadcast protocols are proposed, for example, in  Hashgraph \cite{Hashgraph} and Aleph \cite{Gagol2019} and more recently in Narwhal \cite{Narwhal22} based on the encoding of the ``communication history'' in the form of a DAG. 
The protocols in question serve to alleviate the data dissemination bottleneck of the traditional Nakamoto consensus by decoupling the data dissemination process from the consensus determination procedure. Notable advancements have been made in the realm of consensus determination on top of DAG-based memory pools, as demonstrated in the works of DAG Rider \cite{DAGRider} and Bullshark \cite{Bullshark}. A more comprehensive and abstract examination of these protocols can be found in \cite{BlockDAG}, which provides a description from a broader perspective. There are common points with the approach of IOTA 2.0~\cite{OTV}. A DAG structure serves as a ``testimony'' of the communication among the nodes, and new blocks are used for (implicit) voting on previous blocks. This block structure ``maps down'' to a dependency structure of the contained transactions and UTXOs and the dependencies of the UTXO and the conflict resolution are covered by our model.

{We also want to mention a prominent approach proposed by Prism \cite{Prism}. This methodology outlines the explicit differentiation of the functions of blocks into three distinct categories: proposer blocks, transaction blocks, and voter blocks. The separation of transaction blocks enables participants to initiate transactions, eliminating the requirement for a memory pool. The three categories of blocks assemble into a structured Directed Acyclic Graph (DAG) that facilitates an efficient means of voting on ``leader blocks,'' resulting in consensus through total ordering. Again, the dependency structure and conflict resolution, in the setting of UTXO model, can be described with the model in our paper. }

Total-ordering is, however, not necessary to achieve consensus. This approach is pursued by \cite{OTV}. Similar to the DAG-based memory pools a causal dependency of the blocks is used to determine confirmation of the contained transactions. This approach, however, requires an active tracking of the dependencies and a certain ``confirmation weight''. Our results on tracking of conflicts and their resolution is an important ingredient for the protocol proposed in \cite{OTV}, but the concepts are natural for an optimistic execution and \textit{a posteriori} conflict resolution.

\subsubsection{Execution and view change} 
Parallel booking and execution of transactions is essential for high throughput and scalability. There are three existing proposals for achieving this, each of which is contingent upon the method of consensus attainment. In the case of total-ordering, parallel execution of transactions following consensus can be accomplished. For instance, the implementation of Prism \cite{Prism} in \cite{Prism10000}  employed a scoreboard technique to parallelize the execution of ordered UTXO transactions. It was demonstrated in \cite{prismSC} that Prism can support smart contract platforms, with the execution of smart contracts rather than consensus serving as the bottleneck in their implementation. 

Another possibility building on a total-ordering is the optimistic booking of the transactions and a \textit{a posteriori} conflict resolution via total ordering. For instance, this is the direction of a current Leios proposal in Cardano \cite{OuroLeios} and in Sui \cite{suiWP}.

{Our contribution is not limited to merely parallel booking but encompasses the comprehensive mechanism for processing new transactions and instantaneously updating the ledger state as perceived by the majority of network participants. Our approach actively constructs a DAG, referred to as the Ledger DAG, which encodes the dependencies between transactions. This DAG is generated prior to consensus and facilitates the tracking of dependencies between pending or conflicting transactions. This natural idea has been explored in various academic papers. It does rely on the construction of a dependency graph that takes the form of a DAG and encodes the causal dependencies of the transactions, e.g. \cite{ConcurrencyToSC, OptSmart}, or on a pre-ordering, \cite{BlockSTM}. These works focus on the execution of smart contracts in an account-based model, i.e. with no local states, while our work covers the situations of local state transactions as in the UTXO-model.
Finally, let us note that our approach does not rely on total-ordering; however, it still supports it, thus eliminating the dependence on the linear structure of total-ordering for view changes and transaction confirmation.}

In conclusion, it is important to highlight that the preceding discussion does not aim to deliver a comprehensive depiction of the current state of all Directed Acyclic Graph (DAG)-based protocols and their multifaceted designs. The intention is not to exhaustively cover every aspect of these protocols, but rather to offer an overview of the  aspects that relate closely to our work. We encourage the readers to delve into the referenced works for a more detailed understanding and to explore further literature for additional perspectives and developments that may not be included in this section.

\subsubsection{Beyond DLTs}

From  a general point of view, our approach is natural and was already been successfully applied in various settings. It relies on constructing  systems that allow working effectively and efficiently with inconsistent information and where ``context switching'' is a low-cost operation; e.g.~\cite{TMS}. We also want to note that this approach can be found in the field of belief revision. 

Finally, we want to draw some connections with the field of replicated invariant data types. The UTXO model, if distributed on multiple nodes, falls into the class of a replicated data types as the nodes can make concurrently changes to the data. Its append-only design and the invariant constraints render the operation of adding a transaction not commutable. This means that the order of the incoming transaction matters for the construction of the ledger. However, if there is a deterministic rule on how to process conflicts or a ``leader'' that pre-filters the transactions, this data structure can be turned into a conflict-free replicated data type (CRDT), e.g.,~see~\cite{CRDT}. The proposed Reality-based Ledger is already a CRDT without the need of a consensus mechanism since all operations on the involved data types are commutative.

In the context of replicated data structure, some works increase the performances in ``augmenting'' the data structure and distinguishing between different notions of consistency, e.g.,~\cite{redblue, automatingChoiceOfConsitency}. From a conceptual idea, this resembles our approach. However, the concrete proposals are distinct due to different assumptions on the communication model and field of applications. To our knowledge, our method is the first that allows a leaderless ``ex-post'' conflict resolution on these types of models. 

\subsection{Structure of the Paper}

The document is structured as follows. In Section \ref{sec: Standard UTXO} we provide an introduction to a standard conflict-free UTXO Ledger. In Section \ref{sec: graph-theoretical notions} we give an overview of some of the graph theoretical preliminaries used in this paper. In Section \ref{sec: Reality-based Ledger} we introduce the concept of a Reality-based Ledger and additional data structures that provide the necessary tools to manage this novel type of ledger. In Section \ref{sec: operations on the ledger} some of the core operations to maintain and access the ledger are presented.

We employ several graph structures to efficiently manage the Reality-based Ledger. Table~\ref{tab: graph overview} gives an overview of the utilised graphs. 

\begin{table}[b]
    \centering
    \begin{tabular}{c|c|c}
         Graph &  Vertices & Edges   \\   \hline\hline
         UTXO DAG  & in-, outputs, transaction IDs & spending relations \\
          Ledger DAG & transactions  & spending relations \\
         Conflict DAG & conflicts  &  conflict dependencies \\
         Conflict Graph & conflicts  &  conflict relations  \\
        Branch DAG  & branches  &  branch dependencies 
    \end{tabular}
    \caption{Overview of the graphs used in this paper.}
    \label{tab: graph overview}
\end{table}

\begin{figure}
    \centering
    \includegraphics[width=0.3\textwidth]{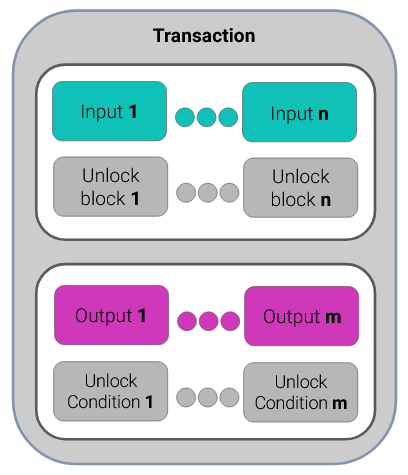}
    \caption{Simplified transaction layout. The fund owner signs the transaction in the unlock blocks. Inputs are consumed and new outputs are created. The new outputs can be spent once the unlock conditions are satisfied.}
    \label{fig:transactionLayout}
\end{figure}

\begin{figure*}[t]
    \centering
    \includegraphics[width=0.8\textwidth]{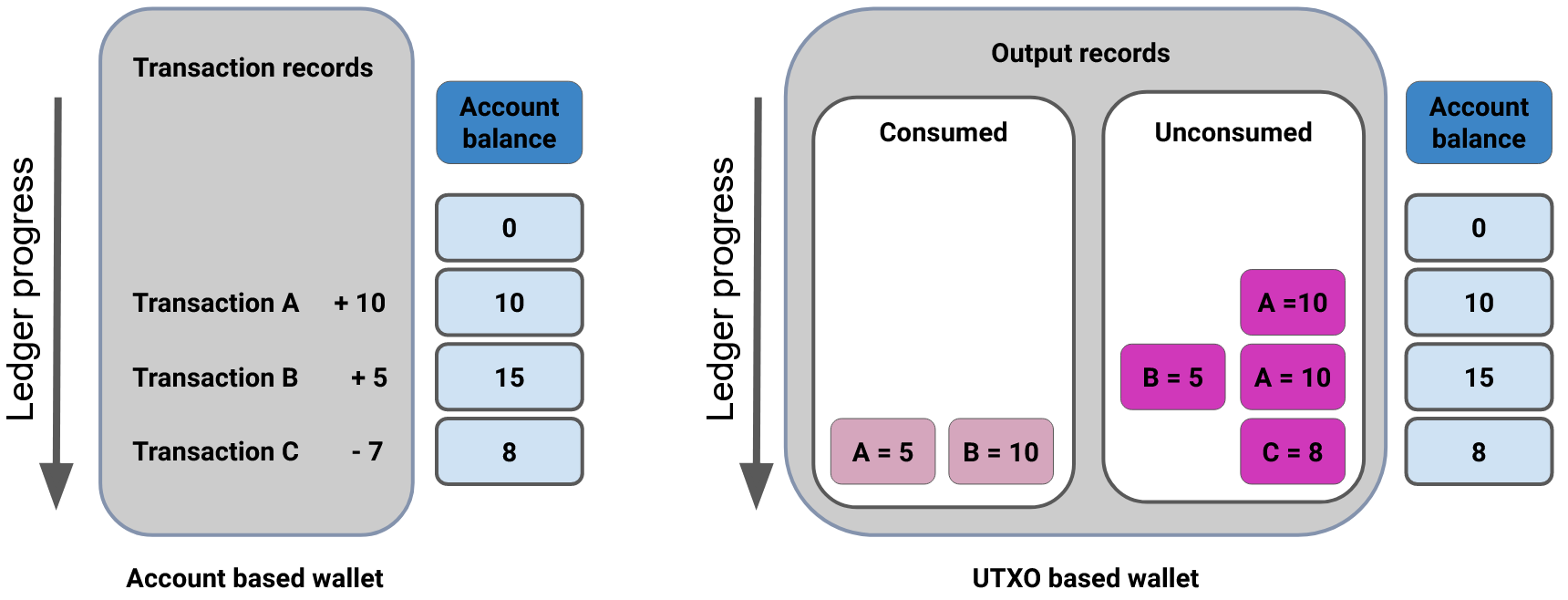}
    \caption{Comparison between account- and UTXO-based model. }
    \label{fig:AccountModel}
\end{figure*}

\section{Conflict-Free UTXO Ledger}\label{sec: Standard UTXO}

In the standard UTXO model, transactions specify the outputs of previous transactions as inputs and create new outputs spending (or consuming) the inputs. 
Thus, a transaction consists of a list of inputs and a list of unique outputs. To every output, we associate a unique reference or output ID. Typically such an output ID is created with the involvement of a hash function.
\begin{remark}\label{rem: hash function}
A collision-resistant hash function is used to map data of arbitrary size to a fixed-size binary sequence, i.e., $\hash: \{0,1\}^*\to \{0,1\}^h$. Moreover, it is required that it is practicably impossible to find for a given sequence $x$ another sequence $x'$ such that $\hash(x) =\hash(x')$.
Throughout the remainder of the paper, we assume that a particular hash function is fixed and used by all participants.
\end{remark}
For example, the output ID could be created through the concatenation of the index of the output within the transaction and the hash of the transaction's content.
Every output represents a specific amount of the underlying cryptocurrency. The value of all inputs, i.e., spent outputs, must equal the value of all outputs of a transaction.
Every output can be spent only once and, hence, value is conserved overall.
With each output comes an unlock condition, which declares by whom and under which conditions it can be spent. With each input comes an unlock block containing a proof that the transaction issuer is allowed to spend the inputs and fulfills the unlock condition, e.g., a signature proving ownership of a given input's address. We refer to Figure~\ref{fig:transactionLayout} for a general transaction layout. In Section~\ref{sec:realities} we propose a more general description of this model. 

In this model, an account that is controlled by an entity holding the corresponding private / public key pair, is a collection of UTXOs that can be unlocked through the key pair. This type of book keeping of balances and transactions differs fundamentally from the account-based model, such as it is used in Ethereum \cite{buterin2013ethereum}. In the account-based model funds are represented as balances within accounts and transactions describe how these balances change, see Figure \ref{fig:AccountModel}. 

Let us define the UTXO Ledger model more formally. We follow the approach of~\cite{IOTASC}. Note that for the purpose of this work we consider a simplification of the UTXO models used in practice.
\begin{definition}[Output and input]
An \textit{output} is a pair of a value $\weight\in \mathbb{R}^+$ and an unlock condition $\cond$. We write $o=(\weight, \cond)$ to denote the output. An \textit{input} $i$ is a reference to an output. In such a case, we say the input consumes the output. 
\end{definition}

\begin{definition}[Transaction]\label{def: transaction}
A transaction $x$ is a collection of inputs $\inputs(x),$ outputs $\outputs(x)$, and an unlock data $\unlock(x)$:
\begin{enumerate}[leftmargin=*]
    \item $\inputs(x)=(i_1,\ldots, i_n)$ is a list of inputs, i.e., references to outputs. We say that those outputs are \textit{consumed} or \textit{spent} by transaction $x$.
    \item $\outputs(x)=(o_1,\ldots, o_m)$ is a list of new outputs produced by transaction $x$.
    \item $\unlock(x)$ is a data which unlocks the inputs. This is usually done by cryptographic proof of authorization that ensures that the issuer of the transaction satisfies the condition $\cond$ of the consumed outputs.
\end{enumerate}
\end{definition}

\begin{definition}[Ledger]\label{def: ledger}
The \emph{ledger} is a set of transactions and denoted as $\ledger$. 
\end{definition}

\begin{definition}[Ledger state]\label{def: ledger state}
The \emph{ledger state}, written as $\state(\ledger)$, is the set of all outputs that are not consumed by a transaction in the ledger $\ledger$. In other words, the ledger state is the set of outputs, for which no input exists that references them.
\end{definition}

The ledger progresses through the addition of new transactions. Furthermore, it is an append-only structure, i.e., transactions can only be added and not removed from the ledger.

In a distributed system, the append-only nature of this data structure makes it necessary that the operators of the distributed ledger have consensus on which transactions should be added to the ledger. In a blockchain setting, such as Bitcoin, this can be achieved by the selection of a leader who typically extends the longest chain. The longest chain then determines which transactions  are included in the ledger and in which order. 

To provide consistency there can be specific \emph{ledger constraints} which ought to be fulfilled before a certain  transaction $x$ can be added to the ledger $\ledger$. The ledger constraints are generally enforced by transaction validation rules applied prior to addition of the transaction to the ledger. We define $\constraint(\ledger,x)=\true$ if all imposed constraints are satisfied by transaction $x$ and $\constraint(\ledger,x)=\false$ otherwise. 
The constraints of a transaction being added to the ledger adhere typically to the following assumption:
\begin{ass}[Ledger constraints]\label{ass:constraints}
A transaction $x$ is added to the ledger $\ledger$ if it follows the following rules:
\begin{enumerate}[leftmargin=*]
    \item the transaction $x$ is syntactically correct;
    \item the sum of values of $\inputs(x)$ equals the sum of values of $\outputs(x)$;
    \item the unlocking data $\unlock(x)$ is valid;
    \item $\inputs(x)$ are references to existing unspent (not yet spent) outputs in the ledger state  $\state(\ledger)$;
\end{enumerate}
\end{ass}

\begin{definition}[Consistent Ledger]\label{def: consistent ledger}
We say that the ledger  $\ledger$ is \emph{consistent} if and only if $\constraint(\ledger,x)=\true$ for all $x\in \ledger$.
\end{definition} 

The UTXO Ledger starts at the so-called \emph{genesis} $\orig$, i.e., the transaction that is the ultimative predecessor of any transaction of the UTXO Ledger. The genesis transaction only contains outputs and no inputs. 
Each new transaction is an atomic update of the ledger and the ledger state. A transaction $x$ is added to a consistent ledger $\ledger$ if $\ledger\cup \{x\}$ is a consistent ledger or, equivalently, if $\constraint(\ledger, x)=\true.$
The above constraints or consistency rules imply that each output can be consumed by at most one transaction and thus a consistent ledger can not contain a so-called \textit{double spend}. 

An important property of the UTXO Ledger is that the validity of the state update (adding a new transaction) can be determined  by only using the context of the transaction itself, i.e., inputs, outputs, and unlock conditions. This allows a certain degree of parallelism and turns the UTXO Ledger into a partially ordered data structure.

\begin{remark}\label{rem: invariant ledger}
We can talk about a ledger invariant $\Invar(\ledger)$ which is preserved by each addition of a transaction to the ledger. In other words, $\Invar(\ledger)=\Invar(\ledger')$ for any two consistent ledgers $\ledger$ and $\ledger'$.  The prime example is that the sum of the values of all unspent outputs in the ledger state remains constant.
\end{remark}

\section{Graph Theoretical Preliminaries}\label{sec: graph-theoretical notions}

In this section, we summarize basic graph theoretical notations and results that are used in the remaining part of the paper.

The set of integers between $1$ and $m$ is denoted by $[m]$. A \emph{graph} ${G}$ is a pair $(V,E)$, where $V$ denotes the set of vertices and $E$ denotes the set of edges. A graph is called \emph{directed} if every edge has its direction, e.g.,  for an edge $(u,v)$, the direction goes from $u$ to $v$. 

\begin{definition}[DAG]
A \emph{directed acyclic graph (DAG)} is a directed graph with no directed cycles, i.e.,  by following the directions of edges, we never form a closed loop.
\end{definition}

A vertex $v$ in a graph ${G}=(V,E)$ is called \textit{adjacent} to a vertex $u$ if $(u,v)\in E$. An edge $e\in E$ is said to be adjacent to a vertex $v\in V$ if $e$ contains $v$. The \textit{out-degree} and \textit{in-degree} of a vertex $v$ in a directed graph $G=(V,E)$ is the number of adjacent edges of the form $(v,u)$ and, respectively, $(u,v)$. A vertex in a graph is called \textit{isolated} if there is no edge adjacent to it.
\begin{definition}[Neighbours in a graph]~\label{def: neighbours in graph}
Let ${G}=(V,E)$ be a graph. For a vertex $v\in V$, define the \emph{set of neighbours} (or ${G}$-neighbours), written as $N_{{V}}(v)$\footnote{In the remainder of the paper, we will often identify the graph with its vertex set, since for a given set of vertices $V$, we will have only one DAG $D=(V,E)$. Thereby, the set of neighbours $N_{{V}}(v)$ and other concepts that use $V$ as a subscript will be clear from the context.}, to be the vertices adjacent to $v$. 
\end{definition}
\begin{definition}[Parents, children and leaves in a DAG]
Let ${D}=(V,E)$ be a DAG. For a vertex $v\in V$, define the set of \emph{parents}, written as $\parent{V}{v}$, to be the set of vertices $u\in V$ such that $(v,u)\in E$. Similarly, we define the set of \emph{children}, written as $\child{V}{v}$, to be the set of vertices $u\in V$ such that $(u,v)\in E$. A vertex $v\in V$ with in-degree zero is called a \emph{leaf}. 
\end{definition}

\begin{definition}[Partial order induced by a DAG]
Let ${D}=(V,E)$ be a DAG. We write $u\le_{{V}} v$  for some $u,v\in V$ if and only if there exists a directed path from $u$ to $v$, i.e., there are some vertices $w_0=u,w_1,\ldots,w_{s-1},w_s=v$ such that  $(w_{i-1},w_{i})\in E$ for all $i\in[s]$.  Furthermore, we write $u<_{{V}}v$ if $u\le_{{V}} v$ and $u\neq v$.
\end{definition}
Note there could be different DAGs producing the same partial order. The DAG with the fewest number of edges that gives the partial order $\le_{{V}}$ is usually called the \emph{transitive reduction} of ${D}$ or the \emph{Hasse diagram} of $\le_{{V}}$. In the following definition, we give a more general definition of the minimal subDAG of ${D}=(V,E)$ induced by a set of vertices $S\subseteq V$ which coincides with the transitive reduction of ${D}$ when $S=V$. 

\begin{definition}[Minimal subDAG induced by a set of vertices]\label{def: minimal subDAG}
Let ${D}=(V,E)$ be a DAG.  For a subset of vertices $S\subseteq V$, we define the \emph{minimal subDAG} of ${D}$ induced by $S$ to be the DAG ${D'}=(V',E')$ whose vertex set is $V'=S$ and there is an edge $(v,u)\in E'$ if and only if $u,v\in S$, $v<_{{V}}u$ and there is no $w\in S\setminus \{u,v\}$ such that $v<_{{V}} w <_{{V}}u$. 
\end{definition}

\begin{definition}[Maximal and minimal elements]\label{def: min and max elements} Let ${D}=(V,E)$ be a DAG and let $\le_{{V}}$ be the partial order induced by ${D}$. For a subset of vertices $S\subseteq V$,  an element $u\in S$ is called \emph{${D}$-maximal} (\emph{${D}$-minimal}) in $S$ if there is no $v\in S\setminus\{u\}$ such that $u\le_{{V}}  v$ ($v\le_{{V}}  u$). Define $\maximal{V}{S}$ and $\minimal{V}{S}$ to be the set of ${D}$-maximal and, respectively, ${D}$-minimal elements in $S$.
\end{definition}

\begin{remark}
The maximal (minimal) elements of a DAG $D$ are also called the geneses (tips) of $D$. Usually, we consider DAGs with only one genesis, whereas the number of tips can be large. 
\end{remark}

\begin{definition}[Future and past cones] \label{def: future and past cones}
Let ${D}=(V,E)$ be a DAG. For $x\in V$, define the \emph{past cone} of $x$ in ${D}$, written as $\cone{p}{V}{x}$ 
to be the set of all vertices $y\in V$ such that $x\le_{{V}} y$. Similarly, define the \emph{future cone} of $x$ in ${D}$, written as $\cone{f}{V}{x}$ 
to be the set of all vertices $y\in V$ such that $y\le_{{V}} x$. 
\end{definition}

\begin{definition}[Future-closed and past-closed sets]\label{def: future and past closed}
Let ${D}=(V,E)$ be a DAG. A subset $S\subset V$ is called \emph{${D}$-past-closed} if and only if for every $u\in S$, the past cone $\cone{p}{{V}}{u}$ is contained in $S$. 
Similarly, a subset $S\subset V$ is called \emph{${D}$-future-closed} if and only if for every $u\in S$, the future cone $\cone{f}{{V}}{u}$  is contained in $S$. 
\end{definition}

We conclude with a definition of maximal independent sets for general graphs.

\begin{definition}[Maximal Independent Set]\label{def: maximal independent set}
Let ${G}=(V,E)$ be a finite graph. A subset $S\subset V$ is an \emph{independent set} if and only if for every two vertices $u,v \in S$  there is no edge connecting the two, i.e., $(u,v)\notin E$. An independent set $S$ is called a \emph{maximal independent set} if and only if there is no other independent set $S'$ such that  $S\subsetneq S'$.
\end{definition}

\section{Reality-based Ledger}\label{sec: Reality-based Ledger}

In Section \ref{sec: Standard UTXO} we described the model of a conflict-free UTXO Ledger that is suitable for an environment where transactions are pre-filtered by a consensus mechanism. Since that ledger was conflict-free a valid ledger state could be readily extracted, see Definition~\ref{def: ledger state}.

To alleviate the restriction of requiring a conflict-free data structure, we propose an augmented version of the standard conflict-free UTXO Ledger model that allows more than one output spend. The ledger $\ledger$ continues to be defined by Definition~\ref{def: ledger}, however, the total set of transactions must not be conflict-free. We will derive a concept, called a \textit{reality}, which allows to reduce $\ledger$ to a subset of transactions that yield a valid (\textit{Reality-based}) Ledger state, see also Definition~\ref{def: ledger state}. We refer to Figure \ref{fig:DependenciesDefinition} for an overview on the dependencies of the principal definition that are required to describe the reality-based ledger and to Figure \ref{fig:notations} for an overview of the used notations.

\tikzset{%
            base/.style = {rectangle, rounded corners, draw=black,
                           minimum width=4cm, minimum height=1cm,
                           text centered, font=\sffamily},
          decision/.style = {base, diamond, inner sep=-5pt},
  activityStarts/.style = {base, fill=blue!10},
       startstop/.style = {base, fill=red!10},
    activityRuns/.style = {base, fill=green!30},
         processOpt/.style = {base, minimum width=2.5cm, fill=orange!10,
                           font=\ttfamily},
       process/.style = {base, minimum width=2.5cm, fill=orange!25,
                           font=\ttfamily,},
 processExt/.style = {base, minimum width=2.5cm, fill=green!10,
                           font=\ttfamily},                        
}

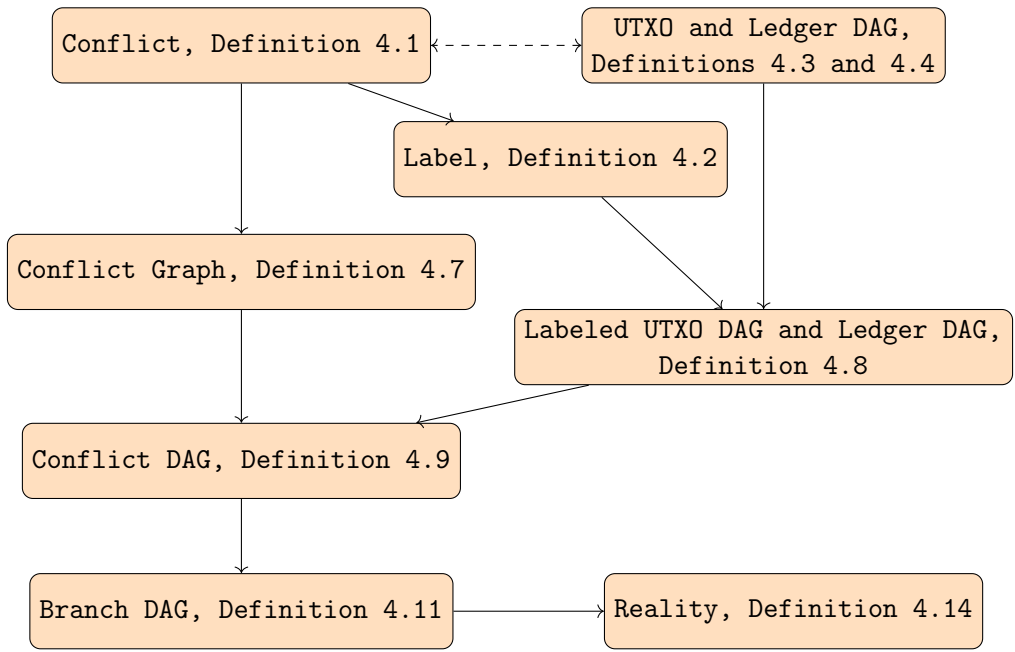
\begin{figure}[ht!]
    \centering
\begin{tikzpicture}[node distance=2cm,
    every node/.style={fill=white, font=\sffamily}, align=center]
  \node (conflicts)     [process]          {Conflict, Definition \ref{def:conflicts}};
 
  \node (UTXODAG)             [process, right =  2cm of conflicts]              {UTXO and Ledger DAG, \\  Definitions \ref{def: UTXO DAG} and  \ref{def: ledgerDAG}};
 
 \node (label) [process, below right = 0.5cm and -0.5cm of conflicts] {Label, Definition \ref{def:label}};
 
 \node (ConflictGraph) [process,  below = 2cm of conflicts ] {Conflict Graph, Definition \ref{def: Conflict Graph}}; 
 
\node (labelledUTXODAG)             [process, below = 3cm of UTXODAG]              {Labeled UTXO DAG and  Ledger DAG, \\ Definition \ref{def:labeled UTXLedgerDAGs}};

 \node (conflictDAG)             [process, below = 1.5cm  of ConflictGraph]              {Conflict DAG, Definition \ref{def: Conflict DAG}};

  \node (branchDAG)             [process, below of = conflictDAG]              {Branch DAG, Definition \ref{def:branchDAG}};
  
  \node (realities)             [process, right  = 2cm of branchDAG]              {Reality, Definition \ref{def: reality}};
   \draw[<->, dashed]           (UTXODAG) --  (conflicts) ;
   \draw[->]           (conflicts) --  (label) ;  
   \draw[->]           (conflicts) --  (ConflictGraph) ;  
   \draw[->]           (label) --  (labelledUTXODAG) ;
   \draw[->]     (UTXODAG)   -- (labelledUTXODAG);
   \draw[->]           (labelledUTXODAG) --  (conflictDAG) ;
    \draw[->]           (ConflictGraph) --  (conflictDAG) ;
    \draw[->]           (conflictDAG) -- (branchDAG) ;
   \draw[->]           (branchDAG) -- (realities);
\end{tikzpicture}
\caption{Dependencies of the main definitions around the reality-based ledger. }
\label{fig:DependenciesDefinition}
\end{figure}

The purpose of the augmentation is to identify and track possible conflicting transactions. We add this information to each transaction and output. For a given transaction we may set an additional flag or \textit{label} $\conflictID$.

Let us highlight that the transaction layout does not change and $\conflictID$ is only assigned a value if needed.
We can therefore think of an output as a triplet of a value $v$, an unlock condition $\cond$, and a  label $\conflictID$, i.e., $o=(v, \cond, \conflictID)$. Let us now define what we mean by conflicts.  

\begin{definition}[Conflicts]\label{def:conflicts}
A transaction $x\in \ledger$ is called a \textit{conflict} if and only if there exists a transaction $y\in \ledger\setminus\{x\}$ such that $x$ and $y$ contain at least one same input. The set of all conflicts is denoted by $\conflictset$ and dubbed the \emph{conflict set} of the ledger $\ledger$.
\end{definition}

{\noindent We can now define how we set the label  $\conflictID$:}
\begin{enumerate}[leftmargin=*]
    \item if a transaction $x$ is not a conflict, the label is not set;
    \item otherwise the label is set to a generic unique reference to the transaction, e.g., the transaction ID. 
\end{enumerate}
{\noindent More formally we define the label as follows.}

\begin{definition}[Label]\label{def:label}
We define $\bot$  to be the label of the  genesis $\orig$. 
Let $\labspace$ be a label space such that $\bot\in\labspace$, and $\lab: \ledger \to \mathcal{Y}$ be a function with the following properties:
\begin{enumerate}[leftmargin=*]
   \item if $x \in \ledger \setminus \{ \C\}$, then $\lab(x)=\bot$;
    \item the restriction of the function $\lab$ on the set $\C\cup \{\orig\}$ is injective, i.e., the image  $\lab(\C\cup \{\orig\}):=\{\lab(x): x\in \C\cup \{\orig\}\}$ has size $|\C|+1$.  
\end{enumerate} 
\end{definition}
\begin{remark}
One natural choice to set a unique (with high probability) label function is to utilize a hash function $\hash(\cdot):  \{0,1\}^*\to \{0,1\}^h$ (cf. Remark~\ref{rem: hash function}) for large enough $h$. Then the label set $\mathcal{Y}$ is $\{0,1\}^h \cup \bot$.
\end{remark}
\begin{remark}
As two conflicting transactions $x$ and $y$ may be perceived at different times, the detection of a conflict can only be achieved after having received both transactions. If transaction $x$ is perceived first, we do not see it yet as a conflict and do not set any flag. Only, when transaction $y$ arrives, we can identify $x$ as a conflict and set both flags for $x$ and $y$. 
\end{remark}

\begin{figure}[t!]
\begin{tabular}{ll}
    \multicolumn{2}{l}{\textbf{Set symbols}}\\
    $\conflictset$ &  set of conflicts \\
    $\ledger$ &  ledger or set of transactions  \\
    \multicolumn{2}{l}{\textbf{DAG-related notation}}\\
    $D$=$(V,E)$  &  directed acyclic graph (DAG) with vertex \\
    & set $V$ and edge set $E$ \\
    $\LedgerDAG$ & Ledger DAG  \\
    $\ConflictDAG$ & Conflict DAG  \\
    $\child{V}{x}$   &  set of children of vertex $x$ in DAG $D$=$(V,E)$ \\
    $\parent{V}{x}$ &  set of parents of vertex $x$ in DAG $D$=$(V,E)$ \\
    $\cone{f}{V}{x}$    & future cone of vertex $x$ in DAG $D$=$(V,E)$ \\
    $\cone{p}{V}{x}$  &  past cone of vertex $x$ in DAG $D$=$(V,E)$ \\
    $\lab^{(p)}(x)$ & set of labels in past cone of $x$ in $\LedgerDAG$\\
    \multicolumn{2}{l}{\textbf{Order and relationship definitions}}\\ 
    $\le_V$ &  partial order on set $V$ (usually induced by\\
    & a given DAG $D$=$(V,E)$) \\
    $N_V(x)$ &  set of neighbours of a vertex $x$ in \\
    & graph $G$=$(V,E)$ \\
    $\maximal{V}{S}$ &  set of maximal elements in set $S$ (maximal \\
    & according to DAG $D$=$(V,E)$) \\
    $\minimal{V}{S}$ &  set of minimal elements in set $S$ (minimal \\
    & according to DAG $D$=$(V,E)$) \\
\end{tabular}
\caption{Overview of the main notations.}
\label{fig:notations}
\end{figure}

In the presence of this label, we can remove the forth constraint from the ledger, see Assumption~\ref{ass:constraints}. However, to avoid the data structure becoming ``meaningless'' we still need a notion of ``consistency'' as we will see in Assumption~\ref{ass:constraints2}, and a mechanism  to track the dependencies of the conflicts.

The next part of this section  defines several data structures that can be derived from the UTXO inter-dependencies. These structures are used to track conflicting transactions without the need for consensus. More precisely, in Section~\ref{sec: Standard UTXODAG} we will explain how the UTXO transactions and their in- and outputs result in a DAG structure. In Section~\ref{sec:conflictGraph} we present how we can use the UTXO data structure to manage the conflicting transactions efficiently. 

In Section~\ref{sec:conflictDAGconflictGraph} the information contained in the UTXO DAG is split into the Conflict Graph, which keeps track of the conflicting transactions, and the Conflict DAG, which describes the inherited dependencies of conflicts. Finally, in Section~\ref{sec:branchDAG} branches are introduced, which form a possible non-conflicting state of the ledger. Combining non-conflicting branches can create maximally independent sets of conflicts called realities. Each  reality can be associated to a consistent ledger, see Theorem \ref{thm:Rledger}.

\subsection{UTXO and Ledger DAGs }\label{sec: Standard UTXODAG}
We introduced the concept of UTXO and defined UTXO based transactions as an operation spending inputs and creating outputs in Section~\ref{sec: Standard UTXO}.
The inputs and outputs in one transaction are ``atomic'' in the sense that either all inputs are consumed and all outputs are created or the transaction is not added to the ledger at all. The atomic nature of a transaction is represented by a unique transaction ID. In our graphical representation, these dependencies are expressed  using an additional vertex identified with the corresponding transaction ID, see Figure~\ref{fig: UTXO DAG}.

The collection of all transactions since genesis, i.e., the ledger $\ledger$,
provides the content for a DAG, which we call the UTXO DAG.
\begin{definition}[UTXO DAG]\label{def: UTXO DAG}
The vertex set of the UTXO DAG  
consists of all in- and outputs and all transaction IDs.  The interrelations between these form the set of directed edges. More specifically, directed edges exist from inputs to outputs, from the transaction ID to its inputs, and from the outputs to the ID of the transaction creating these outputs. We allow here to appear inputs several times as vertices, turning the vertex set formally into a multi-set. This allows to track possible double spends.
\end{definition}
\begin{example}
For a simplified illustration of an example for such a DAG see Figure~\ref{fig: UTXO DAG}. The depicted UTXO DAG contains five transactions.
\end{example} 
\begin{definition}[Ledger DAG]\label{def: ledgerDAG}
We define the Ledger DAG $\LedgerDAG$ to be a DAG whose vertex set is the ledger $\ledger$. There is a directed edge $(x,y)$ in the edge set of  $\LedgerDAG$ if and only if an input of $x$ references an output of $y$.
\end{definition}

\begin{remark}
The results derived in this paper are in respect to the Ledger DAG. However, due to the atomic nature of transactions the results also apply for the UTXO DAG.
\end{remark}

\begin{figure}
    \centering
    \includegraphics[width=0.48\textwidth]{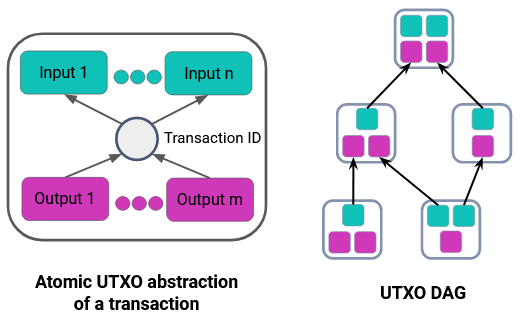}
    \caption{Atomic UTXO abstraction of a transaction and an example of a UTXO DAG. }
    \label{fig: UTXO DAG}
\end{figure}

Equipped with the notion from Section~\ref{sec: graph-theoretical notions},  we write $\le_{\ledger}$ to denote the partial order on the ledger $\ledger$ induced by $\LedgerDAG$. In other words,  $y\le_{\ledger} x$ if transaction $y$ spends (indirectly) from $x$. Note that the genesis $\orig$ is the only $\LedgerDAG$-maximal element in $\ledger$. Further, we write 
$\cone{p}{\ledger}{x}$ to denote the Ledger past cone and  
$\cone{f}{\ledger}{x}$ to denote the Ledger future cone.

As a typical rule in a DLT with a UTXO model, an output can only be spent once. Thus, if there are multiple transactions that attempt to spend the same output, it is the role of the consensus mechanism to select at most one transaction that is allowed to consume the output. 
Once some consensus mechanism decided on which conflicts to keep and which to reject, we can reduce or prune the augmented ledger as described in Section \ref{sec:ConflictPruning}.

\subsection{Conflict Graph}\label{sec:conflictGraph}

In Definition~\ref{def:conflicts} we introduced the notion of conflicts. Due to the causal dependency of ordered transactions, transactions can be conflicting even if they do not consume the same output.

\begin{definition}[Conflicting transactions]\label{def:conflictingTx}
Two distinct transactions $x,y\in \ledger$ are \emph{directly conflicting} if they have at least one input in common. 
Two distinct transactions $x_1,y_1\in \ledger$ are said to be \emph{indirectly} conflicting  if there exist distinct $x_2,y_2\in \ledger$ with either $x_1 <_{\ledger} x_2$ and $y_1\le_{\ledger} y_2$ or $x_1 \le_{\ledger} x_2$ and $y_1<_{\ledger} y_2$ such that $x_2$ and $y_2$ are directly conflicting. Two transactions are said to be \emph{conflicting} if they are directly or indirectly conflicting.
\end{definition}

\begin{remark}\label{rem: conflicting transaction are not conflicts}
Note that due to Definition~\ref{def:conflicts} some (or possibly the majority of) conflicting transactions are not necessarily conflicts. On the other hand, if two transactions are directly conflicting, then they are conflicts.
\end{remark}

\begin{definition}[Conflict-free set and conflicting sets]\label{def: conflicting sets}
A subset of transactions $S\subseteq \ledger$ is called \textit{conflict-free} if it does not contain any two  conflicting transactions. We also say that $S_1\subseteq \ledger$ is \textit{conflict-free with respect to} $S_2\subseteq \ledger$ if there is no $c_1\in S_1$ and $c_2\in S_2$ such that $c_1$ and $c_2$ are conflicting. Alternatively, $S_1$ is \textit{conflicting} with $S_2$ if $S_1$ is not conflict-free with respect to $S_2$.
\end{definition}
By Remark~\ref{rem: conflicting transaction are not conflicts}, conflicting transactions are not necessarily conflicts. However, the $\LedgerDAG$-maximal transactions that are conflicting with a given transaction have to be conflicts as described below. 

\begin{proposition}\label{prop: conflicting transactions}
For a transaction $x\in \ledger$, define $C=C(x)$ to be the set of   transactions that are conflicting with $x$. Then $C$ is $\LedgerDAG$-future-closed and it holds that $\maximal{\ledger}{C}\subseteq \conflictset$.
\end{proposition}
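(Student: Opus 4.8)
The plan is to first replace the two-case definition of conflicting transactions by a single uniform characterization, and then to read off both claims from it. Concretely, I would establish the auxiliary equivalence: two distinct transactions $s,t\in\ledger$ are conflicting if and only if there exist distinct, directly conflicting $a,b\in\ledger$ with $s\le_{\ledger}a$ and $t\le_{\ledger}b$. The direction from this condition to ``conflicting'' splits into the case $(s,t)=(a,b)$, which is a direct conflict, and the case where at least one inequality is strict, which is exactly the indirect conflict of Definition~\ref{def:conflictingTx}; the converse uses reflexivity of $\le_{\ledger}$ in the direct case and is immediate in the indirect case. This equivalence is routine, but it is the workhorse for both parts.

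For the future-closedness of $C=C(x)$, I would take $z\in C$ and $w\le_{\ledger}z$ and show $w\in C$. Applying the characterization to $z$ and $x$ produces distinct, directly conflicting $a,b$ with $z\le_{\ledger}a$ and $x\le_{\ledger}b$; transitivity of $\le_{\ledger}$ gives $w\le_{\ledger}a$, so the same pair $(a,b)$ witnesses that $w$ conflicts with $x$, provided $w\neq x$. The only way the argument can break is $w=x$, which would force $x\le_{\ledger}a$ and $x\le_{\ledger}b$ simultaneously, i.e.\ a single transaction spending indirectly from both endpoints of a direct conflict. This is precisely what the consistency of the reality-based ledger (Assumption~\ref{ass:constraints2}) forbids: the past cone of any transaction is conflict-free. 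Hence $w\neq x$ and $w\in C$.

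For the inclusion $\maximal{\ledger}{C}\subseteq\conflictset$, let $u\in\maximal{\ledger}{C}$. The characterization applied to $u$ and $x$ yields distinct, directly conflicting $a,b$ with $u\le_{\ledger}a$ and $x\le_{\ledger}b$; in particular $a$ is directly conflicting, so $a\in\conflictset$. I would then note that $a$ itself conflicts with $x$, since the same pair $(a,b)$ witnesses this via $a\le_{\ledger}a$ and $x\le_{\ledger}b$, so $a\in C$, again provided $a\neq x$ (the case $a=x$ is excluded by the consistency argument, because $u\le_{\ledger}a=x$ together with $u\in C$ would place $u$ below both sides of a conflict). Now $u\le_{\ledger}a$ with $a\in C$ and $u$ maximal in $C$ forces $a=u$, whence $u=a\in\conflictset$.

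The genuinely delicate point, and the step I expect to be the main obstacle, is not the combinatorics but these degenerate coincidences where a witnessing ancestor equals $x$ (or $u$). Absent the consistency hypothesis the statement actually fails: if $x$ spends from both outputs of a double spend $a,b$, then $a\in C(x)$ while $x\in\cone{f}{\ledger}{a}\setminus C(x)$, so $C(x)$ is not future-closed. The proof therefore hinges on invoking Assumption~\ref{ass:constraints2} in exactly the right places to rule out transactions lying below two directly conflicting transactions; once that is pinned down, both parts follow mechanically from the characterization lemma, transitivity of $\le_{\ledger}$, and the definition of $\LedgerDAG$-maximality.
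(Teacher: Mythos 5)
Your proof is correct, and its skeleton is the same as the paper's: collapse the direct/indirect dichotomy of Definition~\ref{def:conflictingTx} into the single statement that two distinct transactions conflict exactly when some distinct, directly conflicting pair lies (weakly) above them in $\le_{\ledger}$, then obtain future-closedness from transitivity of $\le_{\ledger}$, and obtain $\maximal{\ledger}{C}\subseteq\conflictset$ by showing the directly conflicting witness itself lies in $C$ and invoking maximality. The genuine difference is one of completeness, and it favors you: the paper's proof asserts without qualification that $x$ conflicts with every $z\le_{\ledger}y$ once it conflicts with $y$, and that the directly conflicting witness $u'$ lies in $C$, never addressing the coincidences $z=x$ and $u'=x$, where both claims break down because a transaction cannot conflict with itself. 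You correctly isolate exactly these degenerate cases and rule them out via the conflict-free past cone property (Assumption~\ref{ass:constraints2}, Remark~\ref{rem:conflictFree}), i.e., the incomparability of conflicting transactions; moreover, your double-spend example (a transaction $x$ spending outputs of both members of a direct conflict $a,b$, so that $a\in C(x)$ but $x\in\cone{f}{\ledger}{a}\setminus C(x)$) shows the proposition genuinely fails without that hypothesis, so this is not a pedantic point — particularly since the paper states Assumption~\ref{ass:constraints2} only after this proposition, and its own proof never cites it.
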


\begin{proof}
By Definition~\ref{def:conflictingTx}, if $x$ is conflicting with a transaction $y\in \ledger$, then $x$ is conflicting with $z$, where $z$ is any transaction $z\in \ledger$ such that $z\le_{\ledger} y$. Thus, the set $C$ has to be $\LedgerDAG$-future-closed. Let $u\in \maximal{\ledger}{C}$. By definition of conflicting transactions, there exists some $x'$ and $u'$ such that $x\le_{\ledger} x'$, $u\le_{\ledger} u'$ and $x'$ and $u'$ are directly conflicting. Then by Definitions~\ref{def:conflicts} and~\ref{def:conflictingTx}, $u'\in \conflictset$. Note that $u'$ and $x$ are conflicting and, thus, $u'\in C$. Since $u\le_{\ledger} u'$ and $u'\in \maximal{\ledger}{C}$, we conclude that $u=u'\in \conflictset$.
\end{proof}

The set of relations between conflicts can be described with the notion of a Conflict Graph.

\begin{definition}[Conflict Graph]\label{def: Conflict Graph}
The \emph{Conflict Graph} $\ConflictGraph$ has vertex set $\conflictset$.  Two vertices in $\ConflictGraph$ are connected by an undirected edge if and only if the corresponding two conflicts are conflicting. 
\end{definition}

\begin{example}
We refer the reader to Figure~\ref{fig:conflicts} for an illustration of conflicts and the Conflict Graph. On the left part of the figure, we depict a UTXO DAG, where by coloring the box of a transaction, we indicate whether the transaction is a conflict. For instance, it can be seen that the orange transaction is directly conflicting with the red transaction, whereas the blue transaction is indirectly conflicting with the purple transaction. The relations between conflicts are demonstrated with the help of the corresponding Conflict Graph which is depicted on the right part of the figure.
\end{example}

\begin{figure}[t]
\centering
    \includegraphics[width=0.6\textwidth]{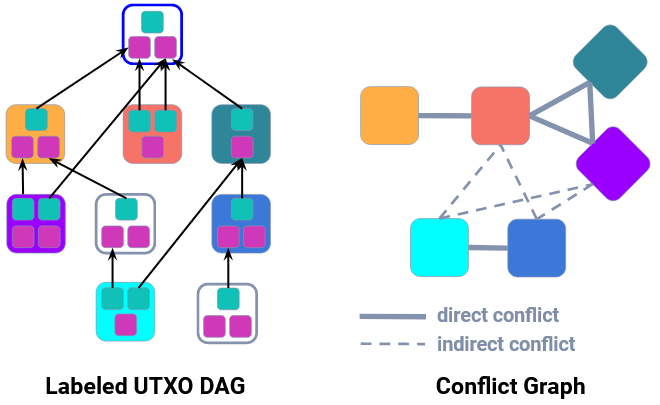}
    \caption{The UTXO DAG representation and the Conflict Graph} 
    \label{fig:conflicts}
\end{figure}

In contrast to the standard UTXO model, where no conflicts are allowed, they can be present in our generalization. We, however, require any two  conflicting transactions to be not comparable by the partial order $\le_{\ledger}$. The constraints in Assumption \ref{ass:constraints} about the addition of a transaction are relaxed:
\begin{ass}[Reality-based Ledger constraints]\label{ass:constraints2}
A transaction $x$ is added to the ledger $\ledger$ if it follows the following rules:
\begin{enumerate}[leftmargin=*]
    \item the transaction $x$ is syntactically correct;
    \item the sum of values of the inputs equals the sum of values of the outputs;
    \item the unlocking data $\unlock(x)$ is valid;
    \item $\inputs(x)$ are references to outputs which are not already consumed in $\cone{p}{\ledger}{x} \setminus \{ x\}$
\end{enumerate}
\end{ass}

\begin{remark}[Conflict-free past cone]\label{rem:conflictFree}
A  consequence of the $4$th point in Assumption~\ref{ass:constraints2} is that all past cones are conflict-free. In other words, we have that for every transaction $x\in\ledger$,  $\cone{p}{\ledger}{x}$  does not contain any pair of two conflicting transactions.
\end{remark}

\subsection{Labeled UTXO and Ledger DAGs and Conflict DAG}\label{sec:conflictDAGconflictGraph}

The existence of conflicts in the Ledger DAG plays a crucial role, as they eventually need to be resolved. In the following, we extract the necessary information for conflict resolution from the Ledger DAG. To this end, we labeled the transactions, see Definition~\ref{def:label}. We add this labeling to our UTXO and Ledger DAGs to keep  track of the conflicts and their dependencies.

\begin{figure*}[h!]
    \centering
    \includegraphics[width=.8 \textwidth]{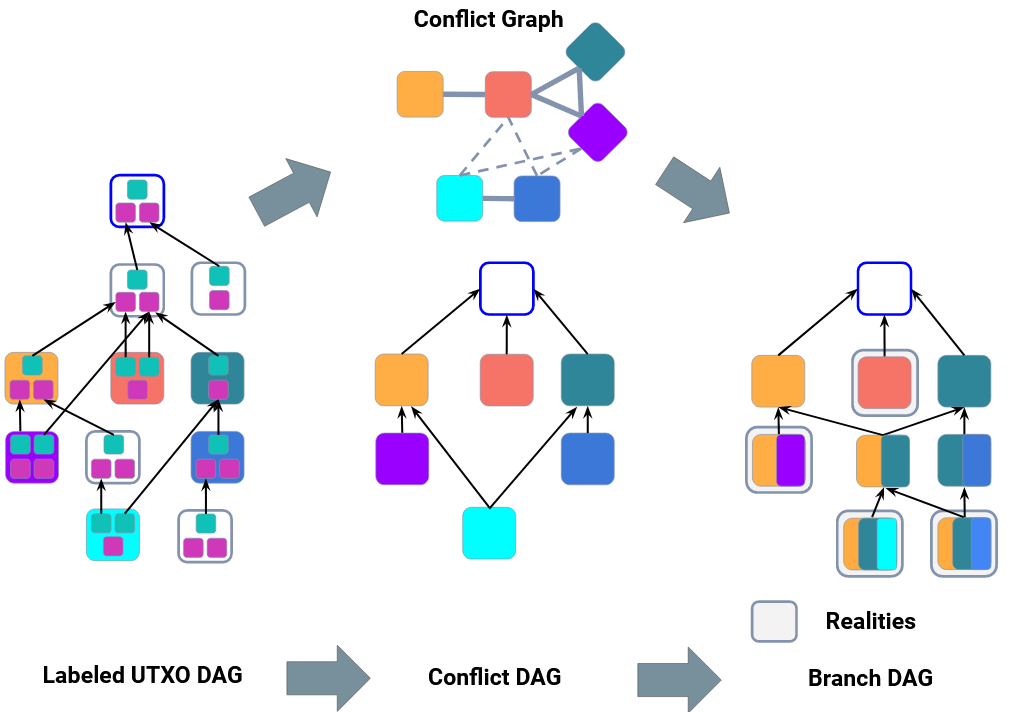}
    \caption{Derivation of the various graphs from the labeled UTXO DAG and the resulting realities (leaves of the Branch DAG). The colors represent the different labels of the conflicts. }
    \label{fig:DAGtransition}
\end{figure*}

\begin{definition}[Labeled UTXO and Ledger DAGs]\label{def:labeled UTXLedgerDAGs}
The \emph{labeled UTXO and Ledger DAGs} are the UTXO and Ledger DAGs with the  additional labels as described in Definition~\ref{def:label}.
\end{definition}

\begin{example} 
For an illustration of a labeled Ledger DAG, we refer the reader to Figures~\ref{fig:conflicts} and~\ref{fig:DAGtransition}. Note that the different colors correspond to the different labels in these figures. 
\end{example}
Now we define the restriction of the labeled Ledger DAG to the conflict set and the genesis using  Definition~\ref{def: minimal subDAG}.

\begin{definition}[Conflict DAG] \label{def: Conflict DAG}
The \emph{Conflict DAG}, written as $\ConflictDAG$, is defined as the minimal subDAG of $\LedgerDAG$ induced by the set of vertices $\conflictset\cup \{\orig\}$. 
\end{definition}

\begin{example}
For a more visual explanation of the above concept, we depict Figure~\ref{fig:DAGtransition}. Specifically, we demonstrate a UTXO DAG on the left part of the figure and the corresponding Conflict DAG in the middle. We note that the purple transaction in the Conflict DAG is not connected by an edge with the white one since there is a path connecting them which goes through the yellow transaction.
\end{example}
\begin{remark}
We observe that the Conflict DAG and the Conflict Graph represent only some partial information about the labeled UTXO DAG. Specifically, they are used to demonstrate different relations between the conflicts. Note that in general, it is not possible to construct the Conflict DAG using only the Conflict Graph and vice versa.
\end{remark}

Using the notion from Section~\ref{sec: graph-theoretical notions}, we denote the partial order on the set $\conflictset$ induced by $\ConflictDAG$ by $\le_{\conflictset}$. The past and future cones of a conflict $x\in\conflictset$ in the Conflict DAG are written as  $\cone{p}{\conflictset}{x}$ and $\cone{f}{\conflictset}{x}$, respectively. Finally, we give an observation saying that if a transaction is conflicting with some subset of conflicts $S$ (see Definition~\ref{def: conflicting sets}), it is possible to find a $\ConflictDAG$-minimal conflict in $S$ which is conflicting with that transaction.

\begin{proposition}[Transaction conflicting with a set]\label{prop: only minimal elements in branch}
Let a subset of conflicts $S\subseteq \conflictset$ be conflicting with a transaction $x\in \ledger$. Then there exists $c\in\min_{\conflictset}(S)$ such that $x$ and $c$ are conflicting.
\end{proposition}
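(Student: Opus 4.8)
The plan is to start from any conflict in $S$ that witnesses the conflicting relation with $x$ and to descend, inside the Conflict DAG, toward a $\ConflictDAG$-minimal element of $S$ while preserving the property of being conflicting with $x$. The engine that makes the descent safe is the future-closedness of the ``conflicting with $x$'' set established in Proposition~\ref{prop: conflicting transactions}. Before carrying this out I would record the one order-theoretic fact I need: whenever $c\le_{\conflictset}c'$ for $c,c'\in\conflictset$, the corresponding directed path in $\ConflictDAG$ (the minimal subDAG of $\LedgerDAG$ induced by $\conflictset\cup\{\orig\}$, Definition~\ref{def: Conflict DAG}) is a chain of $\le_{\ledger}$-relations, so $c\le_{\ledger}c'$ as well. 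In fact the two orders coincide on $\conflictset\cup\{\orig\}$, but only this direction is used.

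First I would set $C=C(x)$, the set of all transactions conflicting with $x$, and invoke Proposition~\ref{prop: conflicting transactions} to obtain that $C$ is $\LedgerDAG$-future-closed, i.e. $\cone{f}{\ledger}{u}\subseteq C$ for every $u\in C$. Since $S$ is conflicting with $x$, Definition~\ref{def: conflicting sets} yields some $c_0\in S\cap C$. I would then pass to the finite nonempty set $T=\{c'\in S: c'\le_{\conflictset}c_0\}$, which contains $c_0$, and pick any $c\in\min_{\conflictset}(T)$; such a $c$ exists by finiteness of the ledger.

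Two short checks then finish the argument. For the conflicting property: from $c\le_{\conflictset}c_0$ I get $c\le_{\ledger}c_0$, hence $c\in\cone{f}{\ledger}{c_0}\subseteq C$ because $c_0\in C$, so $c$ is conflicting with $x$. For minimality in all of $S$ (and not merely in $T$): if some $v\in S\setminus\{c\}$ satisfied $v\le_{\conflictset}c$, then $v<_{\conflictset}c\le_{\conflictset}c_0$ would place $v$ in $T$ and contradict $c\in\min_{\conflictset}(T)$. Hence $c\in\min_{\conflictset}(S)$ and $c$ is conflicting with $x$, which is the claim.

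The step I expect to require the most care is keeping the two partial orders straight: future-closedness is a statement about $\cone{f}{\ledger}{\cdot}$ in the Ledger DAG, whereas minimality is taken in the Conflict DAG, so the argument genuinely relies on $c$ and $c_0$ both lying in $\conflictset$ and on $\le_{\conflictset}$ being the restriction of $\le_{\ledger}$. Once that identification is pinned down, the remaining content is only the elementary ``minimal in $T$ implies minimal in $S$'' observation together with finiteness, both of which are routine.
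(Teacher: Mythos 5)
Your proof is correct and follows essentially the same route as the paper's: pick a witness $c_0\in S$ conflicting with $x$, descend to a $\ConflictDAG$-minimal element of $S$ below it, and use future-closedness of the set of transactions conflicting with $x$ (Proposition~\ref{prop: conflicting transactions}, which the paper instead re-derives directly from Definition~\ref{def:conflictingTx}) to preserve the conflict. The only difference is that you make explicit two points the paper leaves implicit — the existence of a minimal element of $S$ below $c_0$ via the auxiliary set $T$, and the compatibility of $\le_{\conflictset}$ with $\le_{\ledger}$ — which is a matter of detail, not of approach.
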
 

 \begin{proof}
By Definition~\ref{def: conflicting sets}, there exists some $y\in S$ such that $y$ and $x$ are directly or indirectly conflicting. From Definition~\ref{def: min and max elements}, it follows that there exists some $c\in\min_{\conflictset}(S)$  such that $c\le_{\conflictset} y$. Then by Definition~\ref{def:conflictingTx}, we conclude that $c$ is  conflicting with $x$.
 \end{proof}

\subsection{Branches and Branch DAG}\label{sec:branchDAG}

This section introduces the concepts of branches and Branch DAG, which help to handle the conflicting transactions.

\begin{definition}[Branch and set of branches]\label{def:branch}
A set of conflicts ${B}\subseteq \conflictset$ is called a \emph{branch} if and only if the two properties hold:
\begin{enumerate}[leftmargin=*]
    \item ${B}$ is conflict-free (cf. Definition~\ref{def: conflicting sets});
    \item ${B}$ is $\ConflictDAG$-past-closed (cf. Definition~\ref{def: future and past closed}).
\end{enumerate}
Define $\branchset$ to be the set of all branches, including  the so-called \emph{main branch} which represents the empty set. 
\end{definition}
Note that by Assumption~\ref{ass:constraints2}(4), for any conflict $x\in \conflictset$, the past cone $\cone{p}{\conflictset}{x}$ is a branch. In the following statement we discuss a sufficient condition for the union of branches to be a branch.

\begin{lemma}\label{lem:aggregateBranch}
Let ${B}_1,\dots,{B}_n\in \branchset$ be branches such that there exists a branch $A\in\branchset$ with ${B}_1,\dots,{B}_n\subseteq A$ . Then, the union  
\[{B}_1\cup\cdots\cup {B}_n\]
is also a branch, called the \emph{aggregate} branch of  ${B}_1,\dots,{B}_n$. 
\end{lemma}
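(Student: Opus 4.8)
The plan is to prove that the union $B = B_1 \cup \cdots \cup B_n$ satisfies both defining properties of a branch from Definition~\ref{def:branch}: conflict-freeness and $\ConflictDAG$-past-closedness. I would verify these two properties separately, using the hypothesis that all the $B_i$ sit inside a common branch $A$.

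First I would establish that $B$ is $\ConflictDAG$-past-closed. This is the easy direction and requires no use of the hypothesis. Take any conflict $u \in B$; by definition of union, $u \in B_i$ for some $i \in [n]$. Since $B_i$ is itself a branch, it is past-closed, so $\cone{p}{\conflictset}{u} \subseteq B_i \subseteq B$. As this holds for every $u \in B$, the union $B$ is past-closed. The stability of past-closedness under arbitrary unions is immediate from its pointwise definition in Definition~\ref{def: future and past closed}.

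The main obstacle is the second property: showing that $B$ is conflict-free. Each individual $B_i$ is conflict-free by assumption, but this alone does not guarantee the union is conflict-free, since a conflict $c_1 \in B_i$ might conflict with some $c_2 \in B_j$ for $i \neq j$. This is exactly where the common-container hypothesis is essential. The argument is: suppose for contradiction that $B$ is not conflict-free, so there exist $c_1, c_2 \in B$ that are conflicting. Then $c_1 \in B_i$ and $c_2 \in B_j$ for some indices, and since $B_i, B_j \subseteq A$, both $c_1$ and $c_2$ belong to $A$. But $A$ is a branch and hence conflict-free, so $A$ cannot contain two conflicting conflicts $c_1$ and $c_2$ — a contradiction. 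Therefore $B$ is conflict-free.

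Having verified both properties, I conclude that $B = B_1 \cup \cdots \cup B_n$ is a branch. I expect the conflict-freeness step to be where the hypothesis does all the work, and I would emphasize that without a common container $A$, the claim would genuinely fail, which explains why the lemma is stated as a sufficient (rather than universal) condition. The whole argument is short and essentially a matter of unwinding the two defining properties against the closure behavior of unions.
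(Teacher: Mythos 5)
Your proof is correct and takes essentially the same route as the paper's: past-closedness is inherited by the union because each $B_i$ is past-closed, and conflict-freeness follows because the union sits inside the conflict-free branch $A$. The paper states this in two sentences; you have simply unwound the same argument in more detail, including the (correct) contradiction framing for the conflict-freeness step.
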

\begin{proof}
Since, every ${B}_i$ is $\ConflictDAG$-past-closed, the union ${B}_1\cup\cdots\cup{B}_n$ is also $\ConflictDAG$-past-closed.  Since $ A$ contains no conflicting pairs, the union of  must not either.
\end{proof}

We proceed with a crucial observation saying that each branch can be represented as the aggregated branch of certain past cones.
\begin{lemma}[Aggregated branch]\label{lem: each branch is aggregate} 
Every branch ${B}$ can be uniquely written as the aggregated branch
$$
{B}= \cone{p}{\conflictset}{c_1} \cup \cdots \cup  \cone{p}{\conflictset}{c_n},
$$
where conflicts $c_1,\ldots,c_n$ are the $\ConflictDAG$-minimal elements in ${B}$. 
\end{lemma}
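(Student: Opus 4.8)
The plan is to prove the claimed identity by two set inclusions, then check that the right-hand side is a genuine aggregate branch, and finally argue that the generating antichain is forced, which gives the uniqueness. Throughout I write $c_1,\dots,c_n$ for the $\ConflictDAG$-minimal elements of $B$; since no two $\ConflictDAG$-minimal elements of $B$ are $\le_{\conflictset}$-comparable, they form an antichain, so the displayed union is a union of past cones of pairwise incomparable conflicts.

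First I would establish $B \supseteq \cone{p}{\conflictset}{c_1}\cup\cdots\cup\cone{p}{\conflictset}{c_n}$. Each $c_i$ lies in $B$, and because $B$ is $\ConflictDAG$-past-closed (Definition~\ref{def:branch}), the past cone $\cone{p}{\conflictset}{c_i}$ is contained in $B$; taking the union over $i$ gives the inclusion. For the reverse inclusion, fix $x\in B$. Since $\ConflictDAG$ is a finite DAG, the order $\le_{\conflictset}$ has no infinite descending chains, so by descending from $x$ within $B$ until no strictly smaller element of $B$ remains, I obtain a $\ConflictDAG$-minimal element $c$ of $B$ with $c\le_{\conflictset}x$. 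Then $c=c_i$ for some $i$ and $x\in\cone{p}{\conflictset}{c_i}$, so $x$ belongs to the union. This proves the set equality $B=\cone{p}{\conflictset}{c_1}\cup\cdots\cup\cone{p}{\conflictset}{c_n}$.

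It then remains to verify that the right-hand side is genuinely an aggregate branch and that the representation is unique. As noted after Definition~\ref{def:branch}, each past cone $\cone{p}{\conflictset}{c_i}$ is itself a branch, and all of them are contained in the common branch $A:=B$; hence Lemma~\ref{lem:aggregateBranch} applies and their union is a branch, namely the aggregate branch of the $\cone{p}{\conflictset}{c_i}$. For uniqueness, suppose $B=\bigcup_{c\in S}\cone{p}{\conflictset}{c}$ for some antichain $S\subseteq\conflictset$; I claim $S=\min_{\conflictset}(B)$. Every $c\in S$ lies in $B$; were $c$ non-minimal, some $v\in B$ with $v<_{\conflictset}c$ would satisfy $v\in\cone{p}{\conflictset}{c'}$ for a $c'\in S$, giving $c'\le_{\conflictset}v<_{\conflictset}c$ and hence $c'<_{\conflictset}c$, contradicting that $S$ is an antichain; thus $S\subseteq\min_{\conflictset}(B)$. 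Conversely, any $\ConflictDAG$-minimal $m\in B$ lies in some $\cone{p}{\conflictset}{c'}$ with $c'\in S$, so $c'\le_{\conflictset}m$ and $c'\in B$, and minimality of $m$ forces $c'=m\in S$. Hence $S=\min_{\conflictset}(B)$, so the generating antichain, and therefore the representation, is unique.

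The two inclusions and the branch property are routine, following directly from past-closedness, acyclicity/finiteness of $\ConflictDAG$, and Lemma~\ref{lem:aggregateBranch}. The main thing to get right is the uniqueness argument: it must be phrased as ``any antichain of generators must coincide with $\min_{\conflictset}(B)$,'' and the direction of $\le_{\conflictset}$ (hence of the past cones) must be kept straight, since confusing past with future cones would invert the entire statement.
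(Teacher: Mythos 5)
Your proof is correct and follows essentially the same route as the paper's: finiteness of $B$ yields the antichain of $\ConflictDAG$-minimal conflicts, past-closedness gives one inclusion, descending from an arbitrary element of $B$ to a minimal one gives the other, and Lemma~\ref{lem:aggregateBranch} certifies that the union is a branch. You are in fact more explicit than the paper's terse argument, in particular in proving that any generating antichain must coincide with $\min_{\conflictset}(B)$, a point the paper treats as immediate from the uniqueness of the set of minimal elements.
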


\begin{proof}
The branch  ${B}$ is finite. Hence, there exist unique  $\ConflictDAG$-minimal conflicts $c_1,\dots,c_n$ in ${B}$, i.e., for any $c_i$ and any other conflict $d\in {B}$, either it holds that $c_i\le_\C d$ or $d$ and $c_i$ are not comparable by the partial order.  Since ${B}$ contains the past cone of all its conflicts, it follows that ${B}$ is the aggregated branch of the branches $\cone{p}{\conflictset}{c_i}$. 
\end{proof}
\begin{remark}
Lemma~\ref{lem: each branch is aggregate} has some fundamental consequences for how we can implement the branches into the protocol. For instance, if a branch ${B}$ has a unique decomposition into 
 $$
{B}= \cone{p}{\conflictset}{c_1} \cup \cdots \cup  \cone{p}{\conflictset}{c_n},
 $$
 then we can set the branch ID of ${B}$ to be the hash of the  concatenation of the transaction IDs of the $c_i$'s (ordered in a canonical way).  Then, the branch ID of a branch of the form $\cone{p}{\conflictset}{c}$ with $c\in\conflictset$ is the same as the hash of the conflict $c$.  
 \end{remark}

A branch ${B}_1$ is called a \emph{subbranch} of branch ${B}_2$ if ${B}_1\subseteq {B}_2$. Lemma~\ref{lem: each branch is aggregate} already shows the recursive structure of the branches. 
This recursive structure can be encoded in the Branch DAG.

\begin{definition}[Branch DAG]\label{def:branchDAG}
Ordered by inclusion, $\branchset$ is a partially ordered set 
and defines a DAG. 
 Specifically, we put a directed edge from a branch $A$ to a branch $B$ if $A = {B}\cup c$ for some conflict $c\in \conflictset \setminus {B}$. The corresponding DAG is called the \emph{Branch DAG} and denoted by $\BranchDAG$.
\end{definition}

\begin{remark}\label{rem: complexity of Branch DAG}
We show an example of a Branch DAG in Figure~\ref{fig:DAGtransition}. In this example, the number of vertices in the Branch DAG is nine  which is larger than six, the number of conflicts. We note that in general, the number of vertices in a Branch DAG can be exponentially large in the number of conflicts. For instance, if there exist $t$ pairs of directly conflicting transactions such that any two transactions from different pairs are not conflicting, then the number of vertices in the Branch DAG is lower bounded by $2^t$. We refer the reader to Section~\ref{sec: numerical experiments}, where we explain how some functionalities based on the natural concept of a Branch DAG can be implemented in a more efficient way.
\end{remark}

Applying the notion from Section~\ref{sec: graph-theoretical notions}, we denote the partial order on the set $\branchset$ induced by $\BranchDAG$ by $\le_{\branchset}$; the set of parents and children of branch $B\in\branchset$ is written as $\parent{\branchset}{B}$ and $\child{\branchset}{B}$.

Conflicting transactions owe the existence of their conflict state to the presence of conflicts in their Ledger past cone. Since conflicts are labelled transactions, we can define a function that extracts all labels in this past cone. 
\begin{definition}[Maximal contained label set]\label{def:label set}
Let $\mathcal{Y}$ be the label space, and $\lab^{(p)}: \ledger \to 2^\mathcal{Y}$ be a function that for a given transaction $x\in \ledger$ returns all labels of the transactions in its Ledger past cone i.e., $\lab^{(p)}(x)=\{\lab(y): y\in \cone{p}{\ledger}{x}\}$. 
\end{definition}

\begin{remark} \label{rem: how to update the set of labels}
Practically this operation can be performed, e.g., through a graph search algorithm applied to the Ledger DAG (more computational intensive), or through a transaction-by-transaction record of conflict dependencies (more memory expensive). On one hand, for a given transaction $x$ we can identify all transactions with labels in the past cone $\cone{p}{\ledger}{x}$ by traversing the graph by means of  depth-first search. We can discontinue to search deeper than certain elements by cross-checking with the conflict set.  On the other hand, we can  inherit the maximal contained label set for a new arriving transaction from its parents and if a new conflict with $x$ is created, we traverse the future cone $\cone{f}{\ledger}{x}$ and update the maximal contained label set for all transactions there. 
\end{remark}

We can also define an equivalent function to obtain branch dependencies.
\begin{definition}[Maximal contained branch]
Let $\branchset$ be the set of all branches, and $\branch^{(p)}_{\ledger}: \ledger \to \branchset$ be a function that for a given transaction $x\in \ledger$ returns the maximal branch contained in $\cone{p}{\ledger}{x}$. 
\end{definition}
We note that there could not be two maximal branches in the Ledger past cone of a transaction (which is conflict-free) since, otherwise, we could consider their aggregate branch.
The above two definitions have the following correlation.
\begin{lemma}
The maximal contained label set of a transaction $x$ translates to the maximal branch that is contained in the past cone $\cone{p}{\ledger}{x}$. More precisely, we have that \[\lab^{(p)} (x)\setminus\{\bot\} =  \bigcup_{c \in \branch^{(p)}_{\ledger} (x)}  \lab(c).\]
\end{lemma}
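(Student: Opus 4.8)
The plan is to reduce the claimed set identity to a purely combinatorial statement about conflicts and then finish using the defining properties of the label function. First I would unpack the left-hand side. By Definition~\ref{def:label}, a transaction has label $\bot$ exactly when it is not a conflict (and $\lab(\orig)=\bot$), while the restriction of $\lab$ to $\conflictset$ is injective with all values different from $\bot$. Hence deleting $\bot$ from $\lab^{(p)}(x)=\{\lab(y):y\in\cone{p}{\ledger}{x}\}$ leaves precisely the labels of the conflicts lying in the past cone, i.e.
\[
\lab^{(p)}(x)\setminus\{\bot\}=\{\lab(c):c\in\cone{p}{\ledger}{x}\cap\conflictset\}.
\]
Since $\lab$ is injective on $\conflictset$, the whole lemma reduces to the set equality $\branch^{(p)}_{\ledger}(x)=\cone{p}{\ledger}{x}\cap\conflictset$, reading the right-hand union as the set $\{\lab(c):c\in\branch^{(p)}_{\ledger}(x)\}$.

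The core of the argument is therefore to show that $\cone{p}{\ledger}{x}\cap\conflictset$ is itself the maximal branch contained in $\cone{p}{\ledger}{x}$. I would first verify it is a branch in the sense of Definition~\ref{def:branch}. Conflict-freeness is immediate from Remark~\ref{rem:conflictFree}, which states that the entire past cone $\cone{p}{\ledger}{x}$ is conflict-free, so any subset of it is as well. For $\ConflictDAG$-past-closedness, take a conflict $c\in\cone{p}{\ledger}{x}\cap\conflictset$ together with a conflict $d$ satisfying $c\le_{\conflictset}d$. Because $\ConflictDAG$ is the minimal subDAG of $\LedgerDAG$ induced by $\conflictset\cup\{\orig\}$ (Definition~\ref{def: minimal subDAG}), every edge of $\ConflictDAG$ arises from a $\le_{\ledger}$-relation; tracing a directed $\ConflictDAG$-path from $c$ to $d$ thus yields $c\le_{\ledger}d$. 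Combining with $x\le_{\ledger}c$ and transitivity of $\le_{\ledger}$ gives $x\le_{\ledger}d$, so $d\in\cone{p}{\ledger}{x}\cap\conflictset$.

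Maximality is then a short step: any branch $B\subseteq\cone{p}{\ledger}{x}$ is by definition a subset of $\conflictset$, hence $B\subseteq\cone{p}{\ledger}{x}\cap\conflictset$; since the latter is itself a branch contained in the past cone, it is the unique maximal such branch, which is exactly $\branch^{(p)}_{\ledger}(x)$ (uniqueness of the maximal contained branch is the aggregate-branch observation following Definition~\ref{def:branch}). Substituting this identity back and applying injectivity of $\lab$ on $\conflictset$ yields $\lab^{(p)}(x)\setminus\{\bot\}=\bigcup_{c\in\branch^{(p)}_{\ledger}(x)}\lab(c)$, as required.

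The step I expect to require the most care is past-closedness, specifically the role of the genesis $\orig$: the Conflict DAG formally contains $\orig$, so the literal past cone $\cone{p}{\conflictset}{c}$ may include it, whereas branches are subsets of $\conflictset$ that do not contain $\orig$. I would handle this exactly as the paper implicitly does when it asserts (after Definition~\ref{def:branch}) that each $\cone{p}{\conflictset}{c}$ is a branch: since $\orig$ is the common root of $\ConflictDAG$ and carries the trivial label $\bot$, past-closedness is to be understood among conflicts, and the genesis contributes nothing here because its label is already stripped off by the $\setminus\{\bot\}$ on the left-hand side.
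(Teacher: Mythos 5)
Your proof is correct, and it is organized differently from the paper's. The paper proves the easy inclusion $\bigcup_{c\in\branch^{(p)}_{\ledger}(x)}\lab(c)\subseteq\lab^{(p)}(x)\setminus\{\bot\}$ and then obtains the reverse inclusion by contradiction, one label at a time: if the label of some conflict $y\in\cone{p}{\ledger}{x}$ were missing, the union $\cone{p}{\conflictset}{y}\cup\branch^{(p)}_{\ledger}(x)$ would be a branch (past-closed as a union of past-closed sets, and conflict-free because it sits inside $\cone{p}{\ledger}{x}$), so maximality would force $y\in\branch^{(p)}_{\ledger}(x)$. You instead prove the stronger, explicit identity $\branch^{(p)}_{\ledger}(x)=\cone{p}{\ledger}{x}\cap\conflictset$: the set of \emph{all} conflicts in the past cone is conflict-free by Remark~\ref{rem:conflictFree}, is $\ConflictDAG$-past-closed because any $\ConflictDAG$-path from $c$ to $d$ yields $c\le_{\ledger}d$ (Definition~\ref{def: minimal subDAG}), and contains every branch lying in the past cone, hence is the unique maximal one. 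Both arguments rest on exactly the same two ingredients --- conflict-freeness of Ledger past cones and past-closedness via the compatibility of $\le_{\conflictset}$ with $\le_{\ledger}$ --- but your route buys a reusable closed-form description of $\branch^{(p)}_{\ledger}(x)$, and it subsumes the paper's side remark that the maximal contained branch is well defined (the paper justifies this separately via aggregation of branches), at the cost of spelling out the order-compatibility step that the paper leaves implicit inside the containment $\cone{p}{\conflictset}{y}\subseteq\cone{p}{\ledger}{x}$. Your closing caveat about the genesis is also apt: the paper itself asserts that $\cone{p}{\conflictset}{c}$ is a branch even though that past cone formally contains $\orig\notin\conflictset$, so past-closedness must indeed be read among conflicts, exactly as you treat it, and the $\setminus\{\bot\}$ on the left-hand side disposes of the genesis label in any case.
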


\begin{proof}
By definition, $\lab^{(p)}(x)=\{\lab(y): y\in \cone{p}{\ledger}{x}\}$. The branch $\branch^{(p)}_{\ledger} (x)$ is included to $\cone{p}{\ledger}{x}$ and $\bot$ is not a label for any conflict by Definition~\ref{def:label}. This implies that 
$$
\bigcup_{c \in \branch^{(p)}_{\ledger} (x)}  \lab(c) \subseteq \lab^{(p)} (x) \setminus \{\bot\}. 
$$
Toward a contradiction, assume that the equality in the above formula does not hold, i.e., there exists some label $\ell\in \mathcal{Y}$ which is not present in the left-hand side. Consider the unique conflict $y\in \cone{p}{\ledger}{x}$ such that $\lab(y)=\ell$. We shall prove that this conflict should be included to the maximal contained branch. Indeed, the union $\cone{p}{\conflictset}{y}\cup \branch^{(p)}_{\ledger}(x)$ is $\ConflictDAG$-past-closed and is conflict-free as included to $\cone{p}{\ledger}{x}$. It follows that $y\in \branch^{(p)}_{\ledger}(x)$ and $\ell$ is present in the left-hand side of the displayed equation, which contradicts the assumption.
\end{proof}

\subsection{Realities in the  Branch DAG}\label{sec:realities}
In this section, we discuss maximal aggregated branches.
They are branches that present maximal acceptable valid versions of the ledger.

\begin{definition}[Maximal branch and reality]\label{def: reality}
A branch ${B}\in\branchset$ is \emph{maximal} if there exists no other branch $A\in\branchset$ such that  ${B}\subset A$. A maximal branch is called a \textit{reality}.
\end{definition}
Note that a reality always contains the main branch by definition since the empty set is included to all branches. An immediate consequence of the above definition is the following lemma.
\begin{lemma}\label{lem:realityLeaf}
The set of realities equals the set of leaves in the Branch DAG.
\end{lemma}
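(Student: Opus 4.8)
The plan is to prove the two set inclusions separately, after first translating the notion of a leaf of $\BranchDAG$ into a statement about single-conflict extensions. By Definition~\ref{def:branchDAG}, an edge points into a branch $X$ precisely from branches of the form $X\cup\{c\}$ with $c\in\conflictset\setminus X$; hence $X$ is a leaf (in-degree zero) if and only if no such one-element extension $X\cup\{c\}$ is itself a branch. A reality, by Definition~\ref{def: reality}, is a branch $X$ admitting no branch $A$ with $X\subsetneq A$ at all. So the content of the lemma is exactly that being non-extendable by a single conflict coincides with being non-extendable by any set of conflicts.

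The direction ``reality $\Rightarrow$ leaf'' is immediate: if $X$ is maximal, then in particular no branch of the form $X\cup\{c\}$ can exist, so $X$ has in-degree zero and is a leaf.

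For the converse I would argue by contraposition: assuming $X$ is not a reality, I would exhibit a single conflict whose addition yields a branch, contradicting that $X$ is a leaf. Since $X$ is not maximal there is a branch $A$ with $X\subsetneq A$, so $A\setminus X$ is a nonempty finite subset of $\conflictset$ and therefore has a $\ConflictDAG$-maximal element $c\in\maximal{\conflictset}{A\setminus X}$. I would then check that $X\cup\{c\}$ satisfies the two defining properties of a branch (Definition~\ref{def:branch}). Conflict-freeness is automatic because $X\cup\{c\}\subseteq A$ and $A$ is conflict-free. For $\ConflictDAG$-past-closedness, the only new obligation is $\cone{p}{\conflictset}{c}\subseteq X\cup\{c\}$: past-closedness of $A$ gives $\cone{p}{\conflictset}{c}\subseteq A$, and any $y\ne c$ in this past cone satisfies $c\le_{\conflictset}y$, so by $\ConflictDAG$-maximality of $c$ in $A\setminus X$ it cannot lie in $A\setminus X$ and must therefore belong to $X$. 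This produces the branch $X\cup\{c\}$, hence an edge into $X$, the desired contradiction.

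The main obstacle is precisely the choice of the element to adjoin. It is tempting to pick an arbitrary (or $\ConflictDAG$-minimal) element of $A\setminus X$, but past-closedness of $X\cup\{c\}$ forces one to control the elements lying above $c$ in the Conflict-DAG order, i.e.\ those in its past cone; only a $\ConflictDAG$-maximal choice of $c$ guarantees that all of them already sit in $X$. Getting this direction of the partial order right (recall that $\cone{p}{\conflictset}{c}$ consists of the conflicts $y$ with $c\le_{\conflictset}y$) is the one genuinely non-routine point; everything else follows from unwinding the definitions.
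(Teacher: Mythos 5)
Your proof is correct. Note that the paper gives no proof of this lemma at all: it is introduced with the words ``an immediate consequence of the above definition,'' so your argument supplies precisely the details the paper treats as immediate. The direction reality $\Rightarrow$ leaf is indeed trivial, and your contrapositive argument for leaf $\Rightarrow$ reality is the right way to fill the gap: given a branch $A$ with $X\subsetneq A$, you adjoin a $\ConflictDAG$-maximal element $c$ of the finite nonempty set $A\setminus X$; conflict-freeness of $X\cup\{c\}$ is inherited from $A$, and past-closedness holds because every $y\in\cone{p}{\conflictset}{c}\setminus\{c\}$ lies in $A$ (by past-closedness of $A$) but, by maximality of $c$, not in $A\setminus X$, hence lies in $X$. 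This produces an edge from $X\cup\{c\}$ into $X$ in $\BranchDAG$, so $X$ is not a leaf. Your closing remark is also on point: the maximal choice of $c$ is essential (a minimal or arbitrary choice could leave elements of its past cone outside $X\cup\{c\}$), and this single-conflict extension property is the actual mathematical content hidden behind the paper's ``immediate.'' As a side benefit, iterating your extension step shows that the reachability relation of $\BranchDAG$ really is the inclusion order on $\branchset$, a fact that Definition~\ref{def:branchDAG} likewise asserts without proof.
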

\begin{example}
Following the example depicted in Figure~\ref{fig:DAGtransition}, we observe that there are exactly four realities or leaves in the Branch DAG.
\end{example}

The following statement shows a link between realities and maximal independent sets in the Conflict Graph.
\begin{proposition}\label{prop:maximalIndReality}
There is a one-to-one correspondence between maximal independent sets of the Conflict Graph and realities.
\end{proposition}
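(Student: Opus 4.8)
The plan is to show that the correspondence is nothing more than the identity map on subsets of $\conflictset$: I would prove that a set ${B}\subseteq\conflictset$ is a reality if and only if it is a maximal independent set of $\ConflictGraph$. By Definition~\ref{def: Conflict Graph} adjacency in $\ConflictGraph$ is exactly the relation ``conflicting,'' so a set is independent precisely when it is conflict-free. Hence every branch is automatically an independent set, and the only gap between an arbitrary independent set and a branch is the extra requirement of being $\ConflictDAG$-past-closed (Definition~\ref{def:branch}). The entire argument therefore reduces to one \emph{propagation fact}: if a conflict $b$ is conflicting with a conflict $d$ and $c\le_{\conflictset}d$, then $b$ is also conflicting with $c$. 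This is immediate from Proposition~\ref{prop: conflicting transactions}, since the set $C(b)$ of transactions conflicting with $b$ is $\LedgerDAG$-future-closed and $c$ lies in the future cone of $d$ (recall that $\le_{\conflictset}$ is just the restriction of $\le_{\ledger}$ to $\conflictset$, as $\ConflictDAG$ is the minimal subDAG induced by $\conflictset\cup\{\orig\}$). I would also record the fact, stated before Assumption~\ref{ass:constraints2}, that two comparable conflicts are never conflicting.

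First I would show that every reality is a maximal independent set. A reality ${B}$ is a branch, hence conflict-free, hence independent. If it were not a maximal independent set, some $c\in\conflictset\setminus{B}$ would make ${B}\cup\{c\}$ still conflict-free; I would then pass to ${B}':={B}\cup\cone{p}{\conflictset}{c}$. This set is $\ConflictDAG$-past-closed, being a union of past-closed sets (recall $\cone{p}{\conflictset}{c}$ is itself a branch by Remark~\ref{rem:conflictFree}), and it is conflict-free: ${B}$ and $\cone{p}{\conflictset}{c}$ are each conflict-free, while a cross-conflict between some $b\in{B}$ and some $d\in\cone{p}{\conflictset}{c}$ would, via the propagation fact applied to $c\le_{\conflictset}d$, force $b$ to conflict with $c$, contradicting that ${B}\cup\{c\}$ is conflict-free. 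Thus ${B}'$ is a branch with ${B}\subsetneq{B}'$, contradicting the maximality of the reality ${B}$.

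Next I would show that every maximal independent set $S$ is a reality. Being independent, $S$ is conflict-free, so it suffices to verify that $S$ is $\ConflictDAG$-past-closed; then $S$ is a branch, and its maximality as a branch is automatic, since any strictly larger branch would be a strictly larger independent set. For past-closedness, suppose $c\in S$ and $d\in\cone{p}{\conflictset}{c}$ with $d\notin S$, so $c<_{\conflictset}d$. By maximality of $S$, the set $S\cup\{d\}$ is not independent, so some $s\in S$ is conflicting with $d$. Here $s\neq c$, because otherwise $c$ would be conflicting with $d$ despite $c<_{\conflictset}d$ being comparable. The propagation fact applied to $c\le_{\conflictset}d$ then yields that $s$ is conflicting with $c$, giving two distinct conflicting elements $s,c$ of $S$ and contradicting its independence. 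Hence $d\in S$, so $S$ is past-closed and therefore a reality.

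Combining the two inclusions shows that realities and maximal independent sets are literally the same subsets of $\conflictset$, so the identity map is the claimed one-to-one correspondence. I expect the main obstacle to be the first direction: reconciling the two distinct maximality notions requires showing that enlarging a branch by the \emph{entire} past cone $\cone{p}{\conflictset}{c}$ introduces no new conflict, and this stability of conflict-freeness under past-closure is exactly where the future-closedness of conflict sets (Proposition~\ref{prop: conflicting transactions}) carries the weight of the argument.
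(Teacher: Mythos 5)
Your proof is correct, and one of its two directions takes a genuinely different route from the paper's. The direction from realities to maximal independent sets is essentially the paper's argument: both derive a contradiction with branch-maximality by enlarging the reality with the past cone of a compatible conflict, your version merely making explicit the cross-conflict check (via the propagation fact) that the paper compresses into the assertion that the enlarged set is ``clearly'' a branch. The real difference is in the converse direction. The paper sends a maximal independent set $I$ to the aggregated branch $\bigcup_{c\in I}\cone{p}{\conflictset}{c}$ and proves that this union is conflict-free, past-closed and maximal; you instead prove that $I$ itself is already $\ConflictDAG$-past-closed, because a missing past element $d$ of some $c\in I$ would, by maximality of $I$, the $\LedgerDAG$-future-closedness of conflict sets (Proposition~\ref{prop: conflicting transactions}), and the fact that comparable transactions never conflict (Remark~\ref{rem:conflictFree}), force two distinct elements of $I$ to be conflicting. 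Your route yields the sharper formulation that realities and maximal independent sets are the \emph{same} family of subsets of $\conflictset$, with the identity as the bijection; it thereby also settles a point the paper leaves implicit, namely that its two maps $I\mapsto\bigcup_{c\in I}\cone{p}{\conflictset}{c}$ and $R\mapsto R$ are mutually inverse (this needs $\bigcup_{c\in I}\cone{p}{\conflictset}{c}=I$, which is exactly your past-closedness claim, or alternatively follows since $I$ is a maximal independent set contained in that union). What the paper's construction buys in exchange is a direct link to the aggregated-branch decomposition of Lemma~\ref{lem: each branch is aggregate}, with no need to know that maximal independent sets are past-closed.
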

\begin{proof}
 Let $I=\{c_1, \ldots, c_n \} \subseteq \conflictset$ be a maximal independent set in the Conflict Graph. We define the set  ${B}$ as follows
$$
     {B} := {B} (c_1, \ldots, c_n) :=  \cone{p}{\conflictset}{c_1} \cup \cdots \cup  \cone{p}{\conflictset}{c_n}.
$$
Since each past cone is $\ConflictDAG$-past-closed, the union is either. Assume that ${B}$ contains a conflicting pair of transactions, say $d$ and $e$. There must exist conflicts $c_i$ and $c_j$ such that $d \in \cone{p}{\conflictset}{c_i}$ and $e \in \cone{p}{\conflictset}{c_j}$. From Definition~\ref{def:conflictingTx} it follows that $c_i$  and $c_j$ are conflicting which implies a contradiction to the fact that $I$ is an independent set in the Conflict Graph.

The branch ${B}$ is also maximal. To see this assume the existence of a larger branch $A$ containing ${B}$, i.e., ${B}\subset A$ an let $d\in A\setminus{B}$ be a conflict from $A$ not included to ${B}$. Then $d$ and $c_1,\ldots,c_n$ are pairwise indirectly non-conflicting which contradicts the fact that the independent set $I$ is maximal.

Conversely, let ${B}=\{c_1,\ldots, c_n\}$ be a reality. Hence, $c_1,\ldots,c_n$ are not conflicting and $I:={B}$ is an independent set of the Conflict Graph. Toward a contradiction assume that $I$ is not maximal, i.e., there exists $d\in\conflictset \setminus {B}$ such that $d$ and $c_1,\ldots, c_n$ are pairwise non-conflicting. Define $A$ to be $  \cone{p}{\conflictset}{d} \cup \cone{p}{\conflictset}{c_1} \cup \cdots \cup  \cone{p}{\conflictset}{c_n}$. Clearly, $A$ is a branch containing ${B}$ as a subbranch. We arrive to a contradiction with Definition~\ref{def: reality}.
\end{proof}

\begin{definition}[Ledger of a reality]\label{def: R-ledger}
Let $R$ be a reality. We define the $R$-ledger as
\[
\ledger_r({R}) := \{ x\in \ledger: \ \lab^{(p)}(x) \subseteq \{\bot\} \cup \lab({R})\},
\]
where  $\lab({R}) := \bigcup\limits_{c\in {R}} \lab(c).$
\end{definition}

We can now give one of our main results as a direct consequence of the construction the Reality-based Ledger. 
\begin{theorem}\label{thm:Rledger}
Let $R$ be a reality. Then, $\ledger_r(R)$ is a consistent ledger (cf. Definition~\ref{def: consistent ledger}).
\end{theorem}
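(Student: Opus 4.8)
The goal is to show that $\ledger_r(R)$ is a consistent ledger, meaning that $\constraint(\ledger_r(R), x) = \true$ for every $x \in \ledger_r(R)$, where the constraints are now those of Assumption~\ref{ass:constraints2}. Since the syntactic correctness, value-conservation, and unlock-data conditions (points 1--3) are local to each transaction and already hold because $x \in \ledger$ was admitted to the original ledger, the only condition requiring work is the fourth: that each input of $x$ references an output not already consumed within $\cone{p}{\ledger_r(R)}{x} \setminus \{x\}$. So the plan is to reduce everything to verifying this ``no double-spend inside the past cone'' property, now taken relative to the smaller ledger $\ledger_r(R)$.

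\textbf{Main steps.} First I would establish that $\ledger_r(R)$ is past-closed in $\LedgerDAG$: if $x \in \ledger_r(R)$ and $y \le_{\ledger} x$, then $\cone{p}{\ledger}{y} \subseteq \cone{p}{\ledger}{x}$, hence $\lab^{(p)}(y) \subseteq \lab^{(p)}(x) \subseteq \{\bot\} \cup \lab(R)$, so $y \in \ledger_r(R)$. Consequently $\cone{p}{\ledger_r(R)}{x} = \cone{p}{\ledger}{x}$ for every $x \in \ledger_r(R)$, which is the key simplification: the past cone computed inside the restricted ledger agrees with the past cone in the full ledger. Second, I would show that the past cone $\cone{p}{\ledger}{x}$ is conflict-free; this is exactly Remark~\ref{rem:conflictFree}, a consequence of the fourth point of Assumption~\ref{ass:constraints2} applied when $x$ was originally added. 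Combining these two facts, the fourth constraint of Assumption~\ref{ass:constraints2} for $x$ in $\ledger_r(R)$ asks precisely that no input of $x$ has already been consumed in its full-ledger past cone minus $\{x\}$, which is guaranteed by conflict-freeness of $\cone{p}{\ledger}{x}$.

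\textbf{The subtle point.} The delicate part is to confirm that restricting to $\ledger_r(R)$ does not create any \emph{new} double spend that was previously masked, and that it genuinely removes the conflicting alternatives. The definition of $\ledger_r(R)$ keeps exactly those transactions whose entire label footprint $\lab^{(p)}(x)$ lies within $\{\bot\} \cup \lab(R)$, i.e., transactions whose conflict-dependencies are all contained in the reality $R$. Because $R$ is a reality (a maximal branch), by Proposition~\ref{prop:maximalIndReality} its label set corresponds to a maximal independent set of conflicts, so $R$ contains no two conflicting conflicts. I would argue that the real content is the interplay between the local/global past cone agreement and conflict-freeness: since every retained transaction already satisfied the original admission constraint and its past cone is unchanged and conflict-free, the fourth constraint transfers verbatim to $\ledger_r(R)$.

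\textbf{Expected main obstacle.} I expect the hardest step to be the past-closedness argument, specifically making rigorous that $\cone{p}{\ledger_r(R)}{x}$ coincides with $\cone{p}{\ledger}{x}$ rather than being a strict subset; if these cones differed, the fourth constraint relative to $\ledger_r(R)$ could in principle be satisfied vacuously while the retained transactions still conflict among themselves, which would undermine the claim that each reality yields a \emph{genuine} conflict-free UTXO ledger state. Once the cone-agreement is pinned down, the rest follows almost immediately from Remark~\ref{rem:conflictFree}, so the proof should be short, with the monotonicity $\lab^{(p)}(y) \subseteq \lab^{(p)}(x)$ for $y \le_{\ledger} x$ doing most of the work.
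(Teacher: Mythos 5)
There is a genuine gap: you have set up the wrong proof obligation. Consistency in Definition~\ref{def: consistent ledger} is defined via the \emph{original} constraints of Assumption~\ref{ass:constraints} --- in particular its point 4), which demands that the inputs of each transaction reference outputs that exist in $\ledger_r(R)$ and are not consumed by \emph{any other} transaction of $\ledger_r(R)$ --- whereas you declare the target to be the relaxed constraints of Assumption~\ref{ass:constraints2}. The paper's proof goes in the opposite direction: it shows that Assumption~\ref{ass:constraints2} (which $\ledger$ satisfies) \emph{implies} Assumption~\ref{ass:constraints} for the $R$-ledger. Under your reading the theorem would be essentially vacuous: by exactly your cone-agreement argument together with Remark~\ref{rem:conflictFree}, \emph{every} $\LedgerDAG$-past-closed subset of $\ledger$ satisfies point 4) of Assumption~\ref{ass:constraints2} relative to itself, so nothing about $R$ being a reality, or even a branch, is ever used. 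This shows in your own text: the ``subtle point'' paragraph cites conflict-freeness of $R$ (via Proposition~\ref{prop:maximalIndReality}), but the argument you actually run --- past cones are unchanged and conflict-free --- only rules out double spends \emph{inside a single transaction's past cone}. It cannot rule out two retained transactions, neither lying in the other's past cone, spending the same output; that is precisely the situation the theorem must exclude.

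The missing step is short but essential. Suppose $x,y\in\ledger_r(R)$ are directly conflicting. By Remark~\ref{rem: conflicting transaction are not conflicts} both are conflicts, so $\lab(x)\neq\bot$ and $\lab(y)\neq\bot$. Since $x\in\cone{p}{\ledger}{x}$, we have $\lab(x)\in\lab^{(p)}(x)\subseteq\{\bot\}\cup\lab(R)$, hence $\lab(x)\in\lab(R)$, and injectivity of $\lab$ on $\conflictset\cup\{\orig\}$ (Definition~\ref{def:label}) gives $x\in R$; likewise $y\in R$. This contradicts the fact that $R$, being a branch, is conflict-free. Combined with your past-closedness argument (which is correct and supplies the ``existing outputs'' half of point 4 of Assumption~\ref{ass:constraints} --- something the paper's one-line proof actually glosses over), this establishes Assumption~\ref{ass:constraints} for every transaction of $\ledger_r(R)$ and proves the theorem. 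A minor further slip: with the paper's orientation of $\le_{\ledger}$, ancestors are \emph{larger} (the genesis is the unique $\LedgerDAG$-maximal element), so past-closedness should read ``$x\in\ledger_r(R)$ and $x\le_{\ledger}y$ imply $y\in\ledger_r(R)$''; your hypothesis ``$y\le_{\ledger}x$'' is inverted relative to the correct displayed inclusion $\cone{p}{\ledger}{y}\subseteq\cone{p}{\ledger}{x}$.
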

\begin{proof}
We have to prove that Assumption~\ref{ass:constraints2} implies Assumption \ref{ass:constraints} for every $R$-ledger. Constraints $1)-3)$ are trivially satisfied. Constraint $4)$ in Assumption~\ref{ass:constraints}  follows from the fact that a reality is a branch (and, hence, conflict-free) together with Constraint $4)$ in Assumption~\ref{ass:constraints2}.
\end{proof}

Similar to  a conflict-free ledger and Remark~\ref{rem: invariant ledger}, we provide invariance properties of the Reality-based Ledger.
\begin{remark}[Invariance properties of Reality-based Ledger]\label{rem: invariance reality}
The Reality-based Ledger upholds certain invariance properties and can thus be seen as an invariant data structure. More specifically, we notice that 
\begin{enumerate}[leftmargin=*] 
\item the Reality-based Ledger  might depend on the time parameter $t$, i.e. $\ledger=\ledger(t)$;
\item for any given reality $R\in\branchset$ at any given time $t$, it holds that the sum of output values of the state of the $R$-ledger remains constant
$$
\sum_{\substack{o=(\weight,\cond)\\o\in \state(\ledger_r(R))}} \weight = const.
$$
\end{enumerate}
\end{remark}

\section{Operations on the Reality-based Ledger}\label{sec: operations on the ledger}
In this section, we define several operations that we can perform on the Reality-based Ledger. In Section~\ref{sec:ConflictAddition}, we describe what happens when new conflicts are added to the ledger. In Section \ref{sec:RealitySelection} we provide an algorithm for the selection of a reality, i.e., a valid conflict-free ledger state, in the presence of a weight function imposed on transactions and branches. Finally, in Section \ref{sec:ConflictPruning} we describe how and when conflicts can be pruned to maintain a reasonable data  consumption for the operation of the Ledger. We refer to Figure \ref{fig:ProcessFlow} for an overview of the different operations and their dependencies.

\tikzset{%
            base/.style = {rectangle, rounded corners, draw=black,
                           minimum width=4cm, minimum height=1cm,
                           text centered, font=\sffamily},
          decision/.style = {base, diamond, inner sep=-5pt},
  activityStarts/.style = {base, fill=blue!10},
       startstop/.style = {base, fill=red!10},
    activityRuns/.style = {base, fill=green!30},
         processOpt/.style = {base, minimum width=2.5cm, fill=orange!10,
                           font=\ttfamily},
       process/.style = {base, minimum width=2.5cm, fill=orange!25,
                           font=\ttfamily,},
 processExt/.style = {base, minimum width=2.5cm, fill=green!10,
                           font=\ttfamily},                        
}

\begin{figure}
    \centering
\begin{tikzpicture}[node distance=2cm,
    every node/.style={fill=white, font=\sffamily}, align=center]
  \node (start)             [activityStarts]              {New incoming transaction};
  \node (isConflicting)     [decision, below of=start, yshift=-0.5cm]          {Is directly conflicting?};
 \node (updateConflictDAG) [process, below of = isConflicting, yshift=-0.5cm] {Update Conflict DAG; Algorithm \ref{alg: updating Conflict DAG}};
 \node (updateConflictGraph) [processOpt, below of = updateConflictDAG ] {Update Conflict Graph, Algorithm \ref{alg:updateConflictGraph} (optional)}; 
 \node (branchID) [processOpt, below  of = updateConflictGraph] { Update branches of transactions (optional)};
 \node (updateW) [processExt, below of = branchID] { Update weights, e.g. Examples \ref{ex:min} \ref{ex:minTS}, \ref{ex:extConsensus} (external)};
 \node (reality) [processOpt, below of = updateW] {Select Reality, Algorithm \ref{alg:selectionBranch} or \ref{alg:selectionConflictGraph} (optional)};
  \node (pruning) [process, below of = reality] {Prune of Conflict DAG and Conflict Graph, Algorithm \ref{alg: update UTXO DAG after confirmation}};
   
   \node (issue) [startstop, below of = pruning] { Issue  new transaction (optional)};
  \draw[->]             (start.south) -- (isConflicting);
  \draw[->]            (isConflicting.south) -- node {yes} (updateConflictDAG) ;
  \draw[->]           (updateConflictDAG) --  (updateConflictGraph) ;
  \draw[->]             (updateConflictGraph.south) -| (branchID.north) ;
  \draw[->]     (isConflicting.east)  --  node {no} ++(3,0) -- ++(0,-6) |-
                                    (branchID.east);
    \draw[->]             (branchID) -- (updateW);     
    \draw[->]             (updateW) --  (reality);
\draw[->]            (reality) -- (pruning)  ;
     \draw[->]            (pruning) --  (issue);               
\end{tikzpicture}
\caption{Flow of the different operations on the reality-based ledger. }
\label{fig:ProcessFlow}
\end{figure}
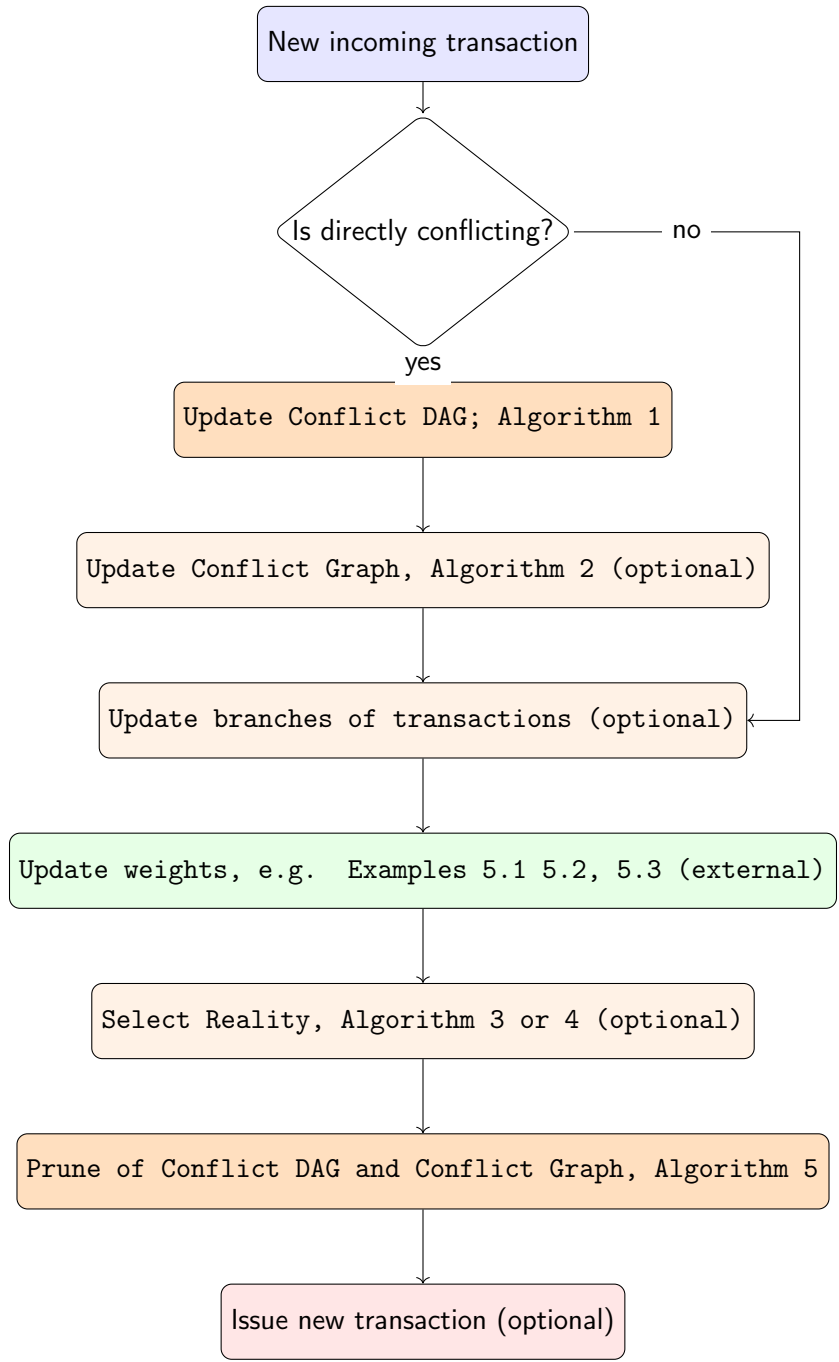

\subsection{Adding Conflicts}\label{sec:ConflictAddition}
In this section, we explain how new  conflicting transactions result in updating the Conflict DAG and the Conflict Graph. Throughout this section we assume  that there is only one new transaction $x$ which is a leaf in the Ledger DAG. In this case, the Ledger DAG is updated with only directed edges of the form $(x,y)$ for some $y\in \ledger$. 

First, we introduce a concept of closest conflicts in the past and future cones of a transaction which will be used for updating the Conflict DAG.

\begin{definition}[Closest conflicts in past and future cones]\label{def: closest conflicts}
For a transaction $z\in \ledger$, we define $\conflictsInCone{p}{\ledger}{z}$ to be $\min_{\conflictset}(S^{(p)})$ for $S^{(p)}:=\{y\in\conflictset\cup \{\orig\}:\ y<_{\ledger} z \}$, i.e., $S^{(p)}$ is the set of all conflicts in $\cone{p}{\ledger}{z}/\{z\}$.  Similarly, we define $\conflictsInCone{f}{\ledger}{z}$ to be $\max_{\conflictset}(S^{(f)})$ for $S^{(f)}:=\{y\in\conflictset:\ y<_{\ledger} z \}$, i.e., $S^{(f)}$ is the set of all conflicts in $\cone{f}{\ledger}{z}/\{z\}$.
\end{definition}

This notion resembles Definition~\ref{def:label set} and now we give a remark similar to Remark~\ref{rem: how to update the set of labels} on how to find these sets.  
\begin{remark}
Observe that $\conflictsInCone{p}{\ledger}{z}$  and   $\conflictsInCone{f}{\ledger}{z}$ can be obtained by first finding closest conflicts using breadth-first search (BFS) and reverse breadth-first search (RBFS). 
That means that in both cases we traverse $\LedgerDAG$ starting  at $z$ and stop traversing through transactions that are conflicts. Eventually, we identify the minimal/maximal elements in the obtained sets. 
\end{remark}

We provide a possible scheme to update the Conflict DAG in Algorithm~\ref{alg: updating Conflict DAG}.
Let $x$ be a new transaction and $Y\subseteq \ledger$ be the set of all transaction conflicting with $x$. Then, in Algorithm~\ref{alg: updating Conflict DAG}, we

\begin{enumerate}[leftmargin=*]
    \item update the set of vertices by adding the conflicts $x\cup Y$;
    \item add edges to the Conflict DAG using the notion of closest  conflicts in the past and future cones of transaction from $x\cup Y$;
    \item remove unnecessary edges in the Conflict DAG to keep it in the minimal form.
\end{enumerate} 

\begin{algorithm}[t]
\caption{Algorithm to update Conflict DAG}
\label{alg: updating Conflict DAG}
\KwData{Conflict DAG $\ConflictDAG = (\conflictset\cup \{\orig\}, E)$; new transaction  $x\in \ledger$ that is directly conflicting with transactions $Y\subseteq \ledger$}
\KwResult{updated Conflict DAG $\ConflictDAG = (\conflictset\cup \{\orig\}, E)$}
$\conflictset\gets \conflictset\cup\{x\}\cup Y$ \\

\For{$\forall y\in Y\cup \{x\}$}{
\For{$\forall v\in \conflictsInCone{p}{\ledger}{y}$}{
$E\gets E\cup \{(y,v)\}$\\
}
\For{$\forall v\in \conflictsInCone{f}{\ledger}{y}$}{
$E\gets E\cup \{(v,y)\}$
}
}

\For{$\forall y\in Y$}{
\For{$\forall p\in \parent{\conflictset}{y}$}{
\For{$\forall c\in \child{\conflictset}{p}$} {
\If{$c\in \cone{f}{\conflictset}{y}$}{
$E\gets E\setminus\{(c,p)\}$\label{it: main case}
}
}
}
}
\end{algorithm}
\begin{lemma}
The resulting graph $\ConflictDAG=(\conflictset\cup \orig, E)$ in Algorithm~\ref{alg: updating Conflict DAG} is the Conflict DAG as defined in Definition~\ref{def: Conflict DAG}.
\end{lemma}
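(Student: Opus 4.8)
The plan is to treat Algorithm~\ref{alg: updating Conflict DAG} as an incremental update: I assume the input graph $\ConflictDAG=(\conflictset_{\mathrm{old}}\cup\{\orig\},E)$ is already the correct Conflict DAG for the ledger before $x$ is appended, and I prove that the output is the minimal subDAG of $\LedgerDAG$ induced by $\conflictset_{\mathrm{new}}\cup\{\orig\}$, where $\conflictset_{\mathrm{new}}=\conflictset_{\mathrm{old}}\cup\{x\}\cup Y$. Let $E^\ast$ denote the edge set of this target DAG; by Definition~\ref{def: minimal subDAG}, $(v,u)\in E^\ast$ precisely when $v<_{\ledger}u$ and no conflict $w\in\conflictset_{\mathrm{new}}\cup\{\orig\}$ satisfies $v<_{\ledger}w<_{\ledger}u$ (I call such a pair \emph{direct}). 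Before touching the loops I would record three structural facts. First, appending $x$ creates directly-conflicting pairs only between $x$ and $Y$, so the only newly created conflicts are $x$ together with those members of $Y$ not previously conflicts; in particular $\conflictset_{\mathrm{new}}\setminus\conflictset_{\mathrm{old}}\subseteq\{x\}\cup Y$. Second, since $x$ is a leaf of $\LedgerDAG$ it is $\LedgerDAG$-minimal, so $x$ can never sit strictly between two conflicts; this is exactly why the removal loop may safely range over $Y$ only. Third, $\orig$ is the unique maximal element and so behaves like an ordinary conflict-vertex (it can only be an ancestor), which is why the closest-conflict operators of Definition~\ref{def: closest conflicts} include it.

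Next I would analyse the first loop. Because line~1 updates $\conflictset$ before anything else, both $\conflictsInCone{p}{\ledger}{\cdot}$ and $\conflictsInCone{f}{\ledger}{\cdot}$ are evaluated with respect to $\conflictset_{\mathrm{new}}$. For $z\in\{x\}\cup Y$, the set $\conflictsInCone{p}{\ledger}{z}$ consists of the $\ConflictDAG$-minimal conflicts in the past cone of $z$, i.e.\ exactly the conflicts $u$ with $z<_{\ledger}u$ and no intermediate conflict; hence each such $(z,u)$ is direct, and dually for $\conflictsInCone{f}{\ledger}{z}$. Therefore the first loop adds precisely the direct pairs incident to $\{x\}\cup Y$, and adds nothing outside $E^\ast$. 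Since any edge of $E^\ast$ with both endpoints in $\conflictset_{\mathrm{old}}\cup\{\orig\}$ is automatically direct with respect to the smaller set $\conflictset_{\mathrm{old}}\cup\{\orig\}$ and hence already lies in the input $E$, I would conclude that after the first loop the edge set equals $E\cup\big(E^\ast\cap\{\text{edges meeting }\{x\}\cup Y\}\big)$; this contains all of $E^\ast$, and its only possible spurious edges are input edges that are no longer direct in $\conflictset_{\mathrm{new}}$.

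It remains to show the second loop deletes exactly those stale input edges. I would first check soundness: reading the guard strictly (i.e.\ with $c\neq y$, so the edge $(y,p)$ itself is never a candidate), the loop removes $(c,p)$ only when, for some $y\in Y$, $p\in\parent{\conflictset}{y}$ (hence $y<_{\ledger}p$) and $c\in\child{\conflictset}{p}$ with $c\in\cone{f}{\conflictset}{y}$ and $c\neq y$ (hence $c<_{\ledger}y$); together these give $c<_{\ledger}y<_{\ledger}p$ with $y\in\conflictset_{\mathrm{new}}$, witnessing that $(c,p)$ is not direct and so not in $E^\ast$. Conversely, for completeness, take any input edge $(c,p)$ no longer direct in $\conflictset_{\mathrm{new}}$; the intermediate conflict must be new and hence, by the first structural fact and minimality of $x$, lie in $Y$. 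Among all new conflicts strictly between $c$ and $p$ I would pick the one $y^\ast$ closest to $p$; then no conflict lies strictly between $y^\ast$ and $p$, so $p\in\parent{\conflictset}{y^\ast}$ after the first loop, while $(c,p)$ is still an incoming edge of $p$ when $y^\ast$ is processed, because the only edges ever deleted lie outside $E^\ast$ and $(y^\ast,p)\in E^\ast$ is in particular never deleted. Thus the guard fires for $(c,p)$ and it is removed.

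Combining the two passes gives output $=E^\ast$, the desired Conflict DAG. The step I expect to be the main obstacle is the completeness direction for the second loop: one must exhibit, for each stale edge, a witness in $Y$ whose guard genuinely fires, and show this is unaffected both by the mutation of the parent and child sets as edges are deleted during the pass and by the fact that $Y$ may mix transactions that were already conflicts with genuinely new ones. The choice of $y^\ast$ as the topmost intermediate conflict, together with the invariant that no edge of $E^\ast$ is ever deleted (so the incidences $(y^\ast,p)$ and the inspected child edge persist until the guard is evaluated), is the device that resolves this.
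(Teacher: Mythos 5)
Your proof is correct and follows the paper's two-pass outline (first loop adds exactly the direct edges incident to $\{x\}\cup Y$; second loop deletes exactly the stale old edges), but your execution of the crucial completeness step is genuinely different from---and in fact sounder than---the paper's. For a stale edge $(c,p)$ the paper chooses as witness a $\ConflictDAG$-\emph{minimal} conflict $y^*$ among those with $c<_{\conflictset}y<_{\conflictset}p$ (so that $c\in\child{\conflictset}{y^*}$) and then asserts that ``one can check'' the edge is removed. That witness does not in general fire the guard: the removal step only touches edges $(c,p)$ with $p\in\parent{\conflictset}{y}$, and when two new conflicts are nested, $c<_{\ledger}y_1<_{\ledger}y_2<_{\ledger}p$ with $y_1,y_2\in Y$, the minimal witness $y_1$ has $p\notin\parent{\conflictset}{y_1}$ because the edge $(y_1,p)$ is blocked by $y_2$; only the \emph{maximal} intermediate works. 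Your choice of the topmost intermediate, together with your invariant that no edge of the target DAG $E^\ast$ is ever deleted (so the incidences needed by the guard persist regardless of the order in which $Y$ is processed), is precisely the device that makes the guard fire, so your argument supplies the justification the paper only gestures at. You are also right to read the guard strictly: since $y\in\cone{f}{\conflictset}{y}$, a literal reading would delete the freshly added edge $(y,p)$ itself, an issue the paper passes over silently. One point you should still make explicit: firing the guard also requires $c\in\cone{f}{\conflictset}{y^\ast}$, whereas what you derived is the ledger-order fact $c<_{\ledger}y^\ast$; the bridge is that the minimal subDAG $E^\ast$ preserves reachability among its vertices, so a directed path from $c$ to $y^\ast$ lies in $E^\ast$ and hence, by your invariant, in the current graph at the moment $y^\ast$ is processed. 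With that sentence added, your completeness direction is airtight.
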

\begin{proof}
Recall that the Conflict DAG is the minimal subDAG of the Ledger DAG induced by $\conflictset$. It is sufficient to check that we remove all unnecessary edges after adding correct edges using the notions $\conflictsInCone{p}{\ledger}{y}$ and $\conflictsInCone{f}{\ledger}{y}$.

Assume that some edge $(c,p)$, which was an edge in the original Conflict DAG, has to be removed to keep the Conflict DAG in the minimal form. This means that both $c$ and $p$ were already conflicts  such that $c<_{\conflictset} p$ with no other conflict between them  and now some conflict $y\in Y$ satisfies $c<_{\conflictset}y<_{\conflictset} p$.  Let $y^*$ be a $\ConflictDAG$-minimal conflict among all  $y$  satisfying the latter inequality, i.e, $c\in \child{\conflictset}{y^*}$. One can check that all such edges $(c,p)$ are removed from the Conflict DAG in line~\ref{it: main case} of Algorithm~\ref{alg: updating Conflict DAG}.
\end{proof}

Assume that Algorithm~\ref{alg: updating Conflict DAG} is already completed. In Algorithm~\ref{alg:updateConflictGraph} we describe a possible procedure to update the Conflict Graph.  In this algorithm, we  only add new edges to the Conflict Graph $\ConflictGraph$. Specifically, all conflicts in the future cones of conflicts in $Y$ become adjacent with $x$ in $\ConflictGraph$. In addition, all conflicts in $x\cup Y$ inherit $\ConflictGraph$-neighbours from their parents in $\ConflictDAG$.

\begin{algorithm}[t]
\caption{Algorithm to update Conflict Graph}
\label{alg:updateConflictGraph}
\KwData{Conflict Graph $\ConflictGraph = (\conflictset_{old}, E)$; new transaction  $x\in \ledger$ that is directly conflicting with transactions $Y\subseteq \ledger$; updated Conflict DAG $\ConflictDAG = (\conflictset\cup \orig, E)$;}
\KwResult{updated Conflict Graph $\ConflictGraph = (\conflictset, E)$}
\For{$\forall y\in Y$}{
\For{$\forall z\in \cone{f}{\conflictset}{y}$}{
$E\gets E\cup \{(x,z)\}$\label{algUpdateConflictGraph: x is directly conflicting}
}
}

\For{$\forall y\in Y\cup\{x\}$}{
{\scriptsize{\Comment*[l]{assume $\ConflictDAG$-descending order}}}
\For{$\forall p\in \parent{\conflictset}{y}$}{
\For{$\forall z\in N_{\conflictset}(p)$}{
{\scriptsize{\Comment*[l]{$N_{\conflictset}(p)$ denotes the set of neighbours of $p$ in $\ConflictGraph$}}}
$E\gets E\cup \{(y,z)\}$ \label{algUpdateConflictGraph: remaining}
}
}
}
\end{algorithm}
\begin{lemma}
The resulting graph $\ConflictGraph=(\conflictset, E)$ in Algorithm~\ref{alg:updateConflictGraph} is the Conflict Graph as defined in Definition~\ref{def: Conflict Graph}.
\end{lemma}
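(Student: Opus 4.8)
The plan is to establish the two inclusions separately: \emph{soundness}, that every edge the algorithm inserts into $E$ joins two conflicting conflicts, and \emph{completeness}, that every pair of conflicting conflicts is joined by an edge. A useful preliminary reduction comes from the hypothesis that $x$ is a leaf of the Ledger DAG: adding $x$ alters no past cone of a pre-existing transaction, so the conflicting relation between two vertices of $\conflictset_{old}$ is unchanged, and every edge that genuinely needs to be created is incident to a vertex of $\{x\}\cup Y$. I would also record the elementary monotonicity fact, immediate from Definition~\ref{def:conflictingTx}: if $c\le_{\conflictset}c'$ and $c'$ is conflicting with $d$, then $c$ is conflicting with $d$.

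Next I would prove the structural characterization underlying both loops: two conflicts $c,d\in\conflictset$ are conflicting if and only if there is a directly conflicting pair $(a,b)$ with $a\in\cone{p}{\conflictset}{c}$ and $b\in\cone{p}{\conflictset}{d}$. This is a direct rewriting of Definition~\ref{def:conflictingTx} together with the fact that the conflicts lying in the Ledger past cone of $c$ are exactly its ancestors in $\ConflictDAG$, i.e. the elements of $\cone{p}{\conflictset}{c}$. Splitting the ancestor $a$ into the cases $a=c$ and $a$ belonging to the past cone of a parent of $c$ yields the recursion
\[
N_{\conflictset}(c)=\bigcup_{b\in D(c)}\cone{f}{\conflictset}{b}\ \cup\ \bigcup_{p\in\parent{\conflictset}{c}}N_{\conflictset}(p),
\]
where $D(c)$ denotes the set of conflicts directly conflicting with $c$. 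This recursion is precisely the shape of Algorithm~\ref{alg:updateConflictGraph}: the first union corresponds to loop~1 and the second to loop~2.

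Soundness is then routine from the monotonicity fact. An edge $(x,z)$ inserted on line~\ref{algUpdateConflictGraph: x is directly conflicting} has $z\in\cone{f}{\conflictset}{y}$ for some $y\in Y$; since $x$ is directly conflicting with $y$ and $z\le_{\conflictset}y$, the vertices $x$ and $z$ are conflicting. An edge $(y,z)$ inserted on line~\ref{algUpdateConflictGraph: remaining} has $z\in N_{\conflictset}(p)$ for a parent $p$ of $y$; since $p$ and $z$ are conflicting and $y\le_{\conflictset}p$, the vertices $y$ and $z$ are conflicting.

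For completeness I would induct along the $\ConflictDAG$-descending order used in loop~2, and this is where the argument needs the most care. The key simplification is that, because $x$ is a leaf, the only new directly conflicting pairs are the pairs $(x,y)$ with $y\in Y$; hence $D(x)=Y$, any $y\in Y$ that was not previously a conflict satisfies $D(y)=\{x\}$, and $\cone{f}{\conflictset}{x}=\{x\}$. Consequently the direct-conflict term of the recursion contributes, for $x$, exactly $\bigcup_{y\in Y}\cone{f}{\conflictset}{y}$ and, for each $y\in Y$, only the vertex $x$ — precisely the edges produced by loop~1 (run before loop~2). The inherited term $\bigcup_{p}N_{\conflictset}(p)$ is produced by loop~2, and the descending order is exactly what guarantees that every parent $p$ of a processed vertex is itself treated earlier, so that its neighbour list is already final when read on line~\ref{algUpdateConflictGraph: remaining}. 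The last point to settle — and the main obstacle — is that the old conflicts outside $\{x\}\cup Y$ need no reprocessing in loop~2: such a conflict acquires a new neighbour only by gaining a newly created conflict-ancestor $y\in Y$, and since $D(y)=\{x\}$ for such a new $y$, the only neighbour it gains is $x$, which loop~1 has already inserted because the conflict lies in $\cone{f}{\conflictset}{y}$. Combining soundness and completeness identifies $E$ with the edge set of Definition~\ref{def: Conflict Graph}.
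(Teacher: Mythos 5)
Your soundness half, the leaf reduction, and the recursion
\[
N_{\conflictset}(c)=\bigcup_{b\in D(c)}\cone{f}{\conflictset}{b}\ \cup\ \bigcup_{p\in\parent{\conflictset}{c}}N_{\conflictset}(p)
\]
are correct and formalize the same decomposition the paper uses (loop~1 realizes the direct term, loop~2 the inherited term, processed in $\ConflictDAG$-descending order). The gap sits exactly where you place ``the main obstacle'', and both claims you use to dispose of it are false. First, the claim that an old conflict outside $\{x\}\cup Y$ acquires a new neighbour only by gaining a newly created conflict-ancestor, hence gains only $x$. Counterexample: let $\orig$ have outputs $o_1,o_2$; let $a$ and $b$ both spend $o_1$ (so $\conflictset_{old}=\{a,b\}$), and let $y'$ spend $o_2$ together with an output of $b$, so that $y'\notin\conflictset_{old}$. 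When $x$ arrives spending $o_2$, we get $Y=\{y'\}$, and the edge $(y',a)$ becomes required, witnessed by the old directly conflicting pair $(a,b)$ with $b\in\cone{p}{\conflictset}{y'}$. The old conflict $a$ gains the new neighbour $y'\neq x$ although its past cone is unchanged and contains no new conflict. The edge is in fact created, but from the other side: loop~2, processing $y'$, reads $N_{\conflictset}(b)\ni a$ at the old parent $b$. So the true reason old conflicts need no reprocessing is not the one you give; it is that every new edge has an endpoint among the new vertices and is inserted from that endpoint.

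Second, and more damaging, the claim that descending order guarantees each parent's ``neighbour list is already final when read''. Parents may be old conflicts, which loop~2 never processes at all, and their lists need not be final when read. Extend the example: $\orig$ has a third output $o_3$, $a$ has a second output, $y''$ spends $o_3$ and that output of $a$, and $x$ spends both $o_2$ and $o_3$. Then $Y=\{y',y''\}$, both newly created and mutually incomparable, with $\parent{\conflictset}{y'}=\{b\}$ and $\parent{\conflictset}{y''}=\{a\}$, and the edge $(y',y'')$ is required (witness $(a,b)$ again). Whichever of $y',y''$ is processed first reads a parent list that still lacks the other; the edge $(y',y'')$ only appears when the second one is processed, via the edge the first one deposited at that old parent in the meantime. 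The algorithm is thus correct for a reason your proof never establishes: one needs an induction along the processing order showing that every required edge with an old--old witness is present by the time the later of its new endpoints is processed, using that old--old edges exist from the start and that new ancestors are handled first (this is what the descending order actually buys). The paper's proof uses the same decomposition and classifies needed edges by their witnessing directly conflicting pair; it is also terse on this temporal point, but it does not rest on either of your two claims.
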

\begin{proof}
By Definition~\ref{def: Conflict Graph}, if two conflicts $c_1,c_2\in \conflictset$ are connected by an edge in the Conflict Graph, then there exist $e_1,e_2\in \conflictset$ with $c_1\le_{\conflictset} e_1$ and $c_2\le_{\conflictset} e_2$ such that $e_1$ and $e_2$ are directly conflicting. Let $E'$ denote the set of edges to be included to the Conflict Graph after transaction $x$ arrives. We claim that any edge from $E'$ should contain at least one conflict from the set $x\cup Y$. This is true since $x$ is a leaf in both the Ledger DAG and the Conflict DAG.

Let us start with considering edges from $E'$ of type $(x,z)$ for some $z\in\conflictset$ such that there exists $y\in \conflictset$ with $z\le_{\conflictset} y$, and such that $x$ and $y$ are directly conflicting. 
Clearly, $y$ has to be from $Y$ since $x$ is directly conflicting with transactions from $Y$ only.
In line~\ref{algUpdateConflictGraph: x is directly conflicting} of Algorithm~\ref{alg:updateConflictGraph}, by traversing over all transactions $z$ in the Conflict DAG that are contained in the future cones of conflicts from $Y$, we add all such edges $(x,z)$ to $E$.

Observe that there is no edges in $E'$ of type $(y,z)$ for $y\in Y$ and $z\in \conflictset\setminus\{x\}$ such that
there exists $w\in \conflictset$ with $z\le_{\conflictset} w$  and such that $y$ and $w$ are directly conflicting. That is true since we do not include any edge in the Ledger DAG that contains $y\in Y$ when transaction $x$ arrives.

Thus, it remains to add edges of type $(y,z)$ with $y\in Y\cup \{x\}$ such that there exist $e_1,e_2\in \conflictset$ with $y<_{\conflictset} e_1$ and $z<_{\conflictset} e_2$ such that $e_1$ and $e_2$ are directly conflicting. Note that the above inequalities in the partial order relations are strict and we can utilize the notion of parents in $\ConflictDAG$ to add edges recursively. Thereby, by assuming some  $\ConflictDAG$-descending order over the set $Y\cup\{x\}$, we iteratively perform the following step in~line~\ref{algUpdateConflictGraph: remaining} of~Algorithm~\ref{alg:updateConflictGraph}.  For every conflict  $y\in  Y\cup\{x\}$, the set of $\ConflictGraph$-neighbors of $y$  is updated by looking at $\ConflictGraph$-neighbours of parents  of $y$ in $\ConflictDAG$ as follows 
$$
N_{\conflictset}(y)\gets N_{\conflictset}(y) \cup \left\{\bigcup_{p\in\parent{\conflictset}{y}} N_{\conflictset}(p)\right\}.
$$
\end{proof}

\subsection{Reality Selection Algorithms}\label{sec:RealitySelection}

In a system with a conflict-free UTXO Ledger, the account owners can learn about their current balance by inspecting the state of an instance of a ledger, see Section \ref{sec: Standard UTXO}. In the case of the Reality-based Ledger this is more complicated, since only a single reality represents a valid conflict-free ledger, and thus can yield a valid ledger state.
It is, therefore, up to the operator of the ledger instance, to choose which ledger state to evaluate to inform the account owners about their balance. Alternatively, and in a more trustless fashion, the account owner and the operator of the ledger instance constitute the same entity. 

In this section, we propose reality selection algorithm that construct a preferred reality by utilizing a weight function for transactions. We impose several natural constraints on the weight function.
\begin{ass}\label{ass: weight function on transaction}
We assume that there exists a weight function on the set of transactions $\AW: \ledger\to [0,1]$ which satisfies the following properties
\begin{enumerate}[leftmargin=*]
\item unitarity: $\AW (\orig)=1;$
\item monotonicity: for any two transactions  $x, y\in \ledger$ such that $x \le_{\ledger} y$, it holds that 
    $$
    \AW (x) \le \AW(y);
    $$
\item consistency: let $x_1,\ldots, x_s$ be pairwise conflicting transactions.\footnote{We say that transactions  $S\subseteq \ledger$ are pairwise conflicting if any pair of transactions $x, y\in S$ are conflicting. } Then it holds that 
$$
\sum_{i=1}^{s}\AW(x_i) \le 1.
$$
\end{enumerate} 
\end{ass}
We naturally extend the domain of the function $\AW$ to the set of all branches $\branchset$ as follows. For a branch $B\in\branchset$, we define $\AW(B)$ to be 
$$
\AW\left(B\right):=\min_{x\in B}\AW(x).
$$

\begin{remark}\label{rem: weight function on branches}
A weight function $\AW$ induces a weight function on the set of branches $\branchset$ with the following monotonicity property: let ${B}_1, {B}_2\in \branchset$ such that ${B}_1 \subseteq {B}_2$, it holds that 
\[
    \AW ({B}_1) \geq \AW({B}_2).
  \]  If the weight function satisfies Assumption \ref{ass: weight function on transaction}, we also have that  $\AW ({B}_1) + \AW({B}_2)\leq 1$ for any two conflicting branches ${B}_1, {B}_2\in \branchset$. We also observe that for a conflict $c\in\conflictset$, $\AW(c)=\AW(\cone{p}{\conflictset}{c})$.
\end{remark}

Let us give some examples of weight functions satisfying Assumption \ref{ass: weight function on transaction}.

\begin{example}[Minimal hash]\label{ex:min}
Transactions can be ordered, relatively to each other, through several means. One such way is to hash the content of the transaction, see Remark \ref{rem: hash function}. Here we assume the existence of such a function $\hash: \ledger \to \{0,1\}^h$. We define the weight function using the following steps:
\begin{enumerate}[leftmargin=*]
    \item $\AW(\orig)=1$;
    \item for every $x \in \C$ we set $\AW(x)=1$ if 
    \[
    \hash(x)= \min_{ y \in N_{\conflictset}(x)} \hash(y),
    \]
    and $\AW(x)=0$ otherwise;
    \item inductively starting from the genesis for every $x \in \ledger \setminus \C$
     we set 
    \[\AW(x)= \min_{p \in \parent{\ledger}{x}} \AW(p)\] and update  the weight of $x\in  \C$ to 
    \[\AW(x)= \min_{p \in \parent{\ledger}{x}} \AW(p)\] if $\AW(x)$ was set to $1$ in the second step. 
\end{enumerate}
\end{example}
\begin{example}[Minimal timestamp]\label{ex:minTS}
Transactions can carry additional information, for instance, timestamps. These can be used to decide between two conflicting transactions. Replacing the hashes by timestamps in Example~\ref{ex:min} yields a weight function based on the timestamps of the transactions.
\end{example}
The two examples above make the most sense in a distributed system under the assumptions of eventual consistency. For applications in the DLT space, more efficient and robust weights are appropriate.

\begin{example}[External consensus]\label{ex:extConsensus}
An external consensus protocol can determine the weights. For example, nodes could agree on the weights via additional direct communication and employ Byzantine-fault-tolerance mechanisms. Weights can also be inherited by the data structure that carries the ledger; for instance, $\AW(x)=1$ if $x$ is contained in the longest chain, \cite{nakamoto2008bitcoin}, or in the heaviest subtree, \cite{GHOST}; and $\AW(x)=0$ otherwise. Finer weights can be obtained using the distance between the longest and second-longest chain. 
\end{example}

\begin{example}[Approval weight]
The monotonicity property, see  Assumption~\ref{ass: weight function on transaction}, suggests that we can define the weights  recursively using the underlying DAG structure of the  DLT. This definition enables an internal consensus mechanism on the weights and, therefore, on the preferred reality. These ideas are expanded in detail in~\cite{OTV}.
\end{example}

To determine which reality an operator of the ledger should prefer, we propose to perform the recursive exploration algorithm described in Algorithm~\ref{alg:selectionBranch}. In this algorithm, we start at the main branch of the Branch DAG and walk on this graph until we reach a leaf. The algorithm prefers to go to the child with the highest value of the weight function.
We observe that the resulting ${R}$ is a maximal branch or a reality by construction. One can readily see that the provided algorithm has reasonable complexity despite the fact that the Branch DAG can be exponentially large in the number of conflicts (cf. Remark~\ref{rem: complexity of Branch DAG}). Indeed,
the number of iterations in the while-loop is bounded by the depth of the Branch DAG which is at most $|\conflictset|$. 
The number of elements in $\child{\branchset}{R}$ is also bounded by $|\conflictset|$.
Thereby, the complexity of Algorithm~\ref{alg:selectionBranch} can be estimated as $O(|\conflictset|^2)$. These observations are summarized below.

\begin{algorithm}[t]
\caption{Reality selection in Branch DAG}
\label{alg:selectionBranch}
\KwData{Branch DAG $\BranchDAG = (\branchset, E)$}
\KwResult{reality ${R}\in \branchset$}
${R}\gets \emptyset$ {\scriptsize{\Comment*[r]{{main branch in $\BranchDAG$}}}}
\While{${R}$ is not a leaf in $\BranchDAG$}{
${B}^* \gets \argmax\{\AW({B}): {B} \in \child{\branchset}{R} \}$ {\scriptsize{\Comment*[r]{use $\min\hash({B}\setminus {R})$ for breaking ties }}}
${R}\gets {B}^*$
}
\end{algorithm}
\begin{proposition}\label{prop: complexity of algorithm}
The resulting set ${R}$  in Algorithm~\ref{alg:selectionBranch} is a reality. The complexity of this algorithm is $O(|\conflictset|^2)$.
\end{proposition}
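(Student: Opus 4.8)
The plan is to establish the two assertions separately: first that the output $R$ is a reality, and then that the running time is $O(|\conflictset|^2)$. For the correctness claim I would argue that the while-loop maintains the invariant that $R$ is always a branch. Initially $R=\emptyset$ is the main branch, which is a branch by Definition~\ref{def:branch}. Each iteration replaces $R$ by some child $B^*\in\child{\branchset}{R}$, and by Definition~\ref{def:branchDAG} every child of a branch is itself a branch (it has the form $R\cup\{c\}$ for a conflict $c$, hence a vertex of $\BranchDAG$). The loop terminates only when $R$ is a leaf of $\BranchDAG$, i.e.\ when $\child{\branchset}{R}=\emptyset$, which by Definition~\ref{def:branchDAG} means there is no conflict $c\in\conflictset\setminus R$ such that $R\cup\{c\}$ is a branch. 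By Definition~\ref{def: reality} and Lemma~\ref{lem:realityLeaf}, the leaves of the Branch DAG are exactly the realities, so the returned $R$ is a reality. I would also note that the loop terminates because each step strictly increases $|R|$ by one and $|R|\le|\conflictset|$.

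For the complexity bound I would count the total work as the number of while-iterations times the cost per iteration. The depth argument already given in the surrounding text bounds the number of iterations: since $|R|$ strictly increases and is at most $|\conflictset|$, the loop runs at most $|\conflictset|$ times. Within each iteration the dominant cost is computing $\argmax\{\AW(B):B\in\child{\branchset}{R}\}$, and since $|\child{\branchset}{R}|\le|\conflictset|$, evaluating and comparing the weights over all children costs $O(|\conflictset|)$ per iteration (treating each weight lookup and comparison as unit cost). Multiplying the two bounds gives the claimed $O(|\conflictset|^2)$.

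I would expect the main subtlety, rather than a genuine obstacle, to lie in justifying that $|\child{\branchset}{R}|\le|\conflictset|$ and that each child's weight is available in unit time. The child bound follows because distinct children correspond to distinct conflicts $c$ adjoined to $R$, so there are at most $|\conflictset|$ of them; I would state this explicitly since the Branch DAG itself can be exponentially large (Remark~\ref{rem: complexity of Branch DAG}), and the whole point is that the algorithm never materializes the full DAG but only explores one child-set at a time. The unit-cost assumption for $\AW(B)$ is implicit in the weight-function framework of Assumption~\ref{ass: weight function on transaction} and Remark~\ref{rem: weight function on branches}; I would either take it as given or remark that $\AW(B)=\min_{x\in B}\AW(x)$ can be maintained incrementally as $R$ grows, so no full recomputation is needed. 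The tie-breaking rule using $\min\hash(B\setminus R)$ does not affect either the correctness or the asymptotic cost, so I would only mention it in passing.
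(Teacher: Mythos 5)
Your proposal is correct and follows essentially the same route as the paper, which argues correctness ``by construction'' (the walk descends from the main branch to a leaf, and leaves of the Branch DAG are exactly the realities by Lemma~\ref{lem:realityLeaf}) and obtains the $O(|\conflictset|^2)$ bound from the same two counts: at most $|\conflictset|$ iterations and at most $|\conflictset|$ children per iteration. Your version merely makes explicit what the paper leaves implicit (the loop invariant that $R$ stays a branch, termination via $|R|$ strictly increasing, and the unit-cost assumption on weight lookups), which is a welcome but not different argument.
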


As there is a one-to-one correspondence between realities and maximal independent sets of the Conflict Graph by Proposition~\ref{prop:maximalIndReality}, we propose an alternative reality selection procedure based on the Conflict Graph in Algorithm~\ref{alg:selectionConflictGraph}. In this algorithm, we start with the empty set and iteratively construct a subset $R$ of conflicts. Specifically, we add a conflict to this set if this conflict is not conflicting with $R$ and attains the highest value of the weight function. By construction, Algorithm~\ref{alg:selectionConflictGraph}  leads to a maximal independent set in the Conflict Graph or a reality in the Branch DAG. The number of iterations in the while-loop is bounded by $|\conflictset|$ and the number of $\ConflictGraph$-neighbours is also bounded by $|\conflictset|$. Thus, it is possible to implement this algorithm with complexity $O(|\conflictset|^2)$.  In the following statement we verify that the outcomes of the two algorithms coincide.

\begin{algorithm}[t]
\caption{Reality selection in Conflict Graph}
\label{alg:selectionConflictGraph}
\KwData{Conflict Graph $\ConflictGraph=(\conflictset,E)$}
\KwResult{reality ${R}\in \branchset$}
${R}\gets \emptyset$\\
$U\gets \conflictset$\\
\While{$|U|\neq 0$}{ \label{line: minimal hash among maximal remaining elements}
$c^*\gets \argmax \{ \AW(c): c\in \max_{\conflictset}(U)\}$ {\scriptsize{\Comment*[r]{use $\min\hash(c)$ for breaking ties}}}
${R} \gets {R} \cup \{c^*\}$\\ $U \gets U  \setminus \{N_{\conflictset}(c^{*}) \cup\{c^{*}\}\}$
}
\end{algorithm}

\begin{theorem}
Algorithms~\ref{alg:selectionBranch} and~\ref{alg:selectionConflictGraph} provide the same reality as output.
\end{theorem}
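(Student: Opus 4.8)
The plan is to prove that the two algorithms visit exactly the same sequence of intermediate sets of conflicts, so that in particular their outputs coincide. Throughout I identify a reality with its underlying set of conflicts, which is legitimate by Lemma~\ref{lem:realityLeaf} and Proposition~\ref{prop:maximalIndReality}. I would argue by induction on the iteration count that after $k$ steps the partial set $R$ produced by Algorithm~\ref{alg:selectionBranch} equals the set $R$ produced by Algorithm~\ref{alg:selectionConflictGraph}, and that the remaining candidate pool of Algorithm~\ref{alg:selectionConflictGraph}, namely $U=\conflictset\setminus\{c:\ c\in R \text{ or } c \text{ is conflicting with } R\}$, is determined by $R$ alone (it is what remains after removing each selected $c^{*}$ together with $N_{\conflictset}(c^{*})$).

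The first ingredient is to show that, for a common partial reality $R$, the two algorithms range over the same set of candidate conflicts. Call $c\notin R$ \emph{eligible} at $R$ if $R\cup\{c\}$ is again a branch, i.e.\ $\parent{\conflictset}{c}\subseteq R$ and $c$ is conflict-free with respect to $R$. By Definition~\ref{def:branchDAG} the children $\child{\branchset}{R}$ in the Branch DAG are exactly the branches $R\cup\{c\}$ with $c$ eligible, so Algorithm~\ref{alg:selectionBranch} ranges over the eligible conflicts. I would then prove $\maximal{\conflictset}{U}$ equals the eligible set: if $c\in\maximal{\conflictset}{U}$, every proper $\le_{\conflictset}$-ancestor $d$ of $c$ lies outside $U$, and $d$ cannot have been discarded as conflicting with $R$ (otherwise $c<_{\conflictset}d$ would force $c$ to be conflicting with $R$ by the conflict propagation of Definition~\ref{def:conflictingTx}, contradicting $c\in U$), hence $d\in R$; combined with conflict-freeness of past cones (Remark~\ref{rem:conflictFree}) this yields eligibility, and the converse inclusion is analogous. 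Thus both algorithms choose from the identical candidate set.

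The main obstacle is that the two selection rules look different: Algorithm~\ref{alg:selectionBranch} maximizes the branch weight $\AW(R\cup\{c\})=\min(\AW(R),\AW(c))$, whereas Algorithm~\ref{alg:selectionConflictGraph} maximizes the conflict weight $\AW(c)$. To reconcile them I would establish the invariant that along the greedy path $\AW(R)\ge \AW(c)$ holds for every conflict $c$ eligible at $R$. The base case is immediate, since for $R=\emptyset$ the first objective is $\AW(\{c\})=\AW(c)$ (and $\AW(\emptyset)=1$ as the empty minimum). For the inductive step, the conflict $c^{*}$ just appended is, by the invariant, the eligible conflict of largest weight, so $\AW(R_{k})=\AW(c^{*})$; any conflict still eligible was already eligible and hence has weight $\le\AW(c^{*})$, while any newly eligible conflict is a $\ConflictDAG$-child of $c^{*}$ and thus has weight $\le\AW(c^{*})$ by monotonicity (Assumption~\ref{ass: weight function on transaction}) together with $\AW(c)=\AW(\cone{p}{\conflictset}{c})$ (Remark~\ref{rem: weight function on branches}). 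Given this invariant, $\min(\AW(R),\AW(c))=\AW(c)$ for all candidates, so the two objectives agree, and since both algorithms break ties by the minimal hash of the added conflict, they make the identical choice.

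It remains to align start and stop. Both loops begin at $R=\emptyset$, and the termination tests coincide: Algorithm~\ref{alg:selectionBranch} halts when $R$ has no children, i.e.\ no eligible conflict, which by the candidate-set equivalence is exactly $\maximal{\conflictset}{U}=\emptyset$, equivalently $U=\emptyset$, the stopping condition of Algorithm~\ref{alg:selectionConflictGraph}. Hence the induction closes: the two algorithms run for the same number of iterations, keep identical partial realities, and select the same conflict at each step, so they return the same reality. I expect the weight invariant to be the delicate point, since it is what collapses the seemingly different ``$\min$'' objective of the Branch-DAG walk onto the plain weight objective of the Conflict-Graph greedy.
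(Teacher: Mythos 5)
Your proof is correct, and it shares the paper's skeleton: a joint induction on the iteration count showing that both algorithms hold the same partial branch $R$ and append the same conflict at each step. Where you differ is in how the two selection criteria are reconciled. The paper works locally at step $k$: it shows $R\cup\{c^*\}$ is a branch (hence a child of $R$ in $\BranchDAG$), writes the branch $B^*$ chosen by Algorithm~\ref{alg:selectionBranch} as $d\cup R$ with $d\in\max_{\conflictset}(U)$, and proves $\AW(B^*)=\AW(d)$ by a backward-looking contradiction --- if the minimum over $B^*$ were attained at an earlier element $e$, monotonicity would produce an ancestor $d'$ of $d$ still in the candidate pool when $e$ was chosen, with $\AW(d')\ge\AW(d)>\AW(e)$, contradicting greediness --- and then only sketches the companion inequality $\AW(c^*)\le\AW(R)$ (``using similar ideas''). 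You instead (i) isolate the candidate-set correspondence as a clean two-way lemma (children of $R$ in $\BranchDAG$ are exactly $R\cup\{c\}$ for $c\in\max_{\conflictset}(U)$, your ``eligible'' conflicts), and (ii) carry a forward invariant $\AW(R)\ge\AW(c)$ for every eligible $c$, which collapses the objective $\AW(R\cup\{c\})=\min(\AW(R),\AW(c))$ onto $\AW(c)$ so that the two greedy rules agree outright. Both arguments rest on the same pillars (weight monotonicity along $\ConflictDAG$ and past-closedness of $R$), but your invariant makes fully explicit what the paper leaves half-stated, at the minor cost of the convention $\AW(\emptyset)=1$, which the paper sidesteps by treating the first iteration separately. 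One small remark: for the invariant step you only need that a newly eligible conflict is a proper $\ConflictDAG$-descendant of $c^*$, which is what monotonicity uses; that it is literally a child of $c^*$ does follow from past-closedness of $R$, but your argument never needs it.
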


\begin{proof}
The proof is done by induction on the number of iterations in the while-loops. Both algorithms start with the empty set and add at the first step the same conflict $c^{*}$, namely the one achieving the highest value of the weight function $\AW(\cdot)$. Note that $c^{*}$ is $\ConflictDAG$-maximal in the set of conflicts $\conflictset$ and represents a child of the main branch in $\BranchDAG$.

Let us assume that both algorithms constructed the same branch ${R}$ after some number of steps. 
Then on one hand, Algorithm~\ref{alg:selectionConflictGraph} will pick the conflict $c^*$ with the highest value of $\AW(\cdot)$ among all $\ConflictDAG$-maximal elements in the set $U$, where $U$ is the set of conflicts in $\conflictset\setminus {R}$ that are not conflicting with ${R}$. 
On the other hand, Algorithm~\ref{alg:selectionBranch} will pick a branch ${B}^*$ with the highest value of $\AW(\cdot)$ over all children of ${R}$ in~$\BranchDAG$. It remains to show that ${B}^*$ obtained in Algorithm~\ref{alg:selectionBranch} coincides with  ${R}\cup \{c^*\}$ obtained in Algorithm~\ref{alg:selectionConflictGraph}. 

First, we prove that ${R}\cup \{c^*\}$ is a branch (cf. Definition~\ref{def:branch}).  The set ${R}\cup \{c^*\}$ does not contain conflicting transactions since all transactions conflicting with ${R}$ were removed from $U$ at the previous steps  and $c^{*}\in U$. Seeking a contradiction assume that ${R}\cup \{c^*\}$ is not $\ConflictDAG$-past-closed. Since ${R}$ is a branch by the inductive hypothesis, it may happen only when there exists some $b\in \cone{p}{\conflictset}{c^{*}}$ such that $b\not\in {R}$. From $c^*\in U$ it follows that $b\in U$. Since $c^*\in \max_{\conflictset}(U)$ and $c^*\le_{\conflictset} b$, we conclude that $c^*=b$. Thus,  ${R}\cup \{c^*\}$ is indeed a branch which belongs to the set of children $\child{\branchset}{R}$ by Definition~\ref{def:branchDAG}.

Since ${B}^*\in \child{\branchset}{R}$, the branch ${B}^*$ can be represented as $d\cup {R}$ for some $d\in\conflictset$. The set ${B}^*$ is a branch and does not contain conflicting transactions and, thus, we have that $d\in U$. Moreover, $d\in \max_{\conflictset}(U)$ since ${B}^*$ is $\ConflictDAG$-past-closed. 

Now we will show that 
\begin{equation}\label{eq: weight of d is equal}
\AW(d)=\AW({B}^*).
\end{equation}
By definition of the weight function $\AW({B}^*)=\min\{\AW(c) : c\in d\cup {R}\}$. Seeking a contradiction, assume that the minimum is attained at some $e \neq d$, i.e., $\AW(e)<\AW(d)$. Let $U'$ and ${R}'$ be the sets $U$ and ${R}$ right before the element $e$ was included to ${R}$. Then one can find $d'\in\cone{p}{\conflictset}{d}$ such that $d'\in\max_{\conflictset}(U')$. By the monotonicty of the weight function, we get that $\AW(d')\ge \AW(d)>\AW(e)$ which contradicts the fact that $e$ was included to $U'$ and, thus, achieves the maximum of the weight function over all conflicts in $\max_{\conflictset}(U')$.

Using similar ideas as above one can prove that 
\begin{equation}\label{eq: weight of c^* is smaller}
\AW(c^*)\le \AW({R}).
\end{equation}

Finally, it remains to show that $\AW(d)=\AW(c^*)$. By definition of $c^{*}$ in Algorithm~\ref{alg:selectionConflictGraph}, $\AW(c^*)\ge \AW(d)$ as both $c^*$ and $d$ belong to $\max_{\conflictset}(U)$. On the other hand, by definition of ${B}^{*}$ in Algorithm~\ref{alg:selectionBranch}, $\AW({B}^{*})\ge \AW(c^*\cup {R})$. Combining the latter inequality with~\eqref{eq: weight of d is equal}-\eqref{eq: weight of c^* is smaller} implies
\[
\AW(d)=\AW({B}^*)\ge \AW(c^*\cup {R})=\AW(c^*),
\]
which leads to $\AW(d)=\AW(c^*)$. To ensure that $d=c^*$ we observe that both algorithms use $\min \hash(\cdot)$ for breaking ties.
\end{proof}

\begin{example}
In Figure~\ref{fig:RealitySelection}, we depict an illustrative example that demonstrates how both reality selection algorithms work. We make use the same labeled UTXO DAG, the Conflict DAG and the Conflict Graph as in Figure~\ref{fig:DAGtransition}. First, we note that there are three iterations in the while-loops of the both algorithms. Conflicts in Figure~\ref{fig:RealitySelection} are represented by colorful boxes. The value of the function $\AW(\cdot)$ is depicted inside the boxes. The selected set of conflicts ${R}$ at every step is highlighted by green borders. At the first step both algorithms include to ${R}$, which was initialized as the empty set, the yellow conflict which has the highest weight $0.7$. Since the red conflict is conflicting with the yellow conflict, it should be removed from the set $U$. At the second step, Algorithm~\ref{alg:selectionBranch} takes the child of ${R}$ in $\BranchDAG$ that has the highest weight. This child is a branch that consists of two conflicts, the yellow one and the aquamarine one. At the same moment Algorithm~\ref{alg:selectionConflictGraph} finds the conflict with the highest weight in the set of $\ConflictDAG$-maximal elements of $U$. This set consists of the purple conflict and the aquamarine conflict and the latter has the highest weight. Finally, at last step, both algorithms update set ${R}$ by adding the blue conflict.
\end{example}

\begin{figure*}[t]
    \centering
    \includegraphics[width=0.8\textwidth]{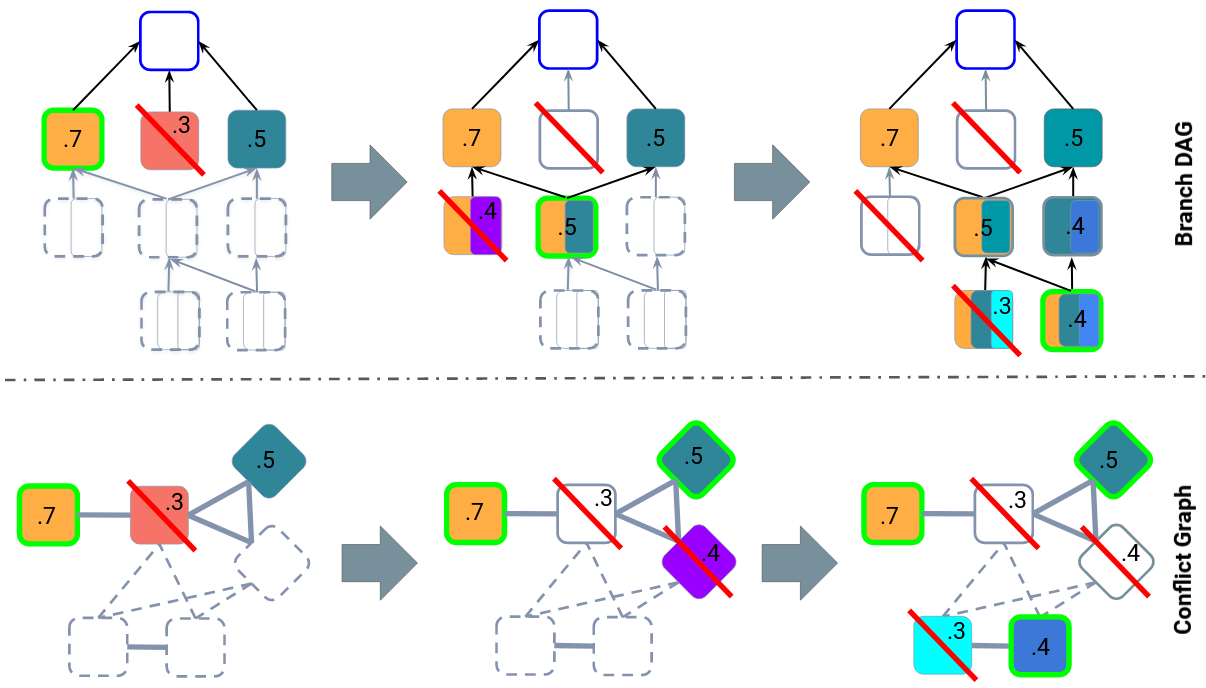}
    \caption{Demonstrative example of the reality selection algorithms based on the Branch DAG and the Conflict Graph.}
    \label{fig:RealitySelection}
\end{figure*}
\subsection{Pruning Conflicts}\label{sec:ConflictPruning}
In this section, we explain how we prune conflicting transactions and update the Conflict DAG, the Conflict Graph and the Ledger DAG when the weight function $\AW$ of a certain branch exceeds a given threshold.

\begin{definition}[Confirmed branch]
Let $\theta \in (0.5, 1]$ be a fixed threshold. A branch ${B}\in\branchset$ is called $\AW$-\textit{confirmed} if 
$\AW({B})\ge\theta$. 
\end{definition}

\begin{lemma}\label{lem: the only minimal confirmed branch}
Let $\branchset'\subseteq\branchset$ be the set of $\AW$-confirmed branches. Then there exists a unique $\BranchDAG$-minimal branch in the set $\branchset'$. In other words, the subDAG of $\BranchDAG$ induced by $\branchset'$ has  a unique leaf.
\end{lemma}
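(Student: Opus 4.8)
The plan is to show that the set $\branchset'$ of $\AW$-confirmed branches is closed under taking unions and therefore possesses a largest element with respect to inclusion; this largest element will be exactly the unique $\BranchDAG$-minimal branch we are after. First I would recall the orientation of the Branch DAG: by Definition~\ref{def:branchDAG} an edge points from the larger branch to the smaller one, so the induced order $\le_{\branchset}$ is reverse inclusion. Consequently a branch is $\BranchDAG$-minimal inside $\branchset'$ precisely when it is inclusion-\emph{maximal} among confirmed branches, and a unique $\BranchDAG$-minimal branch is the same thing as a unique inclusion-maximal confirmed branch (equivalently, the unique leaf of the induced subDAG).

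The key step, and the place where the hypothesis $\theta>1/2$ is used, is the observation that any two confirmed branches are conflict-free with respect to one another. Indeed, if $B_1,B_2\in\branchset'$ were conflicting, then Remark~\ref{rem: weight function on branches} would give $\AW(B_1)+\AW(B_2)\le 1$, contradicting $\AW(B_1),\AW(B_2)\ge\theta>1/2$. Given this, I would next argue that the union of two confirmed branches is again a confirmed branch: $B_1\cup B_2$ is $\ConflictDAG$-past-closed as a union of past-closed sets, it is conflict-free because each $B_i$ is conflict-free and no cross pair is conflicting, so $B_1\cup B_2\in\branchset$ by Definition~\ref{def:branch}; and since $\AW(B_1\cup B_2)=\min(\AW(B_1),\AW(B_2))\ge\theta$ it lies in $\branchset'$.

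To finish, I would use finiteness of $\branchset$ (there are only finitely many subsets of the finite set $\conflictset$). Iterating the union closure, the set $B^{*}:=\bigcup_{B\in\branchset'}B$ is itself a confirmed branch with $\AW(B^{*})=\min_{B\in\branchset'}\AW(B)\ge\theta$, and by construction it contains every element of $\branchset'$. Hence $B^{*}$ is the inclusion-maximum of $\branchset'$: it is inclusion-maximal, and any other inclusion-maximal confirmed branch $B$ would satisfy $B\subseteq B^{*}$ with $B^{*}\in\branchset'$, forcing $B=B^{*}$. Translating back through $\le_{\branchset}=\,\supseteq$, $B^{*}$ is the unique $\BranchDAG$-minimal branch in $\branchset'$, i.e.\ the unique leaf of the subDAG induced by $\branchset'$. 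I would also note that $\branchset'$ is nonempty, since the main branch has weight $1\ge\theta$, so that $B^{*}$ is well defined. I expect the only subtle point to be the bookkeeping of the order orientation (minimal versus maximal), since the real content is the short weight inequality driven by $\theta>1/2$ together with closure under unions.
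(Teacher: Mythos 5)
Your proof is correct and rests on exactly the same two ingredients as the paper's own argument: the inequality $\AW(B_1)+\AW(B_2)\le 1$ for conflicting branches (which, combined with $\theta>1/2$, forces confirmed branches to be pairwise non-conflicting) and closure of confirmed branches under union (past-closedness, conflict-freeness, and $\AW(B_1\cup B_2)=\min(\AW(B_1),\AW(B_2))\ge\theta$). The only difference is packaging: the paper supposes two $\BranchDAG$-minimal confirmed branches and gets a contradiction from their union, whereas you build the union of \emph{all} confirmed branches as the inclusion-maximum, which additionally exhibits the unique leaf explicitly and covers non-emptiness of $\branchset'$, a point the paper leaves implicit.
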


\begin{proof}
Seeking a contradiction, assume the existence of two $\BranchDAG$-minimal confirmed branches ${B}_1$ and ${B}_2$. If these branches are conflicting, then $\AW({B}_1)+\AW({B}_2)\le 1$ by Remark~\ref{rem: weight function on branches} and we come to a contradiction as $\AW({B}_1)+\AW({B}_2)\ge2\theta>1$.  Thus, the union of these branches does not contain a pair of conflicting transactions and ${B}:={B}_1\cup {B}_2$ is also a branch. Moreover,  ${B}$ is $\AW$-confirmed as
$$
\AW({B})=\min_{x\in{B}_1\cup{B}_2} \AW(x)\ge \min(\AW({B}_1),\AW({B}_1))\ge \theta.
$$
Clearly, ${B}<_{\branchset}{B}_1$ and  ${B}<_{{\branchset}}{B}_2$ and we arrive to a contradiction with the fact that ${B}_1$ and ${B}_2$ are $\BranchDAG$-minimal in the set of confirmed branches $\branchset'$.
\end{proof}
Let  ${B}$ be the unique $\BranchDAG$-minimal  $\AW$-confirmed branch as in Lemma~\ref{lem: the only minimal confirmed branch}. We propose to update all graph structures such that the conflicts in $B$ are no longer conflicts, i.e., all transactions conflicting with $B$ should be removed from the ledger $\ledger$. A possible way to achieve that is described in Algorithm~\ref{alg: update UTXO DAG after confirmation}.

In this algorithm, we first identify  the set of conflicts $\conflictset'$ conflicting with $B$ and the set of conflicts $\conflictset''$ that should be removed from $\conflictset$. Note that $B\cup \conflictset' \subseteq \conflictset''$. 
Then we
\begin{enumerate}[leftmargin=*]
    \item remove all edges adjacent to $\conflictset'$ from the Conflict Graph $\ConflictGraph$. We note that $\conflictset''$ is the set of isolated vertices in $\ConflictGraph$. We remove $\conflictset''$  from $\ConflictGraph$.
    \item prune the Ledger DAG by removing all transactions that are directly or indirectly conflicting with ${B}$ and all edges adjacent to them by traversing future cones of $\conflictset'$;
    \item remove all vertices corresponding to $\conflictset''$  and all edges adjacent to $\conflictset''$ from the Conflict DAG. If necessary, we add some edges to the Conflict DAG to make it connected.
\end{enumerate} 

Recall that the Conflict DAG and the Conflict Graph are concepts which are derived from the Ledger DAG, i.e., $\ConflictDAG = \ConflictDAG(\LedgerDAG)$ and $\ConflictGraph = \ConflictGraph(\LedgerDAG)$. Thereby, one needs to show consistency between the resulting three graphs.

\begin{algorithm}[ht]
\caption{Algorithm to update Ledger DAG, Conflict DAG and Conflict Graph  after branch confirmation}
\label{alg: update UTXO DAG after confirmation}
\KwData{Ledger DAG $\LedgerDAG=(\ledger,E_{\LedgerDAG})$, Conflict DAG $\ConflictDAG=(\conflictset\cup\{\orig\},E_{\ConflictDAG})$, Conflict Graph $\ConflictGraph = (\conflictset, E_{\ConflictGraph})$, confirmed branch ${B}$}
\KwResult{updated Ledger DAG $\LedgerDAG=(\ledger,E_{\LedgerDAG})$, Conflict DAG $\ConflictDAG=(\conflictset\cup\{\orig\},E_{\ConflictDAG})$, Conflict Graph $\ConflictGraph = (\conflictset, E_{\ConflictGraph})$}
$M\gets \min_{\conflictset}({B})$\\
$\conflictset'\gets \emptyset$ {\scriptsize{\Comment*[r]{conflicts conflicting with $B$}}}
$\conflictset''\gets \emptyset$ {\scriptsize{\Comment*[r]{conflicts to be removed from $\conflictset$}}}

\For{$\forall x\in M$}{
\For{$\forall y\in N_{\conflictset}(x)$}{
$\conflictset'\gets \conflictset' \cup \{y\}$\\
\For{$\forall z\in N_{\conflictset}(y)$}{
$E_{\ConflictGraph}\gets E_{\ConflictGraph}\setminus\{(y,z)\}$}}}

\For{$\forall x\in \conflictset$}{
\If{$x$ is isolated in $\ConflictGraph$}{
$\conflictset''\gets \conflictset'' \cup\{x\}$
}
}
\For{$\forall y\in \max_{\conflictset}(\conflictset ')$}{
\For{$\forall z\in \cone{f}{\ledger}{y}$}{
{\scriptsize{\Comment*[l]{assume $\LedgerDAG$-descending order}}}
$\ledger\gets \ledger\setminus\{z\}$\\
\For{$\forall p\in \parent{\ledger}{z}$}{
$E_{\LedgerDAG}\gets E_{\LedgerDAG} \setminus\{(z,p)\}$
}
}
}
\For{$\forall y\in \conflictset ''$}{
\For{$\forall p\in \parent{\conflictset}{y}$}{
$E_{\ConflictDAG}\gets E_{\ConflictDAG} \setminus\{(y,p)\}$
}
\If{$y\in \max_{\conflictset}(\conflictset'')$}{
\For{$\forall c\in\child{\conflictset}{y}$}{
$E_{\ConflictDAG}\gets E_{\ConflictDAG} \setminus\{(c,y)\}$\\
\If{$\parent{\conflictset}{c}\subseteq \conflictset''$}{
$E_{\ConflictDAG}\gets E_{\ConflictDAG} \cup\{(c,\orig)\}$
}
}
}
$\conflictset \gets \conflictset \setminus \{y\}$\\
}
\end{algorithm}
 \begin{lemma}
 The resulting Conflict DAG and the resulting Conflict Graph  are consistent with the resulting Ledger DAG in Algorithm~\ref{alg: update UTXO DAG after confirmation}.
 \end{lemma}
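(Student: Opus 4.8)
The plan is to treat the three output graphs separately and, for each, show that it coincides with the structure \emph{derived} from the pruned Ledger DAG, since ``consistent'' here means exactly $\ConflictDAG=\ConflictDAG(\LedgerDAG)$ and $\ConflictGraph=\ConflictGraph(\LedgerDAG)$. First I would pin down the resulting Ledger DAG. Write $T:=\bigcup_{y\in\maximal{\conflictset}{\conflictset'}}\cone{f}{\ledger}{y}$ for the set of vertices the algorithm deletes from $\ledger$. I would begin by identifying $\conflictset'$ with the set of all conflicts conflicting with $B$: the algorithm builds it as $\bigcup_{x\in M}N_{\conflictset}(x)$ with $M=\minimal{\conflictset}{B}$, and Proposition~\ref{prop: only minimal elements in branch} guarantees that every conflict conflicting with $B$ is already conflicting with some $\ConflictDAG$-minimal element of $B$, so restricting to $M$ loses nothing. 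Then by Proposition~\ref{prop: conflicting transactions}, applied to each $b\in B$ and taking the union, the set of \emph{all} transactions conflicting with $B$ is $\LedgerDAG$-future-closed and its $\LedgerDAG$-maximal elements lie in $\conflictset$; these maximal elements are precisely $\maximal{\conflictset}{\conflictset'}$. Hence $T$ is exactly the set of transactions conflicting with $B$, and the pruned ledger is $\ledger_{\mathrm{out}}=\ledger\setminus T$.

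The backbone of the argument is a single stability fact that I would prove once and reuse: because $T$ is future-closed, every surviving transaction $z\notin T$ keeps its whole past cone $\cone{p}{\ledger}{z}$ intact (an ancestor in $T$ would, by future-closedness, force $z\in T$). Since direct conflicts are intrinsic to a pair of transactions, and every indirect conflict in the sense of Definition~\ref{def:conflictingTx} is witnessed by directly-conflicting ancestors lying in the relevant past cones, it follows that the conflicting relation restricted to $\ledger_{\mathrm{out}}$ equals the restriction of the original relation, and no new conflicts are created. With this fact the Conflict Graph part becomes a set computation: the algorithm deletes exactly the edges incident to $\conflictset'$ and then removes the now-isolated vertices $\conflictset''$, so its output has vertex set $\conflictset\setminus\conflictset''$ together with the induced old edges. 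I would match this against $\ConflictGraph(\LedgerDAG_{\mathrm{out}})$, checking that a conflict becomes isolated in the algorithm's graph exactly when it ceases to be a conflict in $\ledger_{\mathrm{out}}$. Here one must be careful that losing every conflicting \emph{partner} in the Conflict Graph (which also records indirect conflicts) coincides with losing every direct double-spend sibling, the latter being what Definition~\ref{def:conflicts} actually demands of a conflict; reconciling these two notions is the first place where genuine care is needed.

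For the Conflict DAG I would show the algorithm produces the minimal subDAG of $\LedgerDAG_{\mathrm{out}}$ induced by $(\conflictset\setminus\conflictset'')\cup\{\orig\}$. Removing every edge into the $\ConflictDAG$-parents of the vertices of $\conflictset''$ and then deleting those vertices is routine. The delicate part, which I expect to be the main obstacle, is the reconnection clause: for a child $c$ of a $\maximal{\conflictset}{\conflictset''}$-vertex all of whose $\ConflictDAG$-parents lie in $\conflictset''$, the algorithm inserts the edge $(c,\orig)$. To justify this I must prove that when all of $c$'s former conflict-parents are deleted, the closest surviving conflict above $c$ in $\LedgerDAG_{\mathrm{out}}$ really is the genesis, i.e.\ that $\conflictset''$ is upward-closed along every conflict-path from $c$ toward $\orig$; otherwise the correct transitive-reduction edge would point at a surviving intermediate conflict rather than at $\orig$. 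Establishing this order-theoretic behaviour of $\conflictset''$ (and confirming that it never omits a needed edge nor introduces a redundant one) is where the real work lies, so I would isolate it as a separate claim before assembling the three matches into the full statement.
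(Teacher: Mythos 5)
Your plan retraces the paper's own proof almost step for step: the same use of Proposition~\ref{prop: only minimal elements in branch} to justify computing $\conflictset'$ from $M=\minimal{\conflictset}{B}$, the same appeal to Proposition~\ref{prop: conflicting transactions} for future-closedness of the removed set and for $\maximal{\ledger}{\ledger'}=\maximal{\conflictset}{\conflictset'}$, and the same split into the three graph updates; your ``stability fact'' merely makes explicit what the paper leaves implicit. The difficulty is that the two steps you yourself label as ``where the real work lies'' are never carried out, and these are exactly the steps the paper compresses into the assertion ``Clearly, all conflicts in $\conflictset''$ are no longer conflicts'' and an unjustified reconnection rule. One of them you could in fact close: conflicts are inherited by descendants, so Conflict-Graph neighbourhoods are monotone along the Conflict DAG, i.e. $N_{\conflictset}(q)\subseteq N_{\conflictset}(p)$ whenever $p<_{\conflictset}q$. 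Hence if every $\ConflictDAG$-parent of a surviving conflict $c$ lies in $\conflictset''$, then every conflict-ancestor of $c$ lies in $\conflictset''$ (each such ancestor sits above a parent $p$ with $N_{\conflictset}(p)\subseteq\conflictset'$, so its own neighbourhood is contained in $\conflictset'$ and it is isolated as well), and $(c,\orig)$ is indeed the correct edge.

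The other deferred step — that a conflict becomes isolated in the algorithm's graph \emph{exactly} when it ceases to be a conflict in the pruned ledger — is not just delicate: the equivalence fails for a general confirmed branch, so neither your plan nor the paper's one-line claim can be completed into a proof of strict consistency. Concretely, let the genesis emit $o_1,\dots,o_4$; let $a$ and $b$ both spend $o_1$; let $x$ spend $a$'s output and $o_2$; let $y$ spend $o_2$ and $o_3$; let $w$ spend $o_3$; let $z$ spend $b$'s output and $o_4$; let $v$ spend $o_4$. With confirmed branch $B=\{w\}$ one gets $M=\{w\}$, $\conflictset'=N_{\conflictset}(w)=\{y\}$, $\conflictset''=\{y,w\}$, and the pruned ledger is $\ledger\setminus\{y\}$. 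The conflict $x$ keeps its edges to $b$ and to $z$ (indirect conflicts inherited from the pair $a,b$), so $x\notin\conflictset''$ and Algorithm~\ref{alg: update UTXO DAG after confirmation} retains $x$ in both output graphs; yet $x$'s only direct double-spend partner was $y$, so in the pruned ledger $x$ is no longer a conflict in the sense of Definition~\ref{def:conflicts} and must not appear in the Conflict Graph or Conflict DAG derived from the pruned Ledger DAG. Only the implication the paper states (every vertex of $\conflictset''$ ceases to be a conflict) is true; the converse you need fails whenever a conflict loses all its direct partners to $\conflictset'$ while keeping an indirectly conflicting neighbour outside it. This cannot happen when $B$ is a reality (then $\conflictset''=\conflictset$, which is all the paper's final theorem uses), so the statement must either be restricted to realities or ``consistent'' must be weakened; as it stands, your deferred reconciliation claim is a genuine gap — one the paper's own proof does not close either.
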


\begin{proof}
By Proposition~\ref{prop: only minimal elements in branch}, to identify the set of all conflicts that are conflicting with ${B}$, it suffices to consider the $\ConflictDAG$-minimal elements in ${B}$. In Algorithm~\ref{alg: update UTXO DAG after confirmation}, we set $M$ to be the set of these minimal conflicts. Then we construct the set $\conflictset'\subseteq \conflictset$ of conflicts that are conflicting with branch ${B}$ by looking at the neighbours of $M$ in the Conflict Graph. Similar to Proposition~\ref{prop: conflicting transactions}, one can check that the set $\conflictset'$ is $\ConflictDAG$-future-closed.  Let $\ledger'\subseteq \ledger$ denote the set of transactions that are conflicting with ${B}$. By Proposition~\ref{prop: conflicting transactions}, $\ledger'$ is $\LedgerDAG$-future-closed and $\max_{\ledger}(\ledger')=\max_{\conflictset}(\conflictset')$. In Algorithm~\ref{alg: update UTXO DAG after confirmation}, we 
\begin{enumerate}[leftmargin=*]
     \item remove all edges adjacent to $\conflictset '$ from the Conflict Graph and construct the set  $\conflictset''$ of isolated vertices in the updated Conflict Graph. Clearly, all conflicts in $\conflictset''$ are no longer conflicts;
    \item update the Ledger DAG by recursive traversing the Ledger future cone  of $\max_{\conflictset}(\conflictset')$ and removing transactions $\ledger'$ and all edges adjacent to them;
    \item update the Conflict DAG by removing all conflicts  $\conflictset''$ from $\conflictset$ and all edges adjacent to them. It is possible that after performing this step some vertices in $\conflictset\setminus \conflictset''$ have out-degree zero. To make the Conflict DAG connected again, we add edges from all such conflicts to the genesis $\orig$.
\end{enumerate}

\end{proof}
We conclude this section with a sufficient condition for the pruned data structure to be again conflict-free.
\begin{theorem}
Let $\ledger$ be a ledger and ${B}$ be a  $\AW$-confirmed reality. Then, the set of transactions in the pruned Ledger DAG resulting from Algorithm~\ref{alg: update UTXO DAG after confirmation} is conflict-free and coincides with the $B$-ledger $\ledger_r(B)$ as in Definition~\ref{def: R-ledger}.
\end{theorem}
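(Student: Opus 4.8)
The plan is to prove the stronger of the two claims first---that the pruned Ledger DAG has vertex set exactly $\ledger_r(B)$---and to deduce conflict-freeness from it. By the consistency lemma preceding this theorem, Algorithm~\ref{alg: update UTXO DAG after confirmation} deletes precisely the set $\ledger'$ of transactions that are conflicting with $B$, so the surviving ledger is $\{x\in\ledger: x \text{ is conflict-free with respect to } B\}$. Hence it suffices to show that, for every $x\in\ledger$,
$$ x \text{ is conflict-free with respect to } B \iff x\in\ledger_r(B). $$
I would first translate the right-hand side into a statement about the conflicts in the past cone. Writing $P:=\branch^{(p)}_{\ledger}(x)$ for the maximal branch contained in $\cone{p}{\ledger}{x}$, the lemma relating the maximal contained label set to the maximal contained branch, together with the injectivity of $\lab$ on $\conflictset\cup\{\orig\}$ (Definition~\ref{def:label}), gives $x\in\ledger_r(B)\iff P\subseteq B$. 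Moreover, as established in the proof of that lemma, $P$ is exactly the set of \emph{all} conflicts lying in $\cone{p}{\ledger}{x}$, a fact I use throughout.

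It then remains to prove that $x$ is conflict-free with respect to $B$ if and only if $P\subseteq B$. For the forward direction I argue by contraposition: assume $x$ is conflicting with some $c\in B$. By Definition~\ref{def:conflictingTx} there are directly conflicting conflicts $x^\ast,c^\ast\in\conflictset$ with $x\le_{\ledger}x^\ast$ and $c\le_{\ledger}c^\ast$. The witness $x^\ast$ is a conflict in $\cone{p}{\ledger}{x}$, hence $x^\ast\in P$. On the other side, $c^\ast\in\cone{p}{\conflictset}{c}$, and $B$ is $\ConflictDAG$-past-closed (Definition~\ref{def:branch}), so $c^\ast\in B$. If we also had $x^\ast\in B$, then $B$ would contain the directly conflicting pair $x^\ast,c^\ast$, contradicting that $B$ is conflict-free. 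Therefore $x^\ast\in P\setminus B$ and $P\not\subseteq B$.

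For the reverse direction I use that $B$ is a reality. Suppose $P\not\subseteq B$ and pick a conflict $d\in P\setminus B$. By Proposition~\ref{prop:maximalIndReality}, $B$ is a maximal independent set of the Conflict Graph, so $d$ must be conflicting with some $c\in B$. Since $d\in P\subseteq\cone{p}{\ledger}{x}$, i.e.\ $x\le_{\ledger}d$, the directly conflicting witnesses for the pair $(d,c)$ also witness a conflict for the pair $(x,c)$ via the ancestor relation $x\le_{\ledger}d\le_{\ledger}d^\ast$; a short check of the strictness conditions in Definition~\ref{def:conflictingTx} (treating separately the case $x=d$, where $x$ conflicts with $c$ directly, and the case $x<_{\ledger}d$, where $x<_{\ledger}d^\ast$ supplies the required strict inequality) shows that $x$ is conflicting with $c\in B$. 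This proves the equivalence and hence that the pruned ledger equals $\ledger_r(B)$.

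Finally, conflict-freeness of the pruned ledger is immediate: since it equals $\ledger_r(B)$ for the reality $B$, Theorem~\ref{thm:Rledger} shows it is a consistent ledger, and a consistent ledger contains no double spend and therefore no directly---and hence no indirectly---conflicting transactions. Alternatively one may argue directly, since every surviving $x$ has all of its past-cone conflicts inside the conflict-free set $B$, so two surviving transactions cannot possess directly conflicting ancestors. I expect the main obstacle to be the reverse direction of the equivalence, where the maximality of $B$ as an independent set is essential---a merely past-closed, non-maximal branch could leave undeleted transactions $x$ with $P\not\subseteq B$, breaking the identity---combined with the careful bookkeeping of the strict-versus-non-strict ancestor relations in the definition of indirectly conflicting transactions.
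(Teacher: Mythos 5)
Your proof is correct, and it completes a step that the paper's own proof only asserts. The paper's argument stays at the level of the algorithm's bookkeeping: since a reality is a maximal independent set of the Conflict Graph (Proposition~\ref{prop:maximalIndReality}), every conflict outside $B$ is conflicting with $B$, so $\conflictset''=\conflictset$ and no directly conflicting pair survives; the coincidence with $\ledger_r(B)$ is then dispatched in a single unproven sentence (``all transactions $x$ such that $\lab^{(p)}(x)\subseteq\lab^{(p)}(B)$ are kept''), which is moreover only one of the two inclusions needed for equality. Your proposal supplies precisely the missing content: the pointwise equivalence between survival (being conflict-free with respect to $B$, which the preceding pruning lemma identifies as the surviving set) and membership in $\ledger_r(B)$ (equivalently $\branch^{(p)}_{\ledger}(x)\subseteq B$, via the label/branch lemma and injectivity of $\lab$), where the ``member $\Rightarrow$ kept'' direction uses only that $B$ is a branch (past-closed and conflict-free) and the ``kept $\Rightarrow$ member'' direction genuinely needs maximality of $B$ --- your closing remark that a non-maximal branch would break the identity is exactly the point the paper's sketch obscures. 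Since both arguments rest on the same ingredients (the pruning lemma and Proposition~\ref{prop:maximalIndReality}), yours is best described as a rigorous completion of the paper's sketch rather than a different method; a cosmetic difference is that you obtain conflict-freeness as a corollary of Theorem~\ref{thm:Rledger} applied to $\ledger_r(B)$, whereas the paper argues it directly from $\conflictset''=\conflictset$. Two small polish points: in your case $x=d$ the conflict between $x$ and $c$ is inherited from $d$ but need not be \emph{direct}; and the distinctness $x\neq c$ demanded by Definition~\ref{def:conflictingTx} deserves a word --- it follows from Remark~\ref{rem:conflictFree}, since $d$ conflicts with $c$ while $d$ lies in the conflict-free past cone $\cone{p}{\ledger}{x}$, so $c$ cannot equal $x$.
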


\begin{proof}
Suppose ${B}$ is a reality. Then, we note that the set of conflicts to be removed from $\conflictset$  is $\conflictset'' = \conflictset$. Indeed, since $B$ is a maximal independent set in the Conflict Graph (cf. Proposition~\ref{prop:maximalIndReality}), all conflicts in $\conflictset \setminus B$ are conflicting with $B$  and has to be removed from the Ledger DAG. Thus, the conflicts in $B$ are no longer conflicts and become ordinary transactions. Thus, the resulting set of transactions does not contain a pair of directly conflicting transactions. In addition, all transactions $x$ such that $\lab^{(p)}(x)\subseteq \lab^{(p)}(B)$ are kept in the set of transactions.
\end{proof}

\section{Numerical Experiments}\label{sec: numerical experiments}
We have implemented a standalone program\footnote{available at \url{https://github.com/nikitapolyanskii/reality-ledger}} in Go 1.20 that provides several functionalities described in the paper. We have conducted all benchmarks on Windows 11 with a single CPU of Intel Core i7-11370H with 3.30GHz, and 8GB of memory. The simulation results presented in this section provide a lower bound on the number of transactions that can be processed by a node with hardware of this kind, provided that one CPU is dedicated to managing a reality-based ledger and the reality-based ledger is completely stored in RAM. While it is natural to parallelize transaction processing for a reality-based ledger, the implementation aspect of this question is beyond the scope of the paper. We would like to note that our benchmarks only address updating all basic data structures and do not include signature checks or transaction executions.
In our numerical experiment, we have generated a stream of pseudo-random UTXO transactions that meet specific criteria. Specifically, we have used the following guidelines:
\begin{itemize}
   \item The number of inputs for a new transaction is randomly sampled from a uniform distribution on the set $\{1,2\}$.
   \item For non-conflicting transactions, inputs are selected randomly and uniformly from the set of all unspent outputs. 
   \item A new transaction becomes conflicting (or directly conflicting with an existing transaction) with a given probability of $p_{\textrm{conflict}}\in\{0.01,0.05,0.1, 0.5\}$. To accomplish this, we randomly select one input label from the set of already consumed outputs and the remaining input (if there is one) from the unspent outputs. 
    \item The number of outputs for a new transaction is randomly sampled from a uniform distribution on the set $\{1,2,3\}$, with each new output created as a hash digest (using SHA256) of a random value.    
    \item The genesis has $16$ outputs and $0$ inputs.
\end{itemize}
It is expected that the number of conflicts using this model will fall between $N p_{\textrm{conflict}}$ and $2N p_{\textrm{conflict}}$ with high probability, where $N$ is the total number of transactions. This is because a new transaction can cause a previously non-conflicting transaction to directly conflict with the new one.
\begin{remark} Let us comment on the choice of the above values. We investigated all transactions, around $300k$, of the Shimmer main net from 2022/09/27 to 2023/01/02.   In Figure \ref{fig:InOutput} we present the two-dimensional empirical distribution of in and out-degrees of the UTXO transactions. We can make three main observations. First, around $95\%$ of the transactions use not more than $2$ inputs and not more than $3$ outputs. Second, we can observe  a ``horizontal line'' with two outputs and inputs varying from $1$ to $128$. This effect can be explained by a wallet functionality that tries to keep the number of unspent outputs as small as possible. Third, there is a ``vertical'' line using $3$ inputs, and this effect can be explained by using a special output type called ``alias-output'' for the minting of NFTs. These observations show that the actual distribution of inputs and outputs heavily relies on different uses-cases and functionalities. For this reason, we choose to model the distribution of the number of inputs  and outputs as a uniform distribution, the one with the highest entropy, on the typical ``input-output'' relation. Moreover,  the observed conflict rate  is $p_{\textrm{conflict}}\approx 0.03$, and we use the two values $0.01$ and $0.05$ to get bounds on the ``likely'' behaviour. The choices of $p_{\textrm{conflict}} \in \{0.1, 0.5 \}$ are motivated to cover scenarios where an attacker spam conflicts. 
\end{remark}

 \begin{figure}[t]
    \centering
    \includegraphics[width=0.7\textwidth]{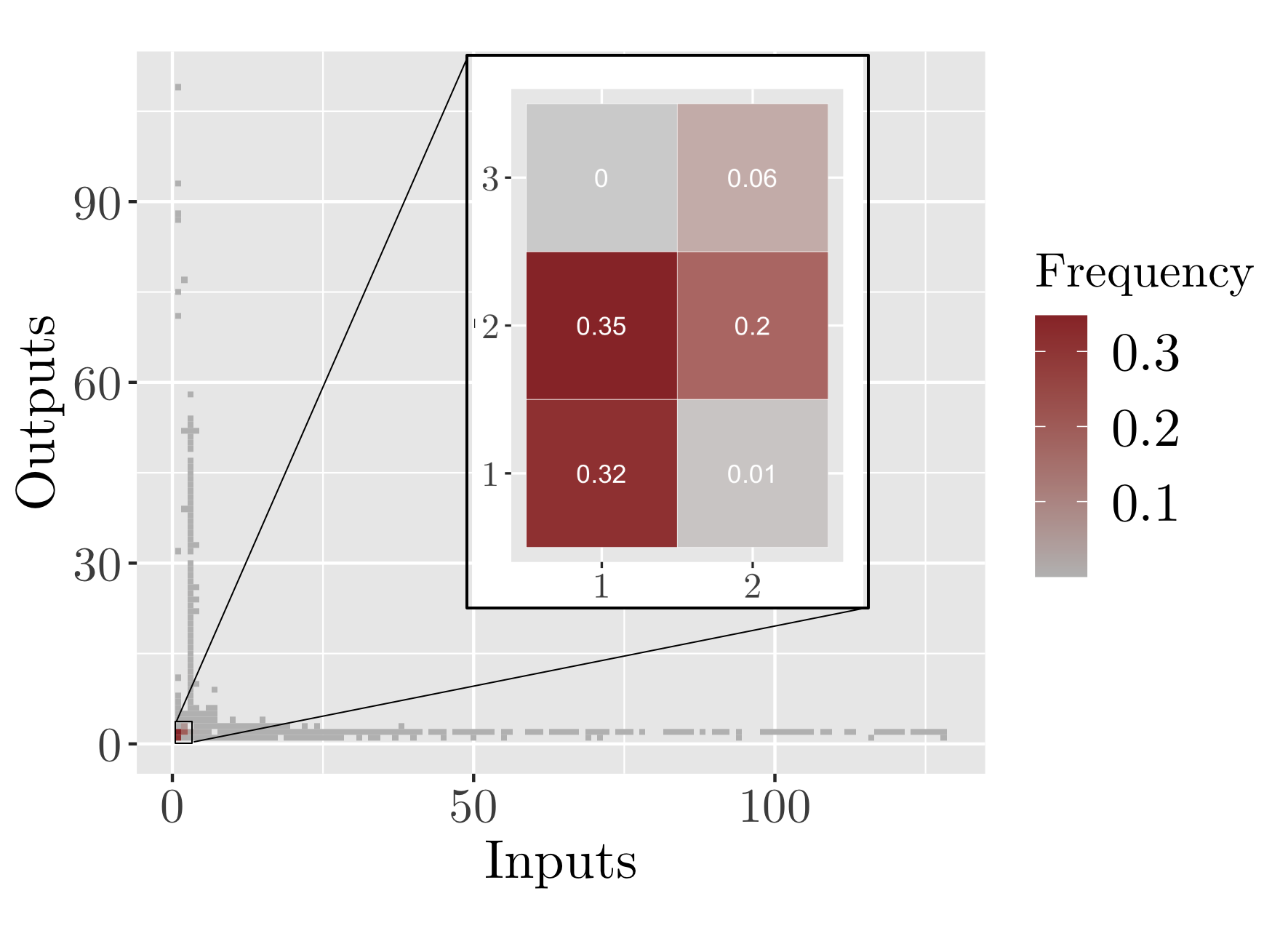}
    \caption{Empirical distribution of the number of inputs and outputs in the Shimmer network (from 2022/09/27 to 2023/01/02). }
    \label{fig:InOutput}
\end{figure}

To store and update the main data structures, we utilize map containers. For each transaction $x$, we store (and update when necessary) the following fields in a reality-based ledger:
\begin{itemize}
    \item $x.\mathrm{InputLabels}$, a list of inputs $\inputs(x)$ from the set of labels $\{0,1\}^{256}$; 
    \item $x.\mathrm{OutputLabels}$, a list of outputs $\outputs(x)$ from the set of labels $\{0,1\}^{256}$; 
    \item $x.\mathrm{Parents}$, a list of parents $\parent{\ledger}{x}$, transactions whose outputs are spent by $x$;
    \item $x\mathrm{.Children}$, a list of children $\child{\ledger}{x}$, transactions that consumes the output(s) of $x$;  
    \item $x.\mathrm{ParentsConflicts}$, a list of the closest conflicts $\conflictsInCone{p}{\ledger}{x}$ in the past cone of $x$; 
    \item $x.\mathrm{ChildrenConflicts}$, a list of the closest conflicts $\conflictsInCone{f}{\ledger}{z}$ in the future cone of $x$;  
    \item $x.\mathrm{DirectConflicts}$, a list of transactions that are directly conflicting with $x$
    \item $x.\mathrm{InputConflictLabels}$, a list of inputs that are consumed with some other transactions (directly conflicting with $x$)
\end{itemize}

Storing branches associated with each transaction and conflict might increase the storage overhead by a factor of the number of conflicts. Thereby, for a transaction $x\in\ledger$, we have implemented a function $\textsc{GetBranch}(\cdot)$ that returns $\lab^{(p)}(x)$, all conflicts in the past cone of $x$. Note that the fields  $\mathrm{.ParentsConflicts}$ and $.\mathrm{ChildrenConflicts}$ allow traversing a DAG on the set of conflicts. Denote this DAG (the DAG on the set of conflicts whose edges are defined by $\mathrm{.ParentsConflicts}$ and $.\mathrm{ChildrenConflicts}$) as $\hat{D}_{\conflictset}$. We note that $\hat{D}_{\conflictset}$ contains all the edges of the Conflict DAG $\ConflictDAG$ and some other extra edges, but the reachability properties of $\ConflictDAG$ are preserved in $\hat{D}_{\conflictset}$, i.e., $\ConflictDAG$ is a transitive reduction of $\hat{D}_{\conflictset}$. 
 
We have implemented a function $\textsc{GetReality}(\cdot)$ that returns the same reality as Algorithms~\ref{alg:selectionBranch}-\ref{alg:selectionConflictGraph} do for the case when the weights of all transactions are zeros. Storing and updating the Branch DAG and the Conflict Graph is impractical for a large number of conflicts since their sizes can be exponential and quadratic in the number of conflicts (see Remark~\ref{rem: complexity of Branch DAG}). Instead, the structure of the DAG $\hat{D}_{\conflictset}$ is utilized in our implementation. Specifically, we iteratively construct the reality (similar to Algorithm~\ref{alg:selectionConflictGraph}): at every step, we choose $c^*$ that is $\ConflictDAG$-maximal and attains the minimal hash (see line~\ref{line: minimal hash among maximal remaining elements}). We have also implemented a function $\textsc{PruneRejectedTransactions}(\cdot)$ that takes as input a reality and prunes all transactions conflicting with at least one conflict from the preferred reality. This function works similarly to Algorithm~\ref{alg: update UTXO DAG after confirmation} when the latter takes as an input the reality obtained from Algorithms~\ref{alg:selectionBranch}-\ref{alg:selectionConflictGraph}.

\begin{figure}[t]
    \centering
    \includegraphics[width=0.9\textwidth]{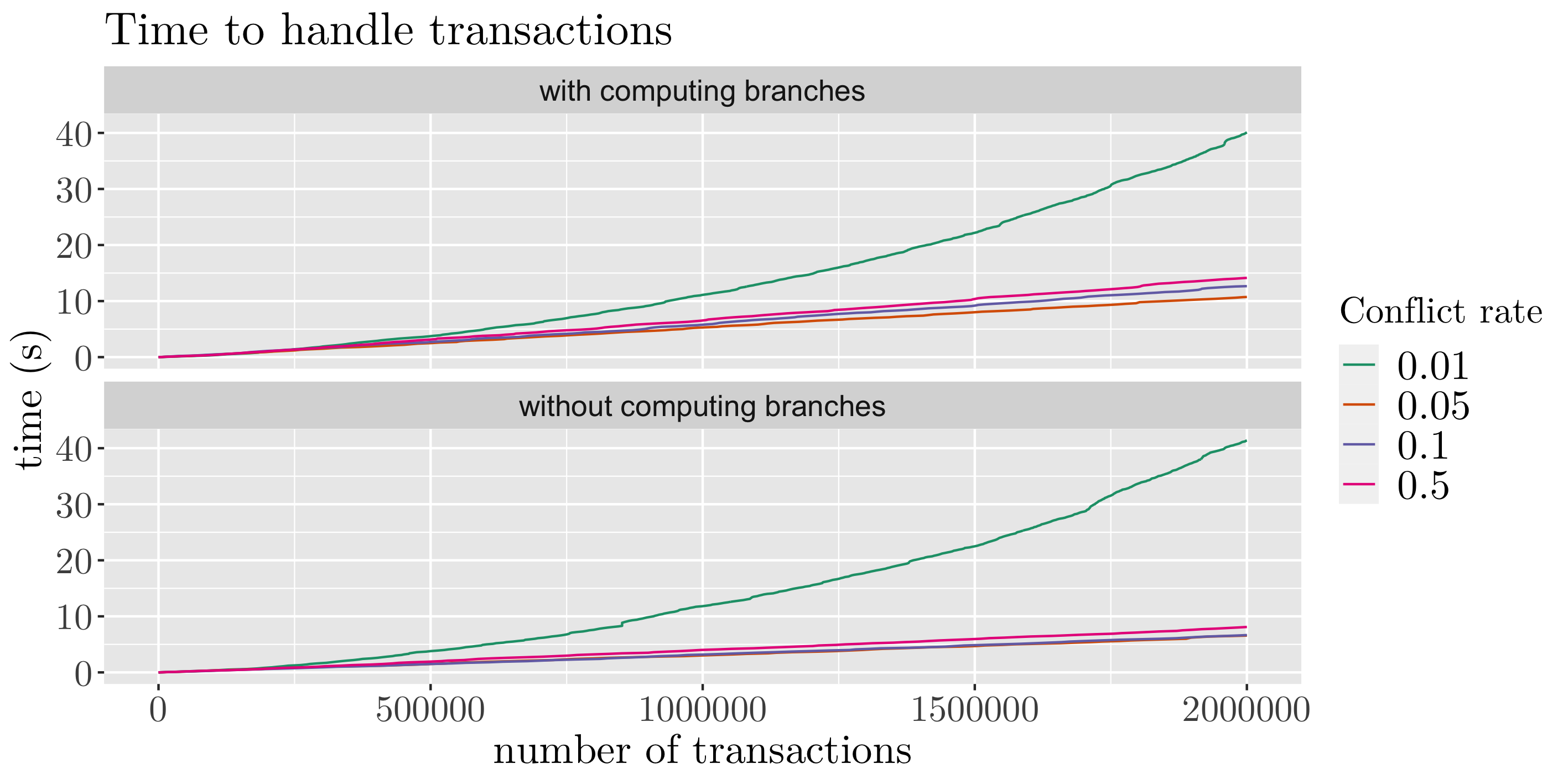}
    \caption{Time to handle the incoming flux of transactions, with and without computing $\textsc{GetBranch}(\cdot)$ for every transaction. The benchmark is taken for different probabilities $p_{\textrm{conflict}}\in\{0.01,0.05,0.1,0.5\}$.}
    \label{fig: growing ledger}
\end{figure}

In Figure~\ref{fig: growing ledger}, for different probabilities $p_{\textrm{conflict}}\in\{0.01, 0.05,0.1,0.5\}$, 
we depict the growth of a reality-based ledger over time with and without computing $\textsc{GetBranch}(\cdot)$ for each new transaction. We employ the model of a randomized stream of transactions described above. The number of transactions is limited by 8GB of RAM allocated to the reality-based ledger.  
The worst performance corresponds to the case when conflicts are located sparsely in the ledger ($p_{\textrm{conflict}}=0.01$). The latter can be explained by the fact that for $p_{\textrm{conflict}}=0.01$, one needs to traverse more vertices in the Ledger DAG on average to update the fields $.\mathrm{ChildrenConflicts}$ and $.\mathrm{ParentsConflicts}$ for transactions in the future and past cones of a new conflicting transaction. Specifically, for each non-conflicting transaction, there could be some paths between the closest conflicts that contain the transaction; one needs to traverse through that transaction in the Ledger DAG as many times as one of such paths is updated.   The computational complexity of $\textsc{GetBranch}(\cdot)$ heavily depends on the height of the closest conflicts of a given transaction in $\ConflictDAG$. For instance, in the considered model, the expected height of a random conflict in $\ConflictDAG$ is asymptotically logarithmic with the total number of conflicts. This is seen in Figure~\ref{fig: growing ledger} as there is no significant performance degradation when one additionally computes the branch for each new transaction by calling $\textsc{GetBranch}(\cdot)$. In all cases, the rate of transactions per second is $50,\!000-150,\!000$ tx/sec.

 \begin{figure}[t]
    \centering
    \includegraphics[width=0.9\textwidth]{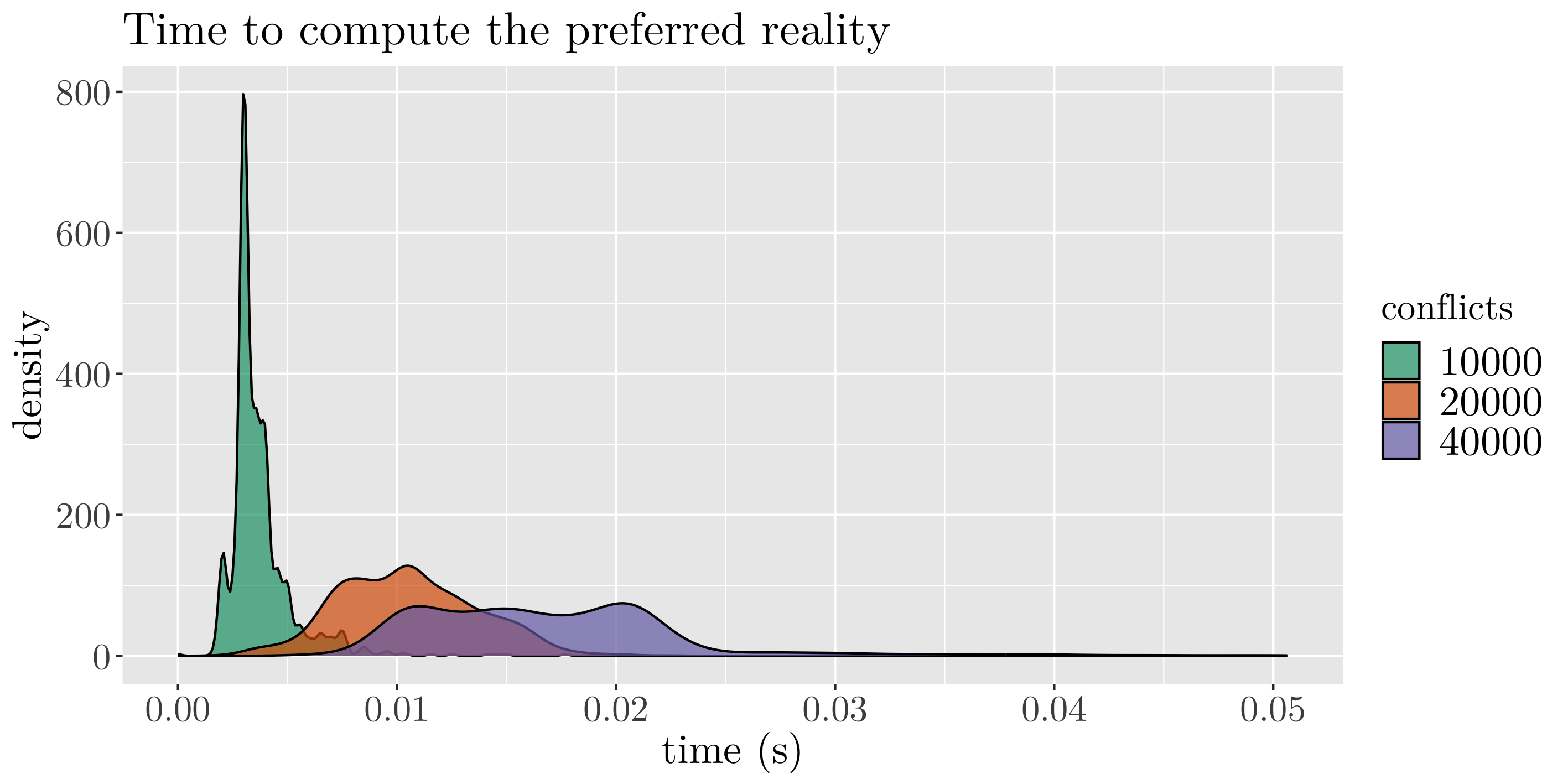}
    \caption{Time to compute the preferred reality for different numbers of conflicts.}
    \label{fig: time preferred reality}
\end{figure}

In Figure~\ref{fig: time preferred reality}, we show the statistics of the time that it takes to compute $\textsc{GetReality}(\cdot)$ when the number of conflicts $|\conflictset|\in \{10000,20000, 40000\}$.  To generate conflicts, we utilize the same model of randomized transactions with $p_{\textrm{conflict}}=0.05$. Recall that, by Proposition~\ref{prop: complexity of algorithm}, the complexity of the reality selection algorithm in the worst case can be bounded as $O(|\conflictset|^2)$ if one follows Algorithm~\ref{alg:selectionBranch}-\ref{alg:selectionConflictGraph}. In a more practical implementation that we employ in our simulation, the worst-case complexity is even worse as finding $\ConflictDAG$-maximal elements and the element with the largest hash among the maximal elements in a list are not necessarily $O(1)$-operations. However,  in Figure~\ref{fig: time preferred reality}, the amortized complexity of our benchmarks seems linear with the number of conflicts $|\conflictset|$. 
We can observe a concentrated distribution for $10,\!000$ conflicts and that the distribution flattens as the number of conflicts increases. Interestingly, the eventual size of the preferred reality is more robust in the increase of conflicts; see Figure~\ref{fig: size preferred reality}.

 \begin{figure}[t]
    \centering
    \includegraphics[width=0.9\textwidth]{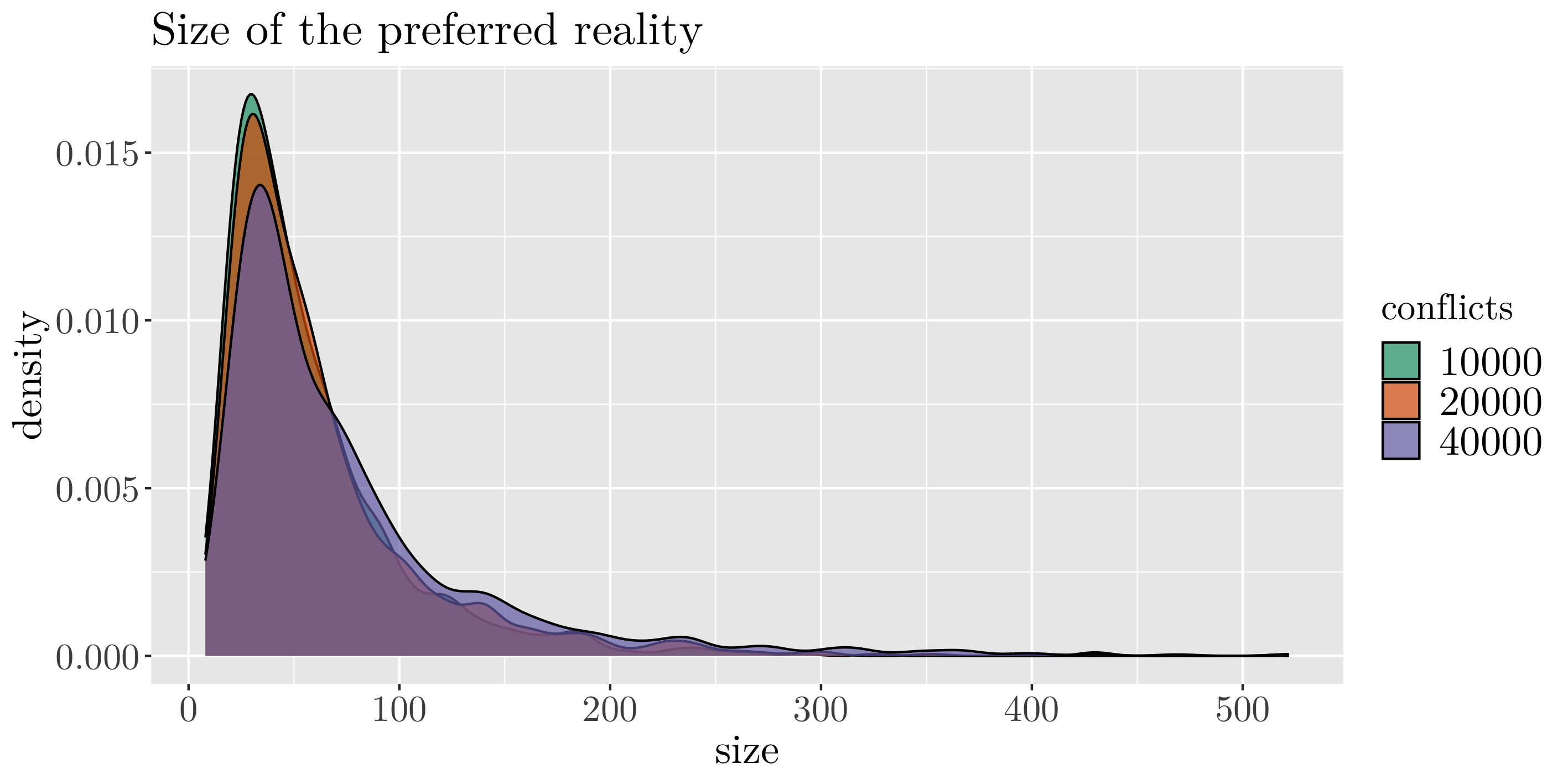}
    \caption{Size of the preferred reality for different numbers of conflicts.}
    \label{fig: size preferred reality}
\end{figure}

As the time to calculate the preferred reality increases with the number of conflicts, we periodically remove confirmed transactions from our data structures. In Figure~\ref{fig: ledger growth with pruning}, we depict the growth of a reality-based ledger, the number of confirmed transactions and conflicts over time. We generate a stream of transactions with parameter $p_{\textrm{conflict}}=0.01$ (the case providing the worst performance in Figure~\ref{fig: growing ledger}). Whenever the number of conflicts exceeds the set upper limit of $5,\!000$, we apply the functions $\textsc{GetReality}(\cdot)$ and $\textsc{PruneRejectedTransactions}(\cdot)$. 
In addition, we treat all remaining transactions as confirmed and update the data structure of the reality-based ledger by taking the confirmed ledger state as a new genesis. 
We highlight that the number of pruned transactions is much larger than that of pruned conflicts since a transaction that indirectly spends from an output of a pruned conflict must also be pruned. In summary, by periodically pruning rejected transactions and snapshotting confirmed transactions, the protocol allows handling an ongoing stream of transactions without performance loss. The rate of confirmed transactions per second in this simulation is over $70,\!000$ txs/sec. This experiment indicates the robustness and scalability of our proposed reality-based UTXO ledger for real-world use cases.

 \begin{figure}[t]
    \centering
    \includegraphics[width=0.9\textwidth]{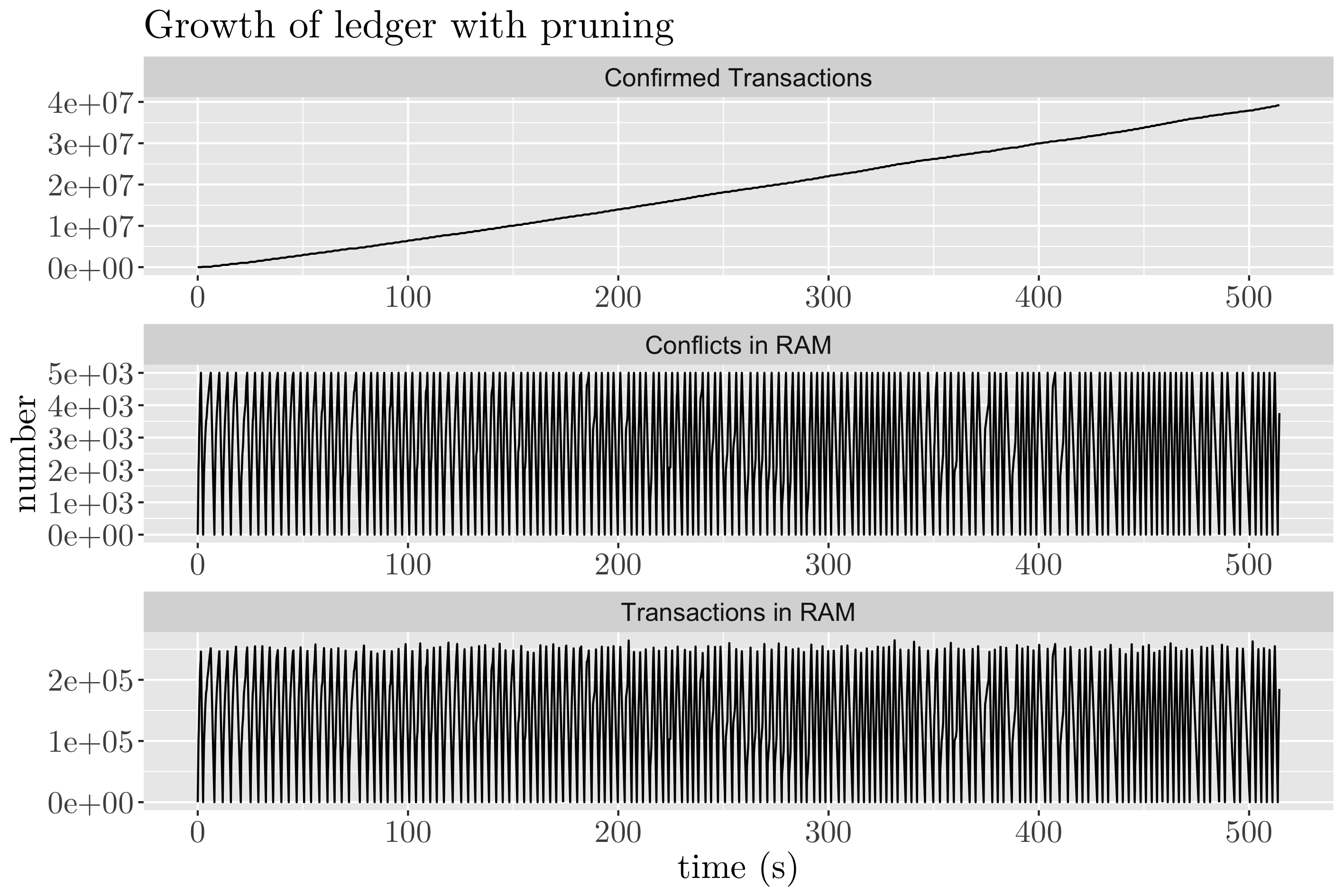}
    \caption{Growth of the ledger and number of transactions and conflicts kept in RAM.}
    \label{fig: ledger growth with pruning}
\end{figure}

\section{Future Work}
The Reality-based Ledger provides a framework for parallel transaction processing capability. Typically in blockchains,  transactions are processed in blocks or batches, creating a total order. This linearisation creates an artificial bottleneck in the propose and vote paradigm of DLTs. However, particularly in a UTXO setting, this is not necessary, and blockchain systems can be designed using the presented framework. 
We follow this approach in~\cite{OTV}, which builds on the foundations laid out in this paper.

We provided a first quantitative analysis of our proposed algorithms. Future work in this area should focus on benchmarking the performance of different DAG-based DLTs against each other. This could involve comparing throughput by measuring different kinds of transactions, e.g.~simple value transfers or transactions of smart contracts, as well as assessing the potential for quick view changes. Additionally, it is essential to consider the manipulation potential in the various systems, including the possibility of annulling or frontrunning transactions. This future work must also address the challenge of constructing appropriate and meaningful performance measures and test scenarios for the different DLT solutions, as their distinct natures and other underlying use cases will impact the results. These benchmarks will ensure a fair and more comprehensive evaluation of the strengths and limitations of various DAG-based DLT solutions.

\subsubsection*{Acknowledgments}
The authors would like to 
thank precious staff members of the IOTA Foundation and members of the IOTA community for their feedback and criticism.

\bibliographystyle{plain}

\bibliography{bibliography}

\begin{thebibliography}{10}

\bibitem{OptSmart}
Parwat~Singh Anjana, Sweta Kumari, Sathya Peri, Sachin Rathor, and Archit
  Somani.
\newblock Optsmart: a space efficient optimistic concurrent execution of smart
  contracts.
\newblock {\em Distributed and Parallel Databases}, 2022.

\bibitem{Prism}
Vivek Bagaria, Sreeram Kannan, David Tse, Giulia Fanti, and Pramod Viswanath.
\newblock Prism: Deconstructing the blockchain to approach physical limits.
\newblock In {\em Proceedings of the 2019 ACM SIGSAC Conference on Computer and
  Communications Security}, CCS '19, page 585–602, New York, NY, USA, 2019.
  Association for Computing Machinery.

\bibitem{Hashgraph}
L.~{Baird} and A.~{Luykx}.
\newblock {The Hashgraph Protocol: Efficient Asynchronous BFT for
  High-Throughput Distributed Ledgers}.
\newblock In {\em 2020 International Conference on Omni-layer Intelligent
  Systems (COINS)}, pages 1--7, 2020.

\bibitem{vademecum}
M.~{Belotti}, N.~{Božić}, G.~{Pujolle}, and S.~{Secci}.
\newblock {A Vademecum on Blockchain Technologies: When, Which, and How}.
\newblock {\em IEEE Communications Surveys Tutorials}, 21(4):3796--3838, 2019.

\bibitem{buterin2013ethereum}
Vitalik Buterin.
\newblock {Ethereum: A Next-Generation Smart Contract and Decentralized
  Application Platform}, 2013.

\bibitem{cardanoEUTXO}
Manuel M.~T. Chakravarty, James Chapman, Kenneth MacKenzie, Orestis Melkonian,
  Michael Peyton~Jones, and Philip Wadler.
\newblock {The Extended UTXO Model}.
\newblock In Matthew Bernhard, Andrea Bracciali, L.~Jean Camp, Shin'ichiro
  Matsuo, Alana Maurushat, Peter~B. R{\o}nne, and Massimiliano Sala, editors,
  {\em Financial Cryptography and Data Security}, pages 525--539, Cham, 2020.
  Springer International Publishing.

\bibitem{OuroLeios}
Duncan Coutts.
\newblock {Implement Ouroboros Leios to increase Cardano throughput}, 2022.

\bibitem{Narwhal22}
George Danezis, Lefteris Kokoris-Kogias, Alberto Sonnino, and Alexander
  Spiegelman.
\newblock {Narwhal and Tusk: A DAG-Based Mempool and Efficient BFT Consensus}.
\newblock In {\em Proceedings of the Seventeenth European Conference on
  Computer Systems}, EuroSys '22, page 34–50, New York, NY, USA, 2022.
  Association for Computing Machinery.

\bibitem{TMS}
Johan {de Kleer}.
\newblock An assumption-based tms.
\newblock {\em Artificial Intelligence}, 28(2):127--162, 1986.

\bibitem{ConcurrencyToSC}
Thomas Dickerson, Paul Gazzillo, Maurice Herlihy, and Eric Koskinen.
\newblock Adding concurrency to smart contracts.
\newblock In {\em Proceedings of the ACM Symposium on Principles of Distributed
  Computing}, PODC '17, page 303–312, New York, NY, USA, 2017. Association
  for Computing Machinery.

\bibitem{DataTypeInBlockchain}
Tien Tuan~Anh Dinh, Rui Liu, Meihui Zhang, Gang Chen, Beng~Chin Ooi, and
  Ji~Wang.
\newblock {Untangling Blockchain: A Data Processing View of Blockchain
  Systems}.
\newblock {\em IEEE Transactions on Knowledge and Data Engineering},
  30(7):1366--1385, 2018.

\bibitem{IOTASC}
E.~Dr\c{a}sutis.
\newblock {IOTA Smart Contracts}, accessed January 2022.

\bibitem{BEAT}
Sisi Duan, Michael~K. Reiter, and Haibin Zhang.
\newblock {BEAT: Asynchronous BFT Made Practical}.
\newblock In {\em Proceedings of the 2018 ACM SIGSAC Conference on Computer and
  Communications Security}, CCS '18, page 2028–2041, New York, NY, USA, 2018.
  Association for Computing Machinery.

\bibitem{Gagol2019}
Adam G{\k{a}}gol, Damian Le{\'s}niak, Damian Straszak, and Micha{\l}
  {\'S}wi{\k{e}}tek.
\newblock {Aleph: Efficient Atomic Broadcast in Asynchronous Networks with
  Byzantine Nodes}.
\newblock In {\em Proceedings of the 1st ACM Conference on Advances in
  Financial Technologies}, pages 214--228, 2019.

\bibitem{BlockSTM}
Rati Gelashvili, Alexander Spiegelman, Zhuolun Xiang, George Danezis, Zekun Li,
  Dahlia Malkhi, Yu~Xia, and Runtian Zhou.
\newblock Block-stm: Scaling blockchain execution by turning ordering curse to
  a performance blessing, 2022.

\bibitem{Bullshark}
Neil Giridharan, Lefteris Kokoris{-}Kogias, Alberto Sonnino, and Alexander
  Spiegelman.
\newblock Bullshark: {DAG} {BFT} protocols made practical.
\newblock {\em CoRR}, abs/2201.05677, 2022.

\bibitem{DAGRider}
Idit Keidar, Eleftherios Kokoris-Kogias, Oded Naor, and Alexander Spiegelman.
\newblock {All You Need is DAG}.
\newblock In {\em Proceedings of the 2021 ACM Symposium on Principles of
  Distributed Computing}, PODC'21, page 165–175, New York, NY, USA, 2021.
  Association for Computing Machinery.

\bibitem{CordialMiners}
Idit Keidar, Oded Naor, and Ehud Shapiro.
\newblock Cordial miners: Fast and efficient consensus for every eventuality,
  2022.

\bibitem{DagCoin}
Sergio~Demian Lerner.
\newblock {DagCoin: a cryptocurrency without blocks.}, 2015.

\bibitem{lewenberg2015inclusive}
Yoad Lewenberg, Yonatan Sompolinsky, and Aviv Zohar.
\newblock Inclusive block chain protocols.
\newblock In {\em International Conference on Financial Cryptography and Data
  Security}, pages 528--547. Springer, 2015.

\bibitem{automatingChoiceOfConsitency}
Cheng Li, Joao Leit{\~a}o, Allen Clement, Nuno Pregui{\c c}a, Rodrigo
  Rodrigues, and Viktor Vafeiadis.
\newblock Automating the choice of consistency levels in replicated systems.
\newblock In {\em 2014 USENIX Annual Technical Conference (USENIX ATC 14)},
  pages 281--292, Philadelphia, PA, June 2014. USENIX Association.

\bibitem{redblue}
Cheng Li, Daniel Porto, Allen Clement, Johannes Gehrke, Nuno Pregui{\c{c}}a,
  and Rodrigo Rodrigues.
\newblock Making geo-replicated systems fast as possible, consistent when
  necessary.
\newblock In {\em 10th $\{$USENIX$\}$ Symposium on Operating Systems Design and
  Implementation ($\{$OSDI$\}$ 12)}, pages 265--278, 2012.

\bibitem{Li2020GHASTBC}
Chenxing Li, Fan Long, and Guang Yang.
\newblock Ghast: Breaking confirmation delay barrier in nakamoto consensus via
  adaptive weighted blocks.
\newblock {\em ArXiv}, abs/2006.01072, 2020.

\bibitem{HoneyBadger}
Andrew Miller, Yu~Xia, Kyle Croman, Elaine Shi, and Dawn Song.
\newblock {The Honey Badger of BFT Protocols}.
\newblock In {\em Proceedings of the 2016 ACM SIGSAC Conference on Computer and
  Communications Security}, CCS '16, page 31–42, New York, NY, USA, 2016.
  Association for Computing Machinery.

\bibitem{suiWP}
{Mysten Lab}.
\newblock {The Sui Smart Contracts Platform}, 2022.

\bibitem{OTV}
Sebastian Müller, Andreas Penzkofer, Nikita Polyanskii, Jonas Theis, William
  Sanders, and Hans Moog.
\newblock Tangle 2.0 leaderless nakamoto consensus on the heaviest dag.
\newblock {\em IEEE Access}, 10:105807--105842, 2022.

\bibitem{nakamoto2008bitcoin}
Satoshi Nakamoto.
\newblock {Bitcoin: A peer-to-peer electronic cash system}, 2008.

\bibitem{NXT}
{People on nxtforum.org}.
\newblock {DAG, a generalized blockchain}, 2014.

\bibitem{popov2015}
Serguei {Popov}.
\newblock {The Tangle}, 2015.

\bibitem{2020coordicide}
Serguei Popov, Hans Moog, Darcy Camargo, Angelo Capossele, Vassil Dimitrov,
  Alon Gal, Andrew Greve, Bartosz Kusmierz, Sebastian Mueller, Andreas
  Penzkofer, Olivia Saa, William Sanders, Luigi Vigneri, Wolfgang Welz, and
  Vidal Attias.
\newblock {The Coordicide}.
\newblock 2019.

\bibitem{BlockDAG}
Maria~A. Schett and George Danezis.
\newblock {Embedding a Deterministic BFT Protocol in a Block DAG}.
\newblock In {\em Proceedings of the 2021 ACM Symposium on Principles of
  Distributed Computing}, PODC'21, page 177–186, New York, NY, USA, 2021.
  Association for Computing Machinery.

\bibitem{CRDT}
Marc Shapiro, Nuno Pregui{\c{c}}a, Carlos Baquero, and Marek Zawirski.
\newblock Conflict-free replicated data types.
\newblock In Xavier D{\'e}fago, Franck Petit, and Vincent Villain, editors,
  {\em Stabilization, Safety, and Security of Distributed Systems}, pages
  386--400, Berlin, Heidelberg, 2011. Springer Berlin Heidelberg.

\bibitem{SPECTRE}
Yonatan Sompolinsky, Yoad Lewenberg, and Aviv Zohar.
\newblock Spectre: A fast and scalable cryptocurrency protocol.
\newblock Cryptology ePrint Archive, Report 2016/1159, 2016.

\bibitem{DAGKNIGHT}
Yonatan Sompolinsky and Michael Sutton.
\newblock The dag knight protocol: A parameterless generalization of nakamoto
  consensus.
\newblock Cryptology ePrint Archive, Paper 2022/1494, 2022.
\newblock \url{https://eprint.iacr.org/2022/1494}.

\bibitem{PHANTOM}
Yonatan Sompolinsky, Shai Wyborski, and Aviv Zohar.
\newblock {\em PHANTOM GHOSTDAG: A Scalable Generalization of Nakamoto
  Consensus: September 2, 2021}, page 57–70.
\newblock Association for Computing Machinery, New York, NY, USA, 2021.

\bibitem{Sompolinsky2013AcceleratingBT}
Yonatan Sompolinsky and Aviv Zohar.
\newblock Accelerating bitcoin's transaction processing. fast money grows on
  trees, not chains.
\newblock {\em IACR Cryptol. ePrint Arch.}, 2013:881, 2013.

\bibitem{GHOST}
Yonatan Sompolinsky and Aviv Zohar.
\newblock Secure high-rate transaction processing in bitcoin.
\newblock In Rainer B{\"o}hme and Tatsuaki Okamoto, editors, {\em Financial
  Cryptography and Data Security}, pages 507--527, Berlin, Heidelberg, 2015.
  Springer Berlin Heidelberg.

\bibitem{prismSC}
Gerui Wang, Shuo Wang, Vivek Bagaria, David Tse, and Pramod Viswanath.
\newblock Prism removes consensus bottleneck for smart contracts.
\newblock In {\em 2020 Crypto Valley Conference on Blockchain Technology
  (CVCBT)}, pages 68--77, 2020.

\bibitem{Wang2020SoKDI}
Qin Wang, Jiangshan Yu, Shiping Chen, and Yang Xiang.
\newblock Sok: Diving into dag-based blockchain systems.
\newblock {\em ArXiv}, abs/2012.06128, 2020.

\bibitem{Prism10000}
Lei Yang, Vivek Bagaria, Gerui Wang, Mohammad Alizadeh, David Tse, Giulia
  Fanti, and Pramod Viswanath.
\newblock Prism: Scaling bitcoin by 10,000x.
\newblock {\em arXiv preprint arXiv:1909.11261}, 2019.

\bibitem{OHIE}
Haifeng Yu, Ivica Nikolic, Ruomu Hou, and P.~Saxena.
\newblock Ohie: Blockchain scaling made simple.
\newblock {\em 2020 IEEE Symposium on Security and Privacy (SP)}, pages
  90--105, 2018.

\end{thebibliography}
\end{document}